\newtheorem{definition}{Definition}
\newtheorem{example}{Example}
\newtheorem{theorem}{Theorem}
\newtheorem{claim}{Claim}
\newtheorem{proposition}{Proposition}
\newtheorem{lemma}{Lemma}
\newtheorem{corollary}{Corollary}
\def\tagform@#1{\maketag@@@{(\ignorespaces\textit{Eq.~#1}\unskip\@@italiccorr)}}
\newcommand{\numatts}{k}
\newcommand{\myparagraph}[1]{\vspace{0.15cm}\noindent\textbf{#1.}}
\newcommand{\eat}[1]{}
\newcommand{\ptime}{\textsc{PTIME}\xspace}
\newcommand{\np}{\textsc{NP}\xspace}
\newcommand{\nlspace}{\textsc{NL}\xspace}
\newcommand{\pspace}{\textsc{PSPACE}\xspace}
\DeclareDocumentCommand\nodedata{o}{\dataof\IfValueTF{#1}{(#1)}{}\xspace}
\newcommand{\restrictor}{\mathfrak{t}\xspace}
\newcommand{\pattern}{\mathfrak{p}\xspace}
\newcommand{\primepattern}{\mathfrak{q}\xspace}
\newcommand{\subpattern}{\mathfrak{q}\xspace}
\newcommand{\subsubpattern}{\mathfrak{r}\xspace}
\newcommand{\fv}[1]{\textsc{FV}(#1)}
\newcommand{\semantics}[2]{[\![{#1}]\!]_{#2}}
\newcommand{\outm}{\textsc{OUT}}
\newcommand{\inm}{\textsc{IN}}
\newcommand{\pathsep}{\thinspace}
\newcommand{\shortest}{\textsc{shortest} \xspace}
\newcommand{\simple}{\textsc{simple} \xspace}
\newcommand{\nonsimple}{\textsc{nonsimple} \xspace}
\newcommand{\visitall}{\textsc{visitall} \xspace}
\newcommand{\idm}{\textsc{id} \xspace}
\newcommand{\dataof}{\textsc{dataof} \xspace}
\newcommand{\propm}{\textsc{prop} \xspace}
\newcommand{\datam}{\textsc{data} \xspace}
\newcommand{\difm}{\textsc{diff} \xspace}
\newcommand{\lenm}{\textsc{len} \xspace}
\newcommand{\shortestsimple}{\textsc{shortestsimple} \xspace}
\newcommand{\dataconnection}{\textsc{DataConnection} \xspace}
\newcommand{\datalink}{\textsc{DataLink} \xspace}
\newcommand{\edn}{FO(ERDPQ)\xspace}
\newcommand{\foerd}{\edn}
\newcommand{\en}{ECRPQ$_\neg$\xspace}
\newcommand{\ednt}{FO$^*$(ERDPQ)\xspace}
\newcommand{\uedn}{$\forall$(ERDPQ)\xspace}
\newcommand{\cn}{CRPQ$_\neg$\xspace}
\newcommand{\cdn}{FO(RDPQ)\xspace}
\newcommand{\ford}{\cdn}
\newcommand{\crp}{CRPQ\xspace}
\newcommand{\crd}{CRDPQ\xspace}
\newcommand{\ep}{ECRPQ\xspace}
\newcommand{\rd}{RDPQ\xspace}
\newcommand{\rp}{RPQ\xspace}
\newcommand{\updatet}{\textbf{Upd}}
\newcommand{\precondt}{\textbf{Pre}}
\newcommand{\eqid}{\equiv_{\idm}\xspace}
\newcommand{\eqdata}{\equiv_{\datam}\xspace}
\newcommand{\eqdatai}{\equiv_{\datam,i}\xspace}
\newcommand{\gpc}{GPC\xspace}
\newcommand{\wl}{WL\xspace}
\newcommand{\threeval}{\textbf{3val}}
\newcommand{\repeatm}{\textbf{repeat}}
\newcommand{\first}{\textbf{first}}
\newcommand{\last}{\textbf{last}}
\newcommand{\visit}{\textbf{visit}}
\newcommand{\beginm}{\textbf{begin}}
\newcommand{\even}{\textbf{even}}
\newcommand{\tran}{\textbf{tran}}
\newcommand{\fotc}{FO$^*$\xspace}
\newcommand{\fotcd}{FO$^*(\eqdata)$\xspace}
\newcommand{\fod}{FO$(\eqdata)$\xspace}
\newcommand{\fotcr}{(FO$^*_{3}$)$(\eqdata)$\xspace}
\newcommand{\fo}{FO\xspace}
\newcommand{\gpcrd}{GXPath$_{\textsc{reg}}(\eqdata)$\xspace}
\newcommand{\ra}{RDPA\xspace}
\newcommand{\nullnode}{\sharp\xspace}
\newcommand{\lb}[1]{\mathbf{lb}(#1)}
\newcommand{\tdp}[1]{\mathbf{dp}(#1)}
\newcommand{\mwl}{MWL\xspace} 
\journal{Journal of Computing and Systems Sciences}
\begin{document}

\begin{frontmatter}



\title{Revisiting the Expressiveness Landscape of Data Graph Queries}




\author[1]{Michael Benedikt}
\author[2,3]{Anthony Widjaja Lin}
\author[4]{Di-De Yen}

\affiliation[1]{organization={Department of Computer Science,
University of Oxford},
            country={United Kingdom}}
\affiliation[2]{organization={University of Kaiserslautern-Landau},
            country={Germany}}
\affiliation[3]{organization={MPI-SWS},
            country={Germany}}
\affiliation[4]{organization={Department of Computer Science, University of Liverpool},
            country={United Kingdom}}

\begin{abstract}
The study of graph queries in database theory has spanned more than three decades, resulting in a multitude of proposals for graph query languages.  We can identify three main families of languages, with the canonical representatives being: (1) regular path queries, (2) walk logic, and (3) first-order logic with transitive closure operators. This paper provides a complete picture of the expressive power of these languages in the context of data graphs. Specifically, we consider a graph data model that supports querying over both data and topology. For example, ``Does there exist a path between two different persons in a social network with the same last name?''. We also show that an extension of (1) with regular path comparisons, augmented with transitive closure operators, can unify the expressivity of (1)--(3).
\end{abstract}



\begin{keyword}
graph databases \sep query languages \sep data graphs \sep expressiveness



\end{keyword}


\end{frontmatter}

\section{Introduction}

Graph databases are data models with a multitude of natural applications including social networks, semantic web, biology, ecology, supply chain management, and business process modeling. With graphs as one of the main repositories of data, \emph{graph querying} has become a major component in data wrangling. Indeed, graph database systems -- like Neo4j, Oracle, TigerGraph, among many others -- have increasingly found usage in a plethora of application domains. 

One important feature in graph query languages is the ability to query for \emph{paths} over a given database. 
In fact, much of the effort in the study of graph query languages in database theory has been motivated by the need to support path queries. For example, in real world query languages (like Neo4j Cypher queries) one uses  \emph{patterns} like:
\begin{verbatim}
    (?x)=[:friend*]=>(Erdos)
\end{verbatim}
to find nodes that are connected to Erdos by the friend relation (assuming it is symmetric). Similar capabilities are also supported by SPARQL property paths. A systematic study of query languages supporting ``path patterns'' has been undertaken in database theory since the 1990s, resulting in a plethora of query languages over graphs. The core of such languages can be classified into three categories: (1) extensions of regular path queries \cite{MW95,LMV16,Lin:12:DS}, (2) walk logic \cite{Hellings:13:ICDT,BFL15}, and (3) first-order logic with transitive closure operators \cite{immermanbook,LMV16}. 
In addition, while the original query languages were studied within the basic graph database setting of edge-labelled graphs with finitely many labels, recent developments suggest the importance of supporting \emph{data} in the model, not just \emph{topology}. For example, if a node represents a person (with data including \texttt{age}, \texttt{firstname}, \texttt{lastname}, etc.) in a social network and an edge represents a ``friend-of'' relation, then we might be interested in a pair of friend-of-friends (i.e. transitive closure of friend relations) with the same data component \texttt{lastname}. To support such queries, an extended graph data model of \emph{data graphs} \cite{LV12,LMV16} was proposed, wherein  additional binary relations $\equiv_{\datam,i} \subseteq V \times V$ for $1 \leq i \leq n$ over nodes in the graph check whether two nodes have the same $i^{th}$ data component (e.g. \texttt{lastname}). The bulk of graph queries can be (and have been) extended to the setting of data graphs \cite{LV12,LMV16,BFL15,FJL22}.


\myparagraph{Zoo of query languages over data graphs} 
We proceed by first surveying the categories (1)--(3) of graph query languages. The first category of query language is based on the idea of using regular languages to describe ``path patterns''. This originates in 
the seminal paper of Mendelzon and Wood \cite{MW95}, which introduces the so-called \emph{Regular Path Queries (\rp)}. For example, to describe the friend-of-friends relation, one may simply write the path pattern $(\texttt{friend})^*$, where $\texttt{friend}$ is a name of a relation in the database describing the friend-of relation. The class \rp has been extended  to operate over data graphs \cite{LV12,LMV16}, and to support unions/conjunctions \cite{CGLV00,DT01,FLS98}. For example, the query language Conjunctive Regular Path queries denoted \crp arose as part of the language GraphLog \cite{graphlog}.
It has also been extended to support regular path comparisons  \cite{Lin:12:DS}, utilized in the language Extended \crp (\ep). In particular, register automata -- or, equivalently, regular expressions with memory --  are used to extend the notion of regular path patterns to \emph{data paths}, an alternating sequence of data and edge labels. In register automata an unbounded register/memory is required to store data. For example, to enforce that the start node $v$ and end node $w$ correspond to persons with the same \texttt{lastname} attribute, an automaton can save the last name of the first person $v$ in the register and then check if it is the same as the last name of the last person $w$. The idea of using register automata to express regular patterns over data paths, or tuples of data paths,  can also be easily extended to \crp and \ep, as well as their extensions with negations \cn, and \en. To emphasize the data model used by the query languages, on which they operate, we denote \rp, \crp, etc. by \rd, \crd, etc., as suggested by \cite{LMV16}.

The second paradigm is that of {\em Walk Logic} (\wl) \cite{Hellings:13:ICDT}, which takes paths within graphs as the fundamental data item referenced within query variables. Paths are manipulated through first-order predicate logic operations: roughly the paradigm is ``relational calculus for paths''. As explained in \cite{BFL15}, \wl can be construed as a query language over data graphs.

The third paradigm is inspired by first-order logic with transitive closure operators \cite{immermanbook}. This logic was studied in the setting of data graphs in \cite{LMV16}, where additional binary relations $\eqdatai \subseteq V \times V$ over nodes in the graph are introduced that check whether two nodes have the same $i^{th}$ data component. This resulting logic (called {\fotcd}) subsumes GXPath \cite{LMV16} and also regular (data) queries \cite{RRV17}.

Finally, we mention another language that can be related to the first paradigm, the recently proposed \emph{Graph Pattern Calculus} (\gpc)  \cite{GPC:22:Francis, gqlinvitedicdt}, which arose during the ongoing standardization efforts \cite{gql} of SQL/PGQ and GQL for the more expressive graph data model of property graphs.

\begin{figure}[t]
\centering
\scalebox{0.6}{
\begin{tikzpicture}
\centering
[node distance=160pt]

\tikzset{invisible/.style={minimum width=0mm,inner sep=0mm,outer sep=0mm}}

\node (v4) at (-3.5,3) {\edn };
\node (v6) at (-3.5,1) {\gpc};
\node (v8) at (3,3) {\fotcd};
\node (v9) at (3,1) {\gpcrd};
\node (v10) at (-1.75,2) {\cdn};
\node (v11) at (-0.5,1) {\rd};
\node (v14) at (-6.5, 1) {\wl};

\path
(v4)
   edge [<-] node[above, rotate=-90] {} (v6) 

(v8)
    edge [<-,red,thick] node[above, rotate=-90] {} (v9)
(v8)
    edge [<-] node[above] {} (v11)
(v4)
    edge [<-,red,thick] node[above] {} (v10) 
(v10)
    edge [<-,red,thick] node[above, rotate=-90] {\textcolor{red}{\ }} (v11)

(v14)
  edge [->] node[above, rotate=13] {} (v4) 

;

\end{tikzpicture}
}
\caption{Prior query languages (extended with data). In the diagrams, for any pair of languages $L$ and $M$, the arrow $L \rightarrow M$ signifies that $M$ is more expressive than $L$. Languages $L$ and $M$ are considered incomparable if there is no directed path (after taking transitive closure) between them.}
\label{fig:existing}
\end{figure}

\smallskip
\noindent
\textbf{Contributions:}
We begin by studying the prior proposals for graph query languages in the framework of data graphs. We analyze the expressiveness of these languages, identifying which containments hold: see Figure~\ref{fig:existing}. In the figure, the containment results indicated by \emph{red} edges  are either known or trivial, while other containment and incomparability results are new. Specifically, we show that over data graphs  the most expressive existing language \edn from the first paradigm strictly subsumes \wl, \rd, as well as \gpc. Although we show that {\fotcd}, from Category (3), subsumes \rd, we prove that it is incomparable with \edn, as well as \gpc. This leaves the question of whether there is a natural way to reconcile {\edn} and {\fotcd}.
Towards resolving this, we propose a language that subsumes all the existing ones. A summary of the revised landscape in terms of expressive power can be found in Figure~\ref{fig:summary}.

\begin{figure}[t]
\centering
\scalebox{0.6}{
\begin{tikzpicture}
\centering
[node distance=160pt]

\tikzset{invisible/.style={minimum width=0mm,inner sep=0mm,outer sep=0mm}}

\node (v1) at (0,5) {\ednt};
\node (v4) at (-3.5,3) {\edn};
\node (v6) at (-3.5,1) {\gpc};
\node (v8) at (3,3) {\fotcd};
\node (v9) at (3,1) {\gpcrd};
\node (v10) at (-1.75,2) {\cdn};
\node (v11) at (-0.5,1) {\rd};
\node (v14) at (-6.5, 1) {\wl};

\path
(v1)
    edge [<-] node[above] {} (v4)
(v1)
    edge [<-] node[above] {} (v8)  
(v4)
   edge [<-] node[above, rotate=-90] {} (v6) 

(v8)
    edge [<-] node[above, rotate=-90] {} (v9)
(v8)
    edge [<-] node[above] {} (v11)
(v4)
    edge [<-] node[above] {} (v10) 
(v10)
    edge [<-] node[above, rotate=-90] {\textcolor{blue}{\ }} (v11)

(v14)
  edge [->] node[above, rotate=13] {} (v4) 

;

\end{tikzpicture}
}
\caption{Expressiveness of languages.}
\label{fig:summary}
\end{figure}

At a high level, our proofs utilize the usual expressiveness toolbox in  finite model theory   -- e.g. automata and complexity-theoretic arguments. But their application to some of the finer-grain comparisons we consider here requires some subtlety. For example, since \edn combines regular expression power with some arithmetic, in separating \edn from {\mwl} -- an extension of \wl introduced in this paper with expressiveness between \edn and {\wl} -- we require a combination of circuit complexity bounds and rewriting techniques; in showing \rd is not subsumed by \gpc, we need to bound the expressiveness of \gpc, and thus require a variation of the pumping lemma for automata that is tailored towards the fine points of the \gpc syntax.

\myparagraph{Limitations: Deviations from other proposals and standards}
This paper is a step towards a better understanding of the expressiveness landscape for these languages. In doing so, we have made a number of simplifications both at the data model and query language level, relative to languages considered by both practitioner query languages and the evolving standards.
One of these simplifications is at the data model level.
There is a more flexible (and complex) data model called \emph{property graphs} \cite{foundation-survey,gql}, which allows both edge and node labels, both of which can be varying or bound by a specific ``schema''. This model has been in the spotlight in database theory in recent years, and property graphs will serve as the data model for GQL, the standard graph query language that is currently under intense development (e.g. see \cite{gql,GPC:22:Francis, gqlinvitedicdt}).  

Another simplification involves directionality in navigating a graph. We deal with a directed graph model, but in our formalization of the regular path query and walk logic paradigms, we allow only navigation in the forward direction. This is consistent with formalization of traditional query languages like R(D)PQ \cite{LMV16}, ECRPQ \cite{Lin:12:DS}, and Walk Logic \cite{Hellings:13:ICDT}. On the other hand, queries in $\gpc$ allow navigation in both directions. Such a small ``mismatch'' would immediately render $\gpc$ to be incomparable with the aforementioned languages for trivial reasons. For a fair comparison, one could either (i) extend RDPQ, ECRPQ, and Walk Logic with bidirectional navigations, or (ii) allow only a forward direction in $\gpc$. We opted for (ii). See further discussion in Section \ref{sec:conc}.

A third deviation concerns support for \emph{trails}: one of our considered languages, $\gpc$, allows one to restrict to paths with no repetition among nodes. We omit this feature, but only to keep the exposition briefer: we show in Section \ref{sec:conc} that all the results still hold when the syntax of $\gpc$ is extended to include this.
Although our work assumes a simpler data and query model, we believe our results will be useful in considering the expressiveness of languages making use of the full property graph data model.

\section{Preliminaries}\label{sec:preliminaries}

\noindent
\myparagraph{General notation} 
We use $\mathbb{N}$ and $\mathbb{Z}_{>0}$ to denote the sets of non-negative integers and positive integers, respectively. For any two $i$ and $j$ in $\mathbb{N}$ where $i<j$, the notation $[i,j]$ represents the set $\{i, i+1, \dots, j\}$. If $i=1$, we simply write $[j]$, abbreviating $[1,j]$. Let $m$ be a mapping, and let $d$ and $i$ be elements in the domain and image of $m$, respectively. The notation $m[d\mapsto i]$ represents the mapping equivalent to $m$, except that $d$ is now mapped to $i$.

\begin{figure}[!h]
\centering
\includegraphics[width=0.95\textwidth]{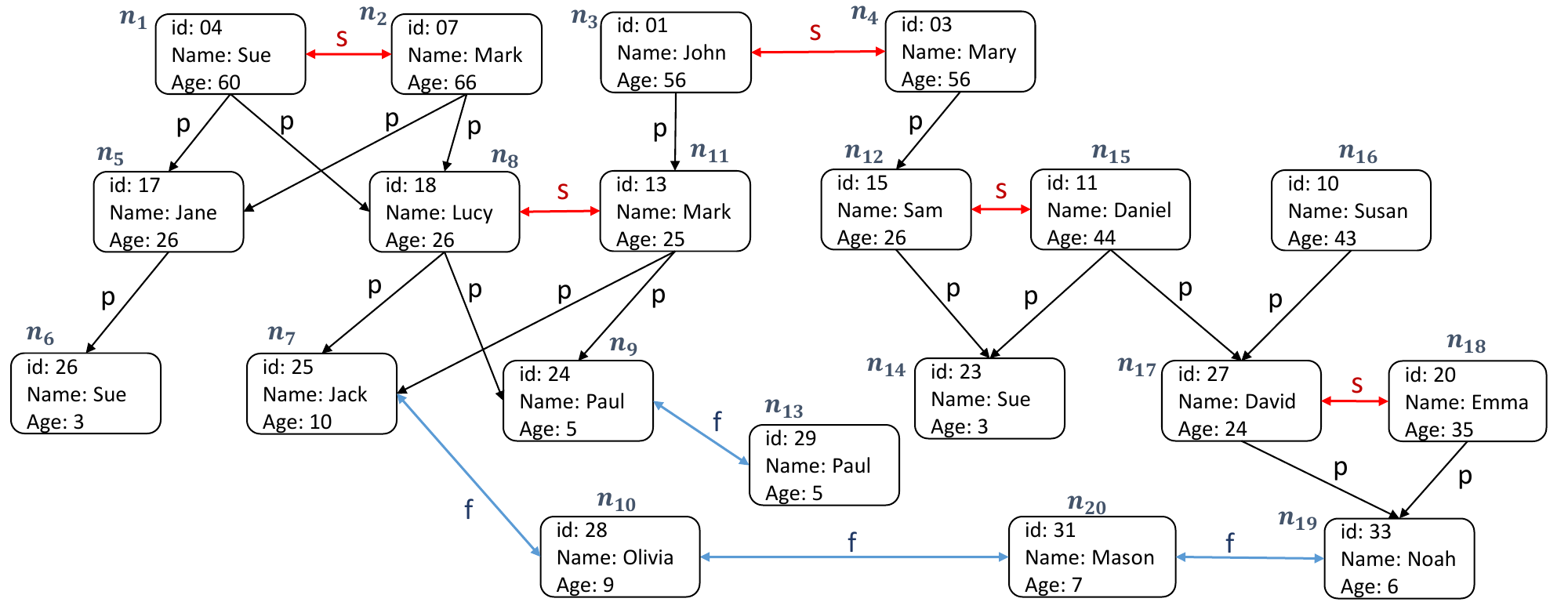}
\caption{Data graph $G$, where p stands for parent, s for spouse, and f stands for friend.}
\label{fig:DataGraph_ex}
\end{figure}

\myparagraph{Data model} 
In this work, we are concerned with graphs that incorporate data, referred to as {\em data graphs}.  Specifically, each node corresponds to a unique id and is accompanied by additional attributes.
For instance, as shown in Figure~\ref{fig:DataGraph_ex}, nodes possess attributes such as Name (with the data type string) and Age (with the data type natural number). To denote the sets of ids and other pertinent attributes (``properties''), we use $\mathcal{D}_{\idm}$ and $\mathcal{D}_{\propm}$, respectively. Here, $\mathcal{D}_{\propm}$ represents a $k$-ary relation, where $\numatts$ is the number of attributes that nodes have, excluding the id. For the example shown in Figure~\ref{fig:DataGraph_ex}, $k$ is 2.
The data domain, denoted as $\mathcal{D}$, is the Cartesian product of $\mathcal{D}_{\idm}$ and $\mathcal{D}_{\propm}$ when $k > 0$; otherwise, when $k = 0$, $\mathcal{D}$ is equal to $\mathcal{D}_{\idm}$.

Formally, a data graph is a directed graph in which each node corresponds to a data value. To define the data value of a node, we introduce two key functions, $\idm$ and $\nodedata$. 
The function $\idm$ maps each node $v$ to its unique id. For example, in Figure~\ref{fig:DataGraph_ex}, we have $\idm(n_2) = 07$, $\idm(n_9) = 24$, and $\idm(n_{13}) = 29$. When $k > 0$, for each $i \in [k]$, we define $\nodedata_i$ as the function that maps each node $v$ to the $i^{\text{th}}$ component of $\nodedata[v]$. In other words, if $\nodedata[v] = (d_1, \dots, d_i, \dots, d_k)$, then $\nodedata_i(v) = d_i$. For instance, for nodes $n_2$, $n_9$, and $n_{13}$, we have $\nodedata(n_9) = (\textit{Paul}, 5) = \nodedata(n_{13})$, and $\nodedata_2(n_2) = 66$. 
Since the function $\idm$ is injective, for any two nodes $v$ and $v'$ in $V$, $\idm(v) = \idm(v')$ if and only if $v = v'$. The tuple $(\idm(v), \nodedata(v)) \in \mathcal{D}$ represents the {\em data value} of node $v$.
For simplicity, we often treat a node's id as synonymous with the node itself.

Our definition of data graphs extends the one provided in \cite{LV12}. The primary distinction is that, in our definition, each node has a unique id as part of its data value, as is done in \cite{BFL15}. This slight modification allows us to model node identity in all our query languages without cluttering them with additional syntax. More importantly, treating ids as part of node data values is realized through the $\idm$ function, and this feature plays a significant role in the expressiveness of the \emph{regular data path query language} (\rd) and its extensions, defined in Section~\ref{sec:regular_based_languages}. 
For example, the query ``Does a node appear more than once along a path $\pi$?'' would be inexpressible in \rd and its extensions if ids were not considered part of node data values, or if the $\idm$ function were removed from our definition. In such a case, these languages and other languages considered in this paper would become incomparable. However, this would contradict one of our goals: to characterize existing query languages for graphs with data and identify their core features.

The formal definition of data graphs is as follows:

\begin{definition}[Data Graph]\label{def:data_graph}
A data graph $G$ over alphabet $\Sigma$ and data domain $\mathcal{D}$ is a quadruple $(V,E,\idm,\nodedata)$, where
\begin{itemize}
\item $V$ is a non-empty finite set of nodes.
\item $E \subseteq V\times \Sigma \times V$ is a set of labeled edges.
\item $\idm: V \rightarrow \mathcal{D}_{\idm}$ is an injective function mapping each node to an id.
\item $\nodedata: V \rightarrow \mathcal{D}_{\propm}$ is a function assigning an element in $\mathcal{D}_{\propm}$ to each node in $V$.
\end{itemize}
\end{definition}

A {\em path} $\rho$ in the data graph $G$ is a non-empty alternating sequence 
\[v_0 \pathsep a_1 \pathsep v_1  \pathsep \dots v_{n-1} \pathsep a_n \pathsep v_n
\]
where $v_i \in V$,
$a_j \in \Sigma$, and $(v_{l-1},a_l,v_l)$ forms an edge in $E$ for all $i$, $j$, and $l$. The length of $\rho$, denoted as $|\rho|$, is $n$. The set of {\em positions} of $\rho$ is $[0,n]$, with $v_i$ being the node at position $i$ for every $i \in [0,n]$. The {\em data path} corresponding to $\rho$ is the string $d_0 \pathsep a_1 \pathsep d_1 \dots a_n \pathsep d_n$, where $d_i$ represents the data value associated with $v_i$ for every $i \in [0,n]$. This data path is denoted as $\tdp{\rho}$. The {\em label} of $\rho$ is $\mathbf{lb}(\rho) = a_1\dots a_n$.
A path $\rho$ is considered a {\em simple path} if $v_i \neq v_j$ for every $i,j \in [0,n]$ and $i \neq j$.
For example, consider the  path in the graph from Figure \ref{fig:DataGraph_ex}:
\[
\rho = n_4 \pathsep {s} \pathsep n_3 \pathsep {p} \pathsep n_{11} \pathsep {p} \pathsep n_9 \pathsep {f} \pathsep n_{13}
\] 
It is a simple path, with $\lb{\rho} = {s \pathsep p \pathsep p \pathsep f}$ and $\tdp{\rho} = (03,\textit{Mary},56) \ldots\linebreak (29,\textit{Paul},5)$.

Given two paths $\rho = v_0  ~ a_1 ~ v_1 \dots v_n$ and $\rho' = v'_0 ~ a'_1 ~ v'_1 \dots v'_m$, $\rho$ can be concatenated with $\rho'$ if $v_n = v'_0$. The concatenated path, denoted as $\rho \cdot \rho'$, is given by $v_0 ~ a_1 \dots v_n ~ a'_1 ~  v'_1 \dots v'_m$. 
A graph $G$ is classified as a {\em chain} if it is connected and each node has at most one predecessor and at most one successor.

\myparagraph{Queries}
Informally, a {\em query} is a function that returns a Boolean value, taking as input a data graph over a given $\Sigma, \mathcal{D}$, along with a valuation that maps a predetermined set of free variables of different types (paths, nodes, positions) to values of the appropriate kind. While these languages often vary in the types of free variables they accommodate, all our query languages support {\em Boolean graph queries}, where the input is solely a data graph over $\Sigma, \mathcal{D}$. Usually, $\Sigma, \mathcal{D}$ will be evident from the context, so we'll omit them. 

\begin{definition}{Language subsumption} \label{def:subsumes}
A query language $L$ defines a set of queries, and we say language $L$ is {\em subsumed by} language $L'$ if every Boolean graph query expressible in $L$ is also expressible in $L'$.
\end{definition}

In previous discussions, we've addressed languages semantically as collections of queries, yet of course, languages of interest are usually defined using syntax for expressions. Often, we identify the expressions with their corresponding queries. Therefore, given a (syntactically defined) query language, a related problem is to evaluate an expression in the language on a given data graph.  This is the {\em query evaluation problem} for the language.


\section{Existing query languages for graphs with data}\label{sec:queries}

\subsection{Graph querying based on path quantification}
{\em Walk logic} (\wl) is a first-order graph query language that quantifies over variables ranging paths and positions within paths \cite{Hellings:13:ICDT}. With the capability to manipulate paths and positions, the query $Q_H$: ``Is there a Hamiltonian path in the graph?'' can be readily expressed using \wl.
Before providing an exact definition, it is instructive to understand how \wl expresses that a graph is Hamiltonian. A graph is Hamiltonian if it possesses a path that (1) is simple (with no repeated nodes) and (2) visits all nodes. This can be expressed using a logic that quantifies over paths and positions within a given path. We employ variables $\pi$ and $\omega$ for variables ranging over paths, and $\ell^\pi$ and $m^\pi$ variables ranging over positions within the path referenced by variable $\pi$.
Intuitively, each path variable $\pi$ in \wl is interpreted as a path $\rho = v_0 \pathsep a_1 \dots a_i \pathsep v_i$ within the data graph $G$. Each position variable $\ell^\pi$ represents the position $r \in [0, i]$ of a node along the path $\rho$. Recall that $v_r$ is the node at the \underline{position} $r$.
Hence, the following \wl formula captures that path $\pi$ is simple:
\[\phi_{\simple}(\pi):= \forall \ell^{\pi},m^{\pi}.\ \ell^{\pi} \neq m^{\pi} \rightarrow
\neg(\ell^{\pi} \eqid  m^{\pi}),\]
where $\ell^{\pi} \eqid m^{\pi}$ indicates that positions referenced by $\ell^{\pi}$ and $m^{\pi}$ point to the same node. 
A path $\rho$ visits all nodes in a graph $G$ if and only if for every path $\rho'$ and a node $n'$ in $\rho'$, there exists a node $n$ in $\rho$ such that $n$ is identical to $n'$. Therefore, the subsequent \wl formula verifies whether (2) is fulfilled:
\[\phi_{\visitall}(\pi):= \forall \omega.\ \forall m^\omega.\ \exists \ell^\pi.\ \ell \eqid m.\]
Hence, formula $\exists \pi.\ \phi_{\simple}(\pi) \wedge \phi_{\visitall}(\pi)$ expresses Hamiltonicity.

The syntax of \wl assumes a countably infinite set $\Pi$ of path variables, coupled with a set $\Lambda$ of position variables indexed by path variables. A position variable $\ell$ is of sort $\pi$ and is usually denoted as $\ell^\pi$ \cite{Hellings:13:ICDT,BFL15}, representing positions within the path $\pi$. The $\pi$ superscript is omitted when the context is clear.

\begin{definition}
Let $\Sigma$ be a finite alphabet, $\Pi$ a countably infinite set of path variables, and $\Lambda$ a set of position variables indexed by path variables. The formulas of \wl over $\Sigma$, $\Pi$, and $\Lambda$ are defined inductively as follows:
\begin{itemize}
\item $E_a(\ell^\pi,m^\pi)$, $\ell^\pi < m^\pi$, $\ell^\pi \eqid n^\omega$, and $\ell^\pi \eqdata n^\omega$ are atomic formulas, where $\pi, \omega \in \Pi$, $\ell^\pi, m^\pi, n^\omega \in \Lambda$, and $a \in \Sigma$.
\item If $\phi$ and $\psi$ are formulas, so are $\neg \phi$, $\phi \vee \psi$, $\exists \ell^\pi.\ \phi$, and $\exists \pi.\ \phi$.
\end{itemize}
Note: in \wl we can say $\ell^\pi <n^\omega$ only if $\pi = \omega$.
\end{definition}

Let $\phi$ be a \wl formula and $\alpha$ be a mapping that assigns paths in $G$ to path variables and node positions to position variables.
Assume $\alpha(\pi)=v_0 \pathsep a_1\dots a_i \pathsep v_i$, $\alpha(\omega)=w_0 \pathsep b_1\dots b_{j} \pathsep w_{j}$, $\alpha(\ell^\pi)=r, \alpha(m^\pi) = s \in [0,i]$, and $\alpha(n^\omega) =t \in [0,j]$.
We say $G$ satisfies $\phi$ under $\alpha$, denoted by $(G,\alpha) \models \phi$, if one of the following holds:
\begin{enumerate}[$\bullet$]
\item $\phi = E_a(\ell^\pi,m^\pi)$, $s=r+1$, and $a = a_s$.

\item $\phi = \ell^\pi < m^\pi$ and $r < s$.

\item $\phi = \ell^\pi \eqid n^\omega$ and $\idm(v_r)=\idm(w_t)$.

\item $\phi = \ell^\pi \eqdata n^\omega$ and $\nodedata(v_r)=\nodedata(w_t)$.

\item $\phi = \neg \psi$ and  it is not the case that $(G, \alpha) \models \psi$. 

\item $\phi = \phi' \vee \phi''$ and we have $(G,\alpha) \models \phi'$ or $(G,\alpha) \models \phi''$.

\item $\phi = \exists \ell^\pi. \psi$ and there is a node position $r$ in $\alpha(\pi)$ such that \linebreak
$(G, \alpha [\ell^\pi\mapsto r]) \models \psi$.
	
\item $\phi = \exists \pi. \psi$ and there is a path $\rho$ in $G$ such that
$(G,\alpha[\pi \mapsto \rho]) \models \psi$.
\end{enumerate}

\subsection{Querying graphs based on regular languages}\label{sec:regular_based_languages}

The language discussed in the preceding section has limitations in expressing certain aspects of path label patterns. Consider the query:
\begin{align*}
 Q_{\even}:=&\textit{``Is there a node labelled $a$ connected by a path of even length}\\ 
 & \textit{to a node labelled $b$ in graph $G$?''.}
\end{align*}

In {\wl} the query cannot be expressed \cite[Proposition 7]{Hellings:13:ICDT}. However, the set of paths satisfying this property can be described using a {\em regular path query} (\rp). An \rp comprises three elements: a source node $v_s$, a target node $v_t$, and a regular expression $e$ (or an NFA $A$). A pair of nodes $(v_s, v_t)$ satisfies such a query if there exists a path from $v_s$ to $v_t$ with the label of $\rho$ is in the language defined by $e$, that is $\lb{\rho}\in L(e)$.

Various extensions of regular path query languages have been introduced in the literature, initially for labelled graphs and later for graphs with data. For instance, conjunctive regular path queries, as in \cite{Lin:12:DS}, allow conjunction as well as automata operating over vectors of paths. In contrast, \cite{LV12} and \cite{BFL15} permit automata capable of detecting patterns involving data values.

We introduce a language that subsumes all languages in this paradigm, {\em first order extended regular data path query language} (\foerd).\footnote{In the literature, \foerd for graphs without data is known as ECRPQ$^\neg$ \cite{Lin:12:DS}.} 
We first describe the formalism informally. 
Regular languages are first expanded to languages defined by register automata.

\begin{definition}[Register automaton] \label{def:registerautomaton} 
A {\em register automaton} (RA) is an extension of finite automata, utilizing registers to store data values. A register automaton consists of a finite set of states $Q$, an initial state $q_0 \in Q$, and a set of final states $F \subseteq Q$, along with a finite set of registers.
A transition relation $\delta$ takes as input a state and a Boolean combination of comparisons between registers and constants, along with a special symbol for the input
data values. The (nondeterministic) output is a new state and optionally a set of update actions to each register, where an update sets a register
to another register's value, a constant, or the input data.
\end{definition}
Informally, when  processing input symbols from an infinite alphabet, register automata compare symbols with register values, updating both registers and automaton state. The semantics are straightforward: see \cite{Kaminski:94:TCS}.

 
 
We now discuss \emph{register data path automata} ( {\ra}s ). These are  modifications of register automata to accept data paths. First, the state space is partitioned into
 \emph{word states} and {\em data states}. While in a word state, an \ra $A$ reads a symbol from $\Sigma$ and transitions to a data state: there is no update and comparison with registers.  In a data state, it reads an element from $\mathcal{D}$, updates its registers, and reverts to a word state. 
The initial and final states are a \emph{data state} and a \emph{word state}, respectively. Transitions that are either of the form $p \xrightarrow{a} q$, where $p$ is a word state, $q$ is a data state, and $a \in \Sigma$, or  $r \xrightarrow{\precondt, \updatet} s$, where $r$ is a data state and $s$ is a word state, while $\precondt$, the precondition, is a conjunction of equalities and inequalities among input data values, registers and constants. The update function $\updatet$ is a conjunction of assignments of register values, input data values, or constants to registers.we allow the input alphabet to be a product space, and thus in $\precondt$ and $\updatet$
we can refer to the $i^{th}$ input value. We can also fix infinite domains for each input value and register, and require $\precondt$ and $\updatet$ to be well-typed. Checking well-typedness -- e.g. we only compare registers with the same domain -- is straightforward. Thus in particular we will have certain registers that will only store ids, and we will assume that their domain is disjoint from the domains of other data values.

We now give the semantics more precisely. 
A data path automaton is simply an \ra where the input words have components corresponding to $\mathcal{D}_{\idm}$ and $\mathcal{D}_{\propm}$. Both the id and data domains contain a constant, denoted in each case by $\sharp$ (``unset''), which can be assigned to {\em id registers} $X_{\idm}$ and {\em $\datam$ registers} $X_{\datam}$, respectively. The semantics of a data path automaton will assume 
that the initial value of each register is $\sharp$.
Assuming $\mathcal{D}_{\propm}$ is a unary relation for simplicity, the data transitions will have a special shape, which we denote as $r \xrightarrow{E,I,U} s$, where:
\begin{itemize}
    \item $r$ is a data state,
    \item $s$ is a word state,
    \item $E = (E_{\idm}, E_{\datam})$, 
    \item $I = (I_{\idm}, I_{\datam})$, and
    \item $U = (U_{\idm}, U_{\datam})$.
\end{itemize}
Here:
\begin{itemize}
    \item $E_{\idm}$ (``$\idm$ equality check registers''), $I_{\idm}$ (``$\idm$ inequality check registers''), and $U_{\idm}$ (``$\idm$ update registers'') are subsets of $X_{\idm}$,
    \item $E_{\datam}$, $I_{\datam}$, and $U_{\datam}$ are subsets of $X_{\datam}$.
\end{itemize}

We now give the semantics.
 Assuming $\ell$ registers and a single variable domain $\mathcal{U}$, a {\em configuration} of an \ra $A$ is a pair $(q,v)$ where $q$ is a state, and $v \in (\mathcal{U})^\ell$ is an $\ell$-tuple representing register content. 
 A precondition corresponding to $E$ and $I$ above states that, when reading a symbol $(d_{\idm}, d_{\datam})$, we have  \begin{itemize}
\item $v[i]=d_{\idm}$ for each $i$ in $E_{\idm}$,
\item $v[i]=d_{\datam}$ for each $i$ in $E_{\datam}$,
\item $v[i]\neq d_{\idm}$ for each $i$ in $I_{\idm}$
\item $v[i]\neq d_{\datam}$ for each $i$ in $I_{\datam}$,
\end{itemize}
And the update corresponding to $U$ above is:
\begin{itemize}
\item $w[i]=d_{\idm}$ for each $i$ in $U_{\idm}$,
\item $w[i]=d_{\datam}$ for each $i$ in $U_{\datam}$,
\item $w[i]=v[i]$ for each $i \not\in U_{\idm}\cup U_{\datam}$.
\end{itemize}
Given two configurations $(p,v), (q,w)$ and a transition $\delta$, we write $(p,v) \linebreak \xrightarrow{(d_{\idm}, d_{\datam})}_\delta (q,w)$ if the precondition and update function are respected. $(q,v)$ is designated as the {\em initial configuration} if $q$ is the initial state and $v = (\sharp, \dots, \sharp)$. Similarly, $(q,v)$ is a {\em final configuration} if $q$ is a final state.
A sequence of configurations $s=c_0 \dots c_n$ is a {\em computation} on a data path $u=a_1 \dots a_n$ if there exists a sequence of transitions $\delta_1 \dots \delta_n$ such that $c_{i-1} \xrightarrow{a_i}_{\delta_i} c_i$ for each $i \in [n]$. If $c_0$ is the initial configuration and $c_n$ is a final configuration, then $s$ is an {\em accepting computation}, and $u$ is accepted by $A$. The language of regular data paths recognized by $A$, denoted as $L(A)$, comprises data paths $u$ accepted by $A$. 
Along the computation $s$, we say register $x_i$ is {\em written} at the $j^{th}$ step where configuration $c_j=(q_j,v_j)$ for $1 \leq j \leq n$ if $v_{j-1}[i]=\sharp$ and $v_j[i]\neq \sharp$.
Note that we can use states to keep track of which registers were written at the 
previous states,
and whether the current content is $\sharp$.

\begin{figure}[t]
\centering
\begin{tikzpicture}[node distance=100pt]
\tikzstyle{state}=[draw,shape=circle,minimum size=18pt]

\node[initial, state] (q0) at (0,0) {$q_0$}; 
\node[state] (q1) [right of=q0] {$q_1$};
\node[state] (q2) [right of=q1] {$q_2$}; 
\node[accepting, state] (q3) [right of=q2] {$q_3$};

\path (q0)  edge [->] node[above] {$\emptyset,\emptyset,\{x\}$} (q1)
(q1)  edge [->, bend left] node[above] {$a$} (q2)
(q2)  edge [->, bend left] node[above] {$\emptyset,\emptyset,\emptyset$} (q1)
(q2)  edge [->] node[above] {$\{x\},\emptyset,\emptyset$} (q3)
(q3)  edge [->,loop above] node[above] {$a$} (q3)
(q3)  edge [->,loop below] node[below] {$\emptyset,\emptyset,\emptyset$} (q3)
;

\end{tikzpicture}
\caption{\ra $A$ over $\Sigma=\{a\}$ with one register $x \in X_\datam$.}
\label{fig:ra_ex}
\end{figure}

See Figure~\ref{fig:ra_ex}. The \ra $A$ has one register $x \in X_\datam$ and four states $q_0, \dots, q_3$, where $q_0$ is the initial state, and $q_3$ is a final state. While reading a data path $d_0 ~ a ~ d_1 \dots a ~ d_n$, it first stores the data value $d_0$ (excluding its id)  in the register $x$ and transitions from state $q_0$ to $q_1$. At state $q_2$, for each $i \in [1,n]$, when reading the data value $d_i$ (excluding its id), it checks whether the value is identical to the content of the register. If the values match, it transitions to the final state $q_3$ and accepts the data path.

An $n$-ary \ra $A$ is an \ra over an $n$-ary product alphabet.
Let $u_1,\dots, u_n$ be $n$ data paths, where $u_i=(d^{\idm,(i)}_0,d^{\datam,(i)}_0) \pathsep a^{(i)}_1\dots (d^{\idm,(i)}_{j_i},d^{\datam,(i)}_{j_i})$ for $1\leq i \leq n$ and some $j_1,\dots,j_n \in \mathbb{N}$. The {\em convolution} of $u_1,\dots, u_n$ is an $n$-ary data path $d_0a_1\dots d_m$, where:
\begin{itemize}
\item $m = \max\{j_1,\dots,j_n\}$,
\item $a_j=(b_1,\dots,b_n)$ where $b_i=a^{(i)}_j$ if $j \leq j_i$; otherwise, $b_i=\sharp$ for $1 \leq i \leq n$ and $1 \leq j \leq m$,
\item $d_j=(e^{\idm}_1,\dots,e^{\idm}_n,e^{\datam}_1,\dots,e^{\datam}_n)$ where $e^{\idm}_i=d^{\idm,(i)}_j$ and $e^{\datam}_i=d^{\datam,(i)}_j$ if $j \leq j_i$; otherwise $e^{\idm}_i=e^{\datam}_i=\sharp$ for $1 \leq i \leq n$ and $1 \leq j \leq m$.
\end{itemize}
The convolution of $u_1, \dots, u_n$ converts the $n$-tuple of data paths into a single data path of tuples. This resulting data path is as long as the longest data path among $u_1, \dots, u_n$ and is padded with $\sharp$'s for shorter paths. 
We say $(u_1,\dots, u_n)$ is accepted by $A$ if their convolution is accepted by $A$.

With this foundation, we proceed to define \edn.

\begin{definition}[\edn] \label{def:fo:erdpq}
The formulas of the logic are defined inductively:
\begin{itemize}
\item atom $::= \pi = \omega \mid x=y \mid (x,\pi,y) \mid (\pi_1,\dots,\pi_n) \in A$,
\item $\phi ::= \textit{atom} \mid \neg\phi \mid \phi \wedge \phi \mid \exists x. \phi \mid \exists \pi. \phi$,
\end{itemize}
Here, $\pi$ and $\omega$ are path variables, $x$ and $y$ are node variables, and $A$ is an $n$-ary {\em \ra} where $n\in \mathbb{Z}_{>0}$. For ease of understanding, we also write $(\pi_1, \dots, \pi_n) \not\in A$ to mean $\neg((\pi_1, \dots, \pi_n) \in A)$.
\end{definition}

A query in \edn is referred to as a {\em first order regular data path query} (\ford) \footnote{In the literature, \ford for graphs without data is known as CRPQ$^\neg$ \cite{Lin:12:DS}.} if all {\ra}s in the query are unary. An \cdn query is called a {\em regular data path query} (\rd) if it is of the form: 
\[\exists\pi.(x,\pi,y) \wedge \pi \in A.\]
The semantics of \edn is standard. 

Let $\phi$ be an \edn formula, $G$ a data graph, and $\alpha$ be a mapping that assigns paths in $G$ to path variables and nodes to node variables. We say that $G$ satisfies $\phi$ under $\alpha$, denoted by $(G,\alpha) \models \phi$, if one of the following holds:
\begin{itemize}
\item $\phi: \pi = \omega$ and $\alpha(\pi) = \alpha(\omega)$.
\item $\phi: x=y$ and $\alpha(x) = \alpha(y)$.
\item $\phi: (x,\pi,y)$ and $\alpha(\pi)$ is a path from $\alpha(x)$ to $\alpha(y)$ in $G$.
\item $\phi: (\pi_1,\dots,\pi_n) \in A$ and $(\tdp{\alpha(\pi_1)}, \dots, \tdp{\alpha(\pi_n)})$ is in the $n$-ary relation of data paths defined by $A$.

\item $\phi: \neg \psi$ and $G$ does not satisfy $\psi$ under $\alpha$.

\item $\phi: \phi' \wedge \phi''$ and we have $(G,\alpha) \models \phi'$ and $(G,\alpha) \models \phi''$.

\item $\phi: \exists x. \psi$ and there is a node $r$ in $G$ such that $(G, \alpha [x\mapsto r]) \models \psi$.
	
\item $\phi: \exists \pi. \psi$ and there is a path $\rho$ in $G$ such that $(G,\alpha[\pi \mapsto \rho]) \models \psi$.
\end{itemize}

Obviously, $Q_{\even}$ is expressible in \edn. It corresponds to formula $\phi_{\even}(x,y):= \exists \pi. (x,\pi,y) \wedge \pi \in A_{\even}$, where $A_{\even}$ recognizes the set of data paths of even length.
Moreover, since ids are included as a part of data values, Hamiltonicity is expressible in \edn, as we show in the following example:

\begin{example}\label{ex:edn_Hamiltonian}
A path $\rho = v_0 \pathsep a_1 \dots v_n$ is simple if there are no $0 \leq i < j \leq n$ such that 
$v_i = v_j$. Let $A_{\repeatm}$ be
the unary \ra with one register which non-deterministically  records a node id and accepts the input if the recorded id repeats.  Consequently, a path $\rho$ in a data graph satisfies $(\pi \not\in A_{\repeatm})$ if and only 
if $\rho$ is simple. A path $\rho$ is said to visit all nodes if, for each node $v$, there exists an index $i$ 
such that $v_i=v$. Now, let $A_{\visit}$ be a binary \ra with one register, storing the first node id of the first data path and accepting if the stored id appears in the second path. 
Then, the \edn formula $\phi_H(\pi) := (\pi \not\in A_{\repeatm}) \wedge (\forall \omega. (\omega,\pi) \in A_{\visit})$ asserts that $\pi$ represents a Hamiltonian path.
\end{example}

\myparagraph{Graph pattern calculus (\gpc)}
\gpc is another query language based on regular expressions, designed for working with graphs containing data \cite{GPC:22:Francis}. 
As discussed in the introduction, we make some simplifications to the language proposed in the standard, related to inverses and trails. We return to these simplifications in Section \ref{sec:conc}.

The most foundational \gpc query  takes the form $\restrictor\ \pattern$, where $\restrictor$  is referred to as a {\em restrictor}, and $\pattern$ is called a {\em pattern}.
Informally, a pattern can describe the label pattern of a path, which can be represented equivalently as a regular expression. It can also perform equality tests on data values of two nodes. A restrictor can constrain paths to have non-repeating vertices, and can also enforce the paths to be the shortest among the valid paths.
The  forms of a restrictor $\restrictor$ are as follows:
\begin{itemize}
\item {\bf Restrictor:} $\restrictor::=  \simple  ~ | ~ \shortest  ~| ~ \shortestsimple$
\end{itemize}

Atomic \gpc patterns are either:
\begin{itemize}
\item {\bf Node patterns:}
$()$ and $(x)$; or
\item {\bf Edge patterns:}
$\rightarrow$ and $\xrightarrow{a}$.
\end{itemize}
Here, $x$ represents a node variable, and $a \in \Sigma$. 

We now give inductive rules for forming patterns, assuming two patterns $\pattern$ and $\pattern'$. We close under:
\begin{itemize}
\item {\bf Union:} $\pattern + \pattern'$, 
\item {\bf Concatenation:} $\pattern\pattern'$, 
\item {\bf Repetition:} $\pattern^{n..m}$, where $n$ and $m$ can take values in $\mathbb{N}\cup\{\infty\}$.
\item {\bf Conditioning:} $\pattern_{\langle \theta \rangle}$, where $\theta$ is a condition.
\end{itemize}

Conditions are defined as:
\begin{itemize}
\item {\bf Condition:}
$\theta::= x \eqdata y \mid \neg\theta \mid \theta \wedge \theta$
\end{itemize}
where $x$ and $y$ are node variables.


In \cite{GPC:22:Francis}, \gpc is accompanied by a type system that imposes certain restrictions. Here, we consider the notion of {\em free variables}, as introduced in \cite{Gheerbrant:24:arxiv}. 
Given a pattern $\pattern$, a node variable $x$ in $\pattern$ is {\em bounded} if it appears within a repetition sub-pattern $\primepattern$; otherwise, $x$ is {\em free}. 
An important restriction  on the syntax is: 

\medskip

\emph{If a variable $x$ is used in the repetition sub-pattern $\primepattern$ of $\pattern$, it cannot appear anywhere else in $\pattern$ except within $\primepattern$}

We use $\fv{\pattern}$ to denote the set of free variables in the pattern $\pattern$.
\footnote{The definitions of \gpc provided in \cite{GPC:22:Francis} and \cite{Gheerbrant:24:arxiv} are not identical. For instance, the pattern $((x) + ()) \rightarrow (y)$ is valid in \cite{GPC:22:Francis} but not in \cite{Gheerbrant:24:arxiv}, because the node variable $x$ appears in the node sub-pattern $(x)$ but not in the sub-pattern $()$. In general, in \cite{Gheerbrant:24:arxiv} the use of variables is more complex.
In this paper, we adopt the simpler definitions from \cite{GPC:22:Francis}.}

Let $G = (V, E, \idm, \nodedata)$ be a data graph, and let $\pattern$ and $\pattern'$ be two \gpc patterns. For any two mappings $\mu: \fv{\pattern} \to V \cup \{\nullnode\}$ and $\mu': \fv{\pattern'} \to V \cup \{\nullnode\}$, the union $\mu \cup \mu': \fv{\pattern} \cup \fv{\pattern'} \to V \cup \{\nullnode\}$ is defined if, for all $x \in \fv{\pattern} \cap \fv{\pattern'}$, it holds that $\mu(x) = \mu'(x)$ whenever both $\mu(x) \neq \nullnode$ and $\mu'(x) \neq \nullnode$, where $\nullnode$ refers to a null value. 
If $\mu \cup \mu'$ is defined, then it maps each $x$ to:
\begin{itemize}
    \item $\mu(x)$ if $\mu(x) \neq \nullnode$ or $x \not\in \fv{\pattern'}$;
    \item $\mu'(x)$ if $\mu'(x) \neq \nullnode$ or $x \not\in \fv{\pattern}$;
    \item $\nullnode$ if $\mu(x) = \mu'(x) = \nullnode$.
\end{itemize}
Given a condition $\theta$ and a mapping $\mu: \fv{\pattern} \to V$, $\mu \models \theta$ is defined recursively as follows:
\begin{itemize}
    \item $\mu \models x \eqdata y$ if $\nodedata(\mu(x)) = \nodedata(\mu(y))$.
    \item $\mu \models \neg \theta$ if $\mu \not\models \theta$.
    \item $\mu \models \theta \wedge \theta'$ if $\mu \models \theta$ and $\mu \models \theta'$.
\end{itemize}

Now, we are ready to define the semantics of \gpc queries. Let $\pattern$ be a \gpc pattern, and let $G = (V, E, \idm, \nodedata)$ be a data graph over $\Sigma$. The semantics of $\pattern$ in $G$, denoted $\semantics{\pattern}{G}$, is a set of pairs $(\rho, \mu)$, where $\rho$ is a path and $\mu$ is a mapping from $\fv{\pattern}$ to $V$. The semantics is defined recursively as follows:
\begin{itemize}
    \item $\semantics{(x)}{G} = \{(v, \{x \mapsto v\}) \mid v \in V\}$. Recall that $v$ is a path of length zero in $G$ if $v \in V$.
    \item $\semantics{()}{G} = \{(v, \emptyset) \mid v \in V\}$.
    \item $\semantics{\xrightarrow{a}}{G} = \{(v \pathsep a \pathsep v', \emptyset) \mid (v, a, v') \in E\}$ for $a \in \Sigma$. Recall that $v \pathsep a \pathsep v'$ is a path of length one in $G$ if $(v, a, v') \in E$.
    \item $\semantics{\rightarrow}{G} = \{(v \pathsep a \pathsep v', \emptyset) \mid (v, a, v') \in E\}$.
    \item $\semantics{\pattern + \pattern'}{G} = \{(\rho, \mu') \mid (\rho, \mu) \in \semantics{\pattern}{G}$ or $(\rho, \mu) \in \semantics{\pattern'}{G}\}$, where $\mu'$ is a mapping from $\fv{\pattern} \cup \fv{\pattern'}$ defined as follows:
\begin{itemize}
    \item $\mu'(x) = \mu(x)$ if $x$ is in the domain of $\mu$;
    \item $\mu'(x) = \nullnode$ otherwise.
\end{itemize}
For example, suppose $\fv{\pattern}=\{x,y\}$ and $\fv{\pattern'}=\{y,z\}$. Then, we have $\fv{\pattern + \pattern'} = \{x,y,z\}$.  
Additionally, if $(\rho,\mu) \in \semantics{\pattern'}{G}$ with $\mu(y)=v$ and $\mu(z)=w$, then we have $(\rho,\mu') \in \semantics{\pattern \pattern'}{G}$, where:
\begin{itemize}
    \item $\mu'(x) = \nullnode$,
    \item $\mu'(y) = v$,
    \item $\mu'(z) = w$.
\end{itemize}


    \item $\semantics{\pattern\pattern'}{G} = \{(\rho \cdot \rho', \mu \cup \mu') \mid (\rho, \mu) \in \semantics{\pattern}{G}, (\rho', \mu') \in \semantics{\pattern'}{G}\text{, and the union } \mu\cup\mu' \text{ is defined}\}$.
    \item $\semantics{\pattern^{n..m}}{G} = \bigcup_{i=n}^m \semantics{\pattern}{G}^i$, where:
    \begin{itemize}
        \item $\semantics{\pattern}{G}^0 = \semantics{()}{G}$.
        \item $\semantics{\pattern}{G}^\ell = \{(\rho_1 \cdots \rho_\ell, \emptyset) \mid \exists \mu_1, \dots, \mu_\ell. \, (\rho_1, \mu_1), \dots, (\rho_\ell, \mu_\ell) \in \semantics{\pattern}{G}\}$ for $\ell \geq 1$.
    \end{itemize}
    \item $\semantics{\pattern_{\langle \theta \rangle}}{G} = \{(\rho, \mu) \mid (\rho,\mu) \in \semantics{\pattern}{G} \text{ and }\mu \models \theta\}$.
\end{itemize}

Given a \gpc pattern $\pattern$ and a restrictor $\restrictor$, the semantics $\semantics{\restrictor \ \pattern}{G}$ is defined recursively as follows:
\begin{itemize}
    \item $\{(\rho, \mu) \mid (\rho, \mu) \in \semantics{\pattern}{G}  \rho \text{ is simple}\}$ if $\restrictor = \simple$.
    \item $\{(\rho, \mu) \mid (\rho, \mu) \in \semantics{\pattern}{G} ~  \rho \text{ shortest among paths } \rho' 
    \mbox{ with } (\rho', \mu') \in \semantics{\pattern}{G} \\ \mbox{ and } \rho' \mbox{ has the same endpoints as } \rho \}$ if $\restrictor = \shortest$. 
   
    \item $\{(\rho, \mu) \mid (\rho, \mu) \in \semantics{\pattern}{G}, \rho \text{ is simple, and } |\rho| = \min \{ |\rho'| \mid (\rho', \mu') \in \semantics{\pattern}{G},\linebreak \rho' \text{ is simple and has the same endpoints as } \rho \}$ if $\restrictor =\shortestsimple$.
    
\end{itemize}

\begin{itemize}
    \item $\semantics{Q, Q'}{G} = \{((\bar{\rho}, \bar{\rho}'), \mu \cup \mu') \mid (\bar{\rho}, \mu) \in \semantics{Q}{G}, (\bar{\rho}', \mu') \in \semantics{Q'}{G} \text{, and the union }\linebreak \mu\cup\mu' \text{ is defined}\}$.
\end{itemize}
We omit the subscript $G$ when it is clear from the context. We adapt \gpc queries to support graph queries in the obvious way: a non-empty data graph $G$ satisfies a \gpc query $Q$ if $\semantics{Q}{G}$ is non-empty.
For simplicity, we also say that a path $\rho$ in $G$ is in the semantics of $Q$, denoted as $\rho \in \semantics{Q}{G}$, if $(\rho, \mu) \in \semantics{Q}{G}$.
We also say that a path $\rho$ satisfies $\pattern$, if there exists a mapping $\mu$ such that $(\rho,\mu) \in \semantics{\pattern}{G}$.

We end this subsection with two examples.  Consider the following \gpc pattern over $\Sigma=\{a,b\}$:
\[
\pattern = (x) \xrightarrow{a} () \left[ \left[\xrightarrow{b}\xrightarrow{b}\xrightarrow{b}\right]^{1..\infty} + \left[[(y) \rightarrow^{3..7} (z)]_{y\eqdata z}\right]^{1..1} \right] () \xrightarrow{b} (x).
\]
Since variables $y$ and $z$ are used in a repetition sub-pattern, we have $\fv{\pattern} = \{x\}$.
Pattern $\pattern$ refers to the set of paths $\rho = v_0  ~ a_1 ~ v_1 \dots a_n ~ v_n$, where:
\begin{itemize}
    \item $v_0 = v_n$ where $n>2$,
    \item $a_1 = a$ and $a_n = b$, and
    \item at least one of the following holds:
    \begin{enumerate}[(a)]
        \item $n-2$ is divisible by $3$ and $a_i = b$ for $2\leq i \leq n-1$;
        \item $3\leq n-2 \leq 7$ and $\dataof(v_1) = \dataof(v_{n-1})$.
    \end{enumerate}
\end{itemize}

See the data graph $G$ in Figure~\ref{fig:gpc_eg}.
Let $\mu$ be a mapping with $\mu(x)=n_3$.
By definition, we have $\semantics{\pattern}{G}$ as the union of the following two sets:
\begin{enumerate}[(a$'$)]
    \item $P_1 = \{(n_3 ~ a ~ (n_4 ~ b ~ n_5 ~ b ~ n_6 ~ b ~ n_7~  b)^i ~ n_3, \mu) \mid 4i-1 \text{ is divisible by 3 and } i\geq 1\}$.
    \item $P_2 = \{(n_3 ~ a ~ n_1 ~ (a ~ n_2 ~  a  ~n_1)^i ~ a ~ n_2 ~  b ~ n_3, \mu) \mid i=2,\dots, 6\}$.
\end{enumerate}

Moreover, we find that $\semantics{\restrictor\ \pattern}{G}$ is as follows:
\begin{itemize}
\item When $\restrictor = \simple, \shortestsimple$, $\semantics{\restrictor\ \pattern}{G}$ is empty, as there is no simple path in $G$ matching pattern $\pattern$.

\item When $\restrictor = \shortest$, $\semantics{\restrictor\ \pattern}{G} = \{(\rho,\mu) \mid \rho \text{ is a shortest path in } P_1\cup P_2\}$.
\end{itemize}

\begin{figure}[!h]
\centering
\includegraphics[width=0.6\textwidth]{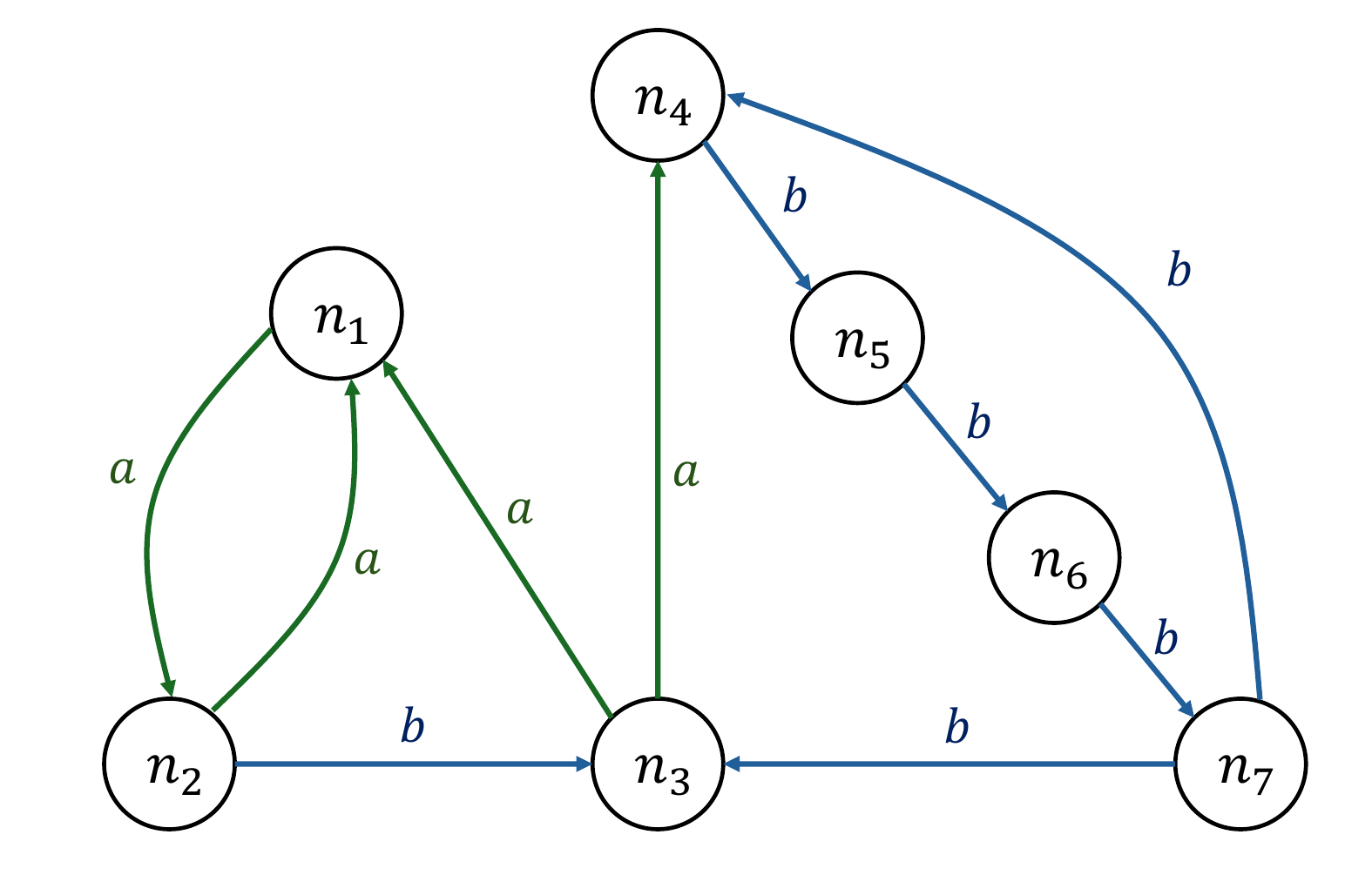}
\caption{Data graph $G$ over $\{a,b\}$, where $\dataof(n_1)=\dataof(n_2)=0$ and $\dataof(n_i)=i$ for $i=3,\dots,7$.}
\label{fig:gpc_eg}
\end{figure}


Finally, consider the following \gpc pattern over $\Sigma=\{a,b\}$:
\[\primepattern =\ \xrightarrow{b}\xrightarrow{b}.\]
Given the same data graph $G$ in Figure~\ref{fig:gpc_eg}, we have:
\[\semantics{\primepattern}{G} = \{n_4~ b ~ n_5~ b ~ n_6, n_5 ~ b ~ n_6 ~ b ~ n_6, n_6 ~ b ~ n_7 ~ b ~ n_4, n_6 ~ b ~ n_7 ~ b ~ n_3, n_7 ~ b ~ n_4 ~ b ~ n_5\}.\]
Since all paths in $\semantics{\primepattern}{G}$ are simple and have the same length, we have $\semantics{\restrictor\ \primepattern}{G}=\semantics{\primepattern}{G}$ for any restrictor $\restrictor$.

\subsection{Graph querying based on \fotc}
Another category of languages revolves around first-order logic with only node variables. Instead of relying on operations rooted in automata or regular languages, these languages utilize a transitive closure operator.

A data graph $G=(V,E, \idm, \nodedata)$ over $\Sigma$ can be viewed as a first-order structure $\langle V, \{E_a\}_{a \in \Sigma}, \eqdata \rangle$. First-order logic over this structure is denoted as \fod. In this structure, the set $V$ of nodes forms the universe, $E_a$ is interpreted as $\{(v,v') \in V^2 \mid (v,a,v')\in E\}$, and $\eqdata$ is interpreted as $\{(v,v') \mid \nodedata(v)=\nodedata(v')\}$.

We extend \fod by introducing a  transitive closure constructor.
If $\phi(\mathbf{x},\mathbf{y})$ is a formula, then $\phi^*(\mathbf{x}, \mathbf{y})$ becomes a formula, where $\mathbf{x}$ and $\mathbf{y}$ represent two vectors of node variables of the same arity. The interpretation is such that $G, \rho \models \phi^*(\mathbf{x}, \mathbf{y})$ for a data graph $G$ and a valuation $\rho$, if there exists a sequence of node vectors $\mathbf{x}_1=\rho(\mathbf{x}), \mathbf{x}_2, \ldots, \mathbf{x}_n=\rho(\mathbf{y})$ satisfying the condition that for each $1 \leq i < n$, $G, \rho \models \phi(\mathbf{x}_i, \mathbf{x}_{i+1})$.
\footnote{In some prior literature \cite{Libkin:04:text}, transitive closure operators are defined over formulas of the form $\phi(\mathbf{x}, \mathbf{y}, \mathbf{z})$, which involve three tuples of variables: $\mathbf{x}$, $\mathbf{y}$, and $\mathbf{z}$, where $\mathbf{z}$ represents free variables in $\phi^*(\mathbf{x}, \mathbf{y}, \mathbf{z})$. These two variants of the language \fotc are equivalent in expressiveness.
Specifically, $\phi^*(\mathbf{x}, \mathbf{y}, \mathbf{z})$ from the alternative variant can be translated into the equivalent formula $\exists \mathbf{z}'. {\phi'}^*(\mathbf{x}', \mathbf{y}')$ of the variant used in this paper, where $\mathbf{x}' = (\mathbf{x}, \mathbf{z})$, $\mathbf{y}' = (\mathbf{y}, \mathbf{z}')$, and $\phi'(\mathbf{x}', \mathbf{y}') = \phi(\mathbf{x}, \mathbf{y}, \mathbf{z}) \wedge \mathbf{z} = \mathbf{z}'$. }
We denote the extension of \fod with transitive closures as \fotcd. In contrast to other languages that rely on transitive closure of edge relations,\linebreak \fotcd permits the transitive closure of any definable relationship, not just the closure of the edge relation.

Within this paradigm, one prominent language family is GXPath, originating from  query languages for graph databases initially introduced in \cite{LMV16}, inspired by the XML query language XPath. Within this family,\linebreak \gpcrd is the most powerful variant:  \cite{LMV16} shows that it is equivalent to \fotcr, a restriction of \fotcd allowing only  formulas in which every subformula has at most three free variables, and the closure $\phi^*$ of $\phi(\mathbf{x},\mathbf{y})$ is allowed only if $\mathbf{x}$ and $\mathbf{y}$ have the same arity. 
Lastly, it is worth mentioning {\em regular queries} (RQ) \cite{Vardi:17:TCS}, a query language for graphs without data, which extends {\em unions of conjunctive 2-way regular path queries} (UC2RPQ) by adding transitive closure. RQ can be defined as the closure of atomic queries under selection, projection, disjunction, conjunction, and transitive closure. By adding $x \eqdata y$ to the set of atomic queries, RQ can be extended to handle graphs with data.

\section{Expressiveness of existing languages}\label{sec:prior}

The languages discussed in Section~\ref{sec:queries} are illustrated in Figure~\ref{fig:existing}. This section will be devoted to proving the following theorem, which justifies Figure~\ref{fig:existing}:

\begin{theorem} \label{thm:existingcontainments}
The subsumptions depicted in Figure~\ref{fig:existing} all hold.
\end{theorem}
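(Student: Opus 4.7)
The plan is to verify each arrow in Figure~\ref{fig:existing} separately. Four of the six containments are quick: $\rd \to \cdn$ and $\cdn \to \edn$ are immediate from the syntactic definitions in Section~\ref{sec:queries}, since an \rd query is by definition a specific shape of \cdn and \cdn is \edn restricted to unary \ra's. The containment $\gpcrd \to \fotcd$ follows from \cite{LMV16}, where \gpcrd is shown equivalent to the three-variable, unary-closure fragment \fotcr of \fotcd. For $\rd \to \fotcd$, I would encode a configuration of a $k$-register, $m$-state data path automaton $A$ as a tuple $(v, r_1, \ldots, r_k, s_1, \ldots, s_{\lceil \log_2 m\rceil})$ of nodes, where $v$ is the current node, each $r_i$ is any node whose data value agrees with the $i^{\text{th}}$ register content (using \eqdata in place of storing data values literally), and the $s_j$ jointly pick out the current state via a pattern of (in)equalities with a distinguished reference node. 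The one-step relation of $A$ is then first-order over such tuples, and \fotcd's transitive-closure operator delivers the desired query.

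The translation $\wl \to \edn$ is where the syntactic gap is largest, since \wl quantifies not only over paths but also over positions within paths, while \edn has no position variables. The plan is to replace each bound position variable $\ell^\pi$ by two fresh path variables $\pi^{\ell_1}, \pi^{\ell_2}$ together with a ``gluing'' constraint ensuring $\alpha(\pi^{\ell_1})\cdot \alpha(\pi^{\ell_2}) = \alpha(\pi)$; this constraint is itself expressible by a binary \ra. Each atomic \wl predicate then becomes an $n$-ary \ra over the tuple of relevant subpaths: $\ell^\pi < m^\pi$ becomes $|\pi^{\ell_1}|<|\pi^{m_1}|$; $\ell^\pi \eqid m^\omega$ and $\ell^\pi \eqdata m^\omega$ become \ra's that record the last id (resp.\ data value) on one input and compare against the other; and $E_a(\ell^\pi, m^\pi)$ asserts the appropriate length-and-label relationship. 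Boolean connectives and path quantifiers of \wl carry over directly, with the only subtlety being to introduce and glue the auxiliary split variables at the correct quantifier depth.

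For $\gpc \to \edn$, I would first build, for each \gpc pattern \pattern, an \ra $A_{\pattern}$ of arity equal to the number of free node variables, whose accepted data paths are exactly those produced by \pattern together with consistent variable assignments. Concatenation, union, and repetition $\pattern^{n..m}$ correspond to standard automata operations; the conditioning $\pattern_{\langle \theta \rangle}$ is handled by storing the data values of the involved node variables in registers, which is well-defined thanks to \gpc's typing rule that forbids a repetition-bound node variable from appearing outside its repetition block. The restrictors are dealt with using \edn's negation: \simple is enforced by conjoining $\neg (\pi \in A_{\repeatm})$ as in Example~\ref{ex:edn_Hamiltonian}, and \shortest by adding $\neg \exists \pi'.\, (x,\pi',y) \wedge \pi' \in A_{\pattern} \wedge |\pi'|<|\pi|$, where the length comparison is expressed by a binary \ra reading two inputs in lockstep. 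A \gpc join $Q_1, Q_2$ becomes a conjunction of the translated queries that shares path and node variables exactly as the source queries shared their $z$- and node-variables.

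The step I expect to be the main obstacle is the \wl-to-\edn translation, where one must show that the $n$-ary \ra built from a potentially deeply nested \wl formula really captures the intended semantics of position quantification across different paths: the auxiliary split variables must compose correctly under negation and nested path quantifiers, without letting the constructed \ra observe information beyond what \wl itself can express. The \gpc translation becomes largely mechanical once the restrictors are handled via negation as above, and the \fotcd translations are routine after the configuration-encoding trick is set up.
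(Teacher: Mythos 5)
Your decomposition is the paper's: \rd~$\subseteq$~\cdn~$\subseteq$~\edn by definition; \gpcrd~$\subseteq$~\fotcd via the known equivalence with the three-variable fragment \cite{LMV16}; \rd~$\subseteq$~\fotcd by encoding an \ra configuration as a tuple of nodes (state bits via (in)equalities with distinguished nodes, register contents represented by nodes up to $\eqid$/$\eqdata$) and closing the first-order one-step relation under $*$; \gpc~$\subseteq$~\edn by compiling patterns into {\ra}s with registers for node variables and handling restrictors with extra \ra atoms; and \wl~$\subseteq$~\edn by trading position variables for path variables tied to prefixes. The paper proves the last item for the stronger \mwl and phrases the restrictors with universally quantified \ra atoms (to stay inside its universal fragment \uedn), but your negation-based treatment of $\simple$ and $\shortest$ is equally valid for the subsumption statement itself.

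The one claim that is wrong as written is the ``gluing'' constraint in the \wl translation: the relation $\alpha(\pi^{\ell_1})\cdot\alpha(\pi^{\ell_2})=\alpha(\pi)$ is \emph{not} recognizable by an \ra. An \ra reads the convolution of its arguments, which aligns all inputs at position $0$; certifying that $\pi^{\ell_2}$ is the suffix of $\pi$ after $\pi^{\ell_1}$ requires matching position $j$ of $\pi^{\ell_2}$ against position $|\pi^{\ell_1}|+j$ of $\pi$ (equivalently, checking $|\pi^{\ell_1}|+|\pi^{\ell_2}|=|\pi|$), an unbounded-shift comparison that finitely many registers cannot perform. Fortunately the flaw is inessential: none of your atom translations ever inspects $\pi^{\ell_2}$ --- the order, $\eqid$, $\eqdata$ and $E_a$ atoms are all read off lengths, last nodes and last labels of the prefix variables --- so dropping the suffix variable and keeping only the prefix constraint repairs the argument, and that is precisely the paper's $A_{\textsc{prefix}}$ device. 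Two smaller slips: the pattern automaton $A_{\pattern}$ should be unary (node variables are handled by registers of the single automaton, not by additional path arguments), and in the paper's formalization a join only has to agree on shared path variables. With these adjustments your proof coincides with the paper's.
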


It is straightforward, just chasing the definitions, to observe that \rd $\subseteq$ \cdn $\subseteq$ \edn. 
To complete the proof, we still have to show:
\begin{enumerate}[(1)]
\item \edn subsumes \gpc.
\item \edn subsumes \wl.
\item \fotcd subsumes \rd.
\end{enumerate}

We will show the first and third item here. The proof of the middle item will be deferred to Section \ref{sec:mwl}.

\subsection{\edn subsumption of \gpc}\label{sec:edn_subsumes_gpc}
We define a fragment of {\edn} with lower data complexity which still subsumes \gpc. The motivation is that there exists a substantial complexity gap between \gpc and \edn. The data complexity of query evaluation for \gpc is within \pspace \cite{GPC:22:Francis}, whereas for \edn, it is non-elementary \cite{Lin:12:DS}. Our objective is to bridge this gap by introducing the fragment \uedn of \edn, in which only universal quantifiers are allowed. The expressiveness of this fragment relative to others is depicted in Figure~\ref{fig:uedn}.

\begin{figure}[t]
\centering
\scalebox{0.7}{
\begin{tikzpicture}
\centering
[node distance=160pt]

\tikzset{invisible/.style={minimum width=0mm,inner sep=0mm,outer sep=0mm}}

\node (v1) at (0,2) {\edn};
\node (v2) at (0,1) {\uedn};
\node (v3) at (0,0) {\gpc};

\path
(v1)
    edge [-] node[above, rotate=-90] {} (v2)
(v2)
    edge [-] node[above, rotate=-90] {} (v3)
;

\end{tikzpicture}
}
\caption{Fragment \uedn between \edn and \gpc. (Higher is more expressive.)}
\label{fig:uedn}
\end{figure}

Formalizing the syntax, a \uedn formula takes the form:
\[\forall \bar{\omega}. \psi(\bar{\pi},\bar{\omega}),
\]
where $\psi$ is a quantifier-free \edn formula. We will demonstrate that this fragment contains \gpc.

\begin{theorem}\label{thm:ednsubsumesgpc} Every \gpc query is expressible in \uedn.
\end{theorem}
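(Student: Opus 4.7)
The plan is to show that every \gpc query can be systematically translated into a semantically equivalent \edn formula. The translation first turns each \gpc pattern $\pattern$ into a unary \ra $A_{\pattern}$ that recognises exactly the data paths of matches of $\pattern$, then wraps these automata with the appropriate restrictors and join conditions inside an outer \edn formula, with free path and node variables ultimately existentially closed to obtain the graph query.

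I would build $A_{\pattern}$ by structural induction on $\pattern$, maintaining a register layout that assigns one register to each node variable currently in scope. Atomic node and edge patterns give one- or two-state automata; union, concatenation, and repetition $\pattern^{n..m}$ use the standard regular-expression-to-NFA constructions while threading the registers through. For a conditioned pattern $\pattern_{\langle \theta \rangle}$, I append a zero-length data transition whose precondition encodes $\theta$ as a Boolean combination of register equalities/inequalities. The \gpc typing rule that variables used inside a repetition may not occur outside it is what lets me safely reset those registers at every iteration without affecting the semantics.

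Next, I would translate each basic query $\restrictor\ \pattern$ into an \edn formula. Letting $x,y$ be the endpoint node variables of $\pattern$, the existence of a matching path is captured by $(x,\pi,y) \wedge \pi \in A_{\pattern}$. For $\restrictor = \simple$ I conjoin $\neg\, \pi \in A_{\repeatm}$, using the \ra from Example~\ref{ex:edn_Hamiltonian}. For $\restrictor = \shortest$ I introduce a binary \ra $A_{<}$ that, on the convolution of two data paths, accepts when the first is strictly shorter than the second; this is detectable because along the convolution the shorter input produces the padding symbol $\sharp$ while the longer one still carries a real symbol. Then $\shortest\ \pattern$ becomes
\[
\exists \pi.\, (x,\pi,y) \wedge \pi \in A_{\pattern} \wedge \neg \exists \omega.\bigl( (x,\omega,y) \wedge \omega \in A_{\pattern} \wedge (\omega,\pi) \in A_{<} \bigr),
\]
and $\shortestsimple$ is handled by additionally ruling out repeats on both $\pi$ and $\omega$ inside the universal closure. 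A joined query $Q_1, Q_2$ is translated by conjunction of the respective subtranslations, using the \edn atom $\pi = \omega$ to enforce identity of any shared path variables, and the promotion to a graph query is an outermost existential closure over all remaining free path and node variables.

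The main obstacle is the $\shortest$ restrictor, since expressing minimality requires combining negation over paths with an arithmetic length comparison between two paths — precisely the combination that binary \ra's plus unrestricted negation in \edn deliver, via the padding symbol $\sharp$ in the convolution. A secondary, bookkeeping-heavy point is verifying that the register layout chosen for $A_{\pattern}$ remains consistent under inductive composition and faithfully tracks variable scoping through nested repetitions; the \gpc type restriction preventing variable leakage out of a repetition is what makes this manageable and allows registers to be reused across iterations without corrupting equality tests elsewhere in the pattern.
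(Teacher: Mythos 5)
Your proposal is correct and follows essentially the same route as the paper: compile each pattern into a unary \ra by structural induction (registers tracking node variables, unification at concatenation, resets inside repetitions justified by the typing rule, conditions folded into transition preconditions), express the restrictors by auxiliary automata, and handle joins and the graph-query closure by conjunction, shared path variables, and outer existential quantification. The only divergence is that the paper encodes the restrictors with universally quantified automaton atoms (e.g.\ a ternary $A_{\simple}$) so that the translation lands in the fragment \uedn used later for the complexity discussion, whereas your negated-existential encodings of \simple and \shortest establish only membership in full \edn, which is all the theorem as stated requires.
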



\begin{proof}
Let $Q$ be a \gpc query of the form $\restrictor\ \pattern$. We will show that there exists a 1-ary \uedn formula $\phi_Q(\pi)$ equivalent to $Q$. That is, for every data graph $G$ and path $\rho$ in $G$, $G \models \phi_Q(\rho)$ if and only if $\rho$ is in the semantics of $Q$.

This statement follows directly from the following claim:


\subparagraph*{Claim (S)} There is a 1-ary \ra $A_{\pattern}$ recognizing the data paths $\tdp{\rho}$ corresponding to the paths $\rho$ satisfying the pattern $\pattern$. That is, for every path $\rho$, $\tdp{\rho}$ is accepted by $A_{\pattern}$ if and only if there exists a mapping $\mu$ such that $(\rho,\mu) \in \semantics{\pattern}{G}$.

\medskip

Before justifying why Claim (S) holds, we first explain how it leads to the proposition.  
Suppose (S) is true. Then, we can construct the \uedn formula $\phi_Q$ as required in the proposition. There are three cases to consider: $\restrictor = \shortest, \simple,\shortestsimple$.   
Assuming the existence of the \ra $A_{\pattern}$ in (S), consider the case where $\restrictor = \shortest$. The required \uedn formula $\phi_Q(\pi)$ in this case is:
\[
\phi_{\shortest}(\pi) := \pi \in A_{\pattern} \wedge \forall \omega. (\omega \in A_{\pattern} \wedge A_{same}(\pi, \omega) \rightarrow (\pi, \omega) \in A_{\leq},
\]
where $(\pi, \omega) \in A_{\leq}$ indicates that the length of $\pi$ is not greater than the length of $\omega$, while $A_{same}$ holds of a pair if they have the same endpoints. These two relations are clearly \ra definable: in the case of $A_{same}$, we use the fact that our attributes include an identity data value.
Further, the conjunct $\pi \in A_{\pattern}$ can be moved inside the universal quantification.

Consider $\restrictor = \simple$. A path $\rho$ is simple if for every two prefixes $\tau, \nu$ of $\rho$ with $|\tau| \neq |\nu|$, their last nodes have different ids. Using closure of $\ra$ under union, we can create a ternary \ra $A_{\simple}$ with one register to recognize the set of triples $(\tdp{\rho}, \tdp{\tau}, \tdp{\nu})$, such that:

\medskip

If
$\tau$ and $\nu$ are prefixes of $\rho$, and $|\tau| \neq |\nu|$, then
 the id of the last node in $\tau$ is different from the id of the last node in $\nu$.

\medskip

Consequently, a path is simple if and only if it satisfies the following \uedn formula:
\[
\phi_{\simple}(\pi) := \forall \omega, \theta. (\pi, \omega, \theta) \in A_{\simple}.
\]
Accordingly, the \uedn $\phi_Q(\pi)$ is:
\[\phi_{\simple}(\pi) \wedge \pi\in A_{\pattern}\]
assuming the \ra $A_{\pattern}$ from (S).

 Finally, for the case $\restrictor = \shortestsimple$, the \uedn formula $\phi_Q$ can be derived in a similar manner.
 Specifically, $\phi_Q(\pi)$ in this case is:
 \[\begin{aligned}
&\phi_{\shortestsimple}(\pi):= \forall \pi', \omega, \theta. \ \   \big( (\pi, \omega, \theta) \in A_{\simple} \wedge \pi \in A_{\pattern} \big) \ \wedge \\
& \big( (\pi' \in A_{\pattern} \wedge (\pi',\pi)\in A_< \wedge (\pi, \pi') \in A_{same}) \rightarrow \pi' \in A_{\nonsimple} \big),
\end{aligned}\]
where $(\pi', \pi) \in A_<$ indicates that $\pi'$ is strictly shorter than $\pi$, $(\pi, \pi') \in A_{same}$ indicates they have the same endpoints, while $\pi' \in A_{\nonsimple}$ expresses that $\pi'$ contains at least one repeated node id. Note that while simplicity is not \ra-definable, its complement $A_{\nonsimple}$ is recognizable by a 1-register \ra that non-deterministically stores a node id and verifies its recurrence.
Thus, (S) implies the proposition.

\bigskip

Each pattern $\pattern$ is recursively constructed from atomic patterns. If no node patterns of the form $(x)$ appear in $\pattern$, then $\pattern$ is simply a regular expression, and the construction process for $A_{\pattern}$, as stated in (S), is identical to the process of transforming a regular expression into a corresponding finite automaton. 
A key distinction arises when node variables $x$ are present in $\pattern$.


\bigskip

Now, we proceed to prove (S). For a given pattern $\pattern$, it is either an atomic pattern, a union or a concatenation of two sub-patterns, a repetition pattern, or a conditioning. We prove (S) by structural induction on patterns:

\paragraph*{Node patterns}
\begin{itemize}
\item If $\pattern = ()$, then $A_{\pattern}$ accepts all data paths of length zero. Specifically, $A_{\pattern}$ has two states $s_0$ and $s_1$, where $s_0$ is the initial state and $s_1$ is a final state. The automaton has no registers. On a data path $d_0 a_1 d_1 \dots a_n d_n$, it switches from state $s_0$ to $s_1$ when reading the data value $d_0$. Thus, the data path is accepted by $A_{\pattern}$ if and only if $n = 0$. Note that $A_{\pattern}$ is not required to be {\em complete}, meaning that it does not necessarily have a transition for every possible input from each state.

Obviously, in this case, for every path $\rho$, we have $(\rho,\emptyset) \in \semantics{\pattern}{G}$ if and only if $\tdp{\rho}$ is accepted by $A_{\pattern}$.

\item If $\pattern = (x)$, then $A_{\pattern}$ accepts all data paths $d$ of length zero, where $d = (d_{\idm}, d_{\datam}) \in \mathcal{D}$, and the automaton indicates that registers $r^{\idm}_x$ and $r^{\datam}_x$ are written with values $d_{\idm}$ and $d_{\datam}$, respectively. $A_{\pattern}$ is obtained by modifying the automaton from the previous case to store the data value $d_0$ in $r^{\idm}_x$ and $r^{\datam}_x$ during the transition from $s_0$ to $s_1$.

In this case, for every path $\rho$, if $(\rho,\mu) \in \semantics{\pattern}{G}$, then $\tdp{\rho} = d$ for some $d = (d_{\idm}, d_{\datam}) \in \mathcal{D}$. Furthermore, $(\rho,\mu) \in \semantics{\pattern}{G}$ with $\mu(x) = d_{\datam}$ if and only if there exists an accepting run $\gamma$ of $A_{\pattern}$ on the data path $\tdp{\rho}$ such that the value of register $r^{\datam}_x$ becomes $d_{\datam}$ along $\gamma$.
\end{itemize}

\paragraph*{Edge patterns}
\begin{itemize}
\item For $\pattern = \rightarrow$, $A_{\pattern}$ accepts all data paths of length one. Specifically, $A_{\pattern}$ has four states $s_0, s_1, s_2, s_3$, where $s_0$ is the initial state and $s_3$ is a final state. For a data path $d_0 a_1 d_1 \dots a_m d_m$, the transitions are as follows:
  \begin{enumerate}
  \item Transition from $s_0$ to $s_1$ on reading $d_0$.
  \item Transition from $s_1$ to $s_2$ on reading any symbol $a \in \Sigma$.
  \item Transition from $s_2$ to $s_3$ on reading $d_1$.
  \end{enumerate}
  Accordingly, $A_{\pattern}$ accepts the data path if and only if $m = 1$.

\item For $\pattern = \xrightarrow{a}$, $A_{\pattern}$ accepts all data paths $d_0 a_1 d_1$ of length one, where $a_1 = a$. The automaton for this case is obtained by modifying the transitions of $A_{\pattern}$ for $\pattern = \rightarrow$. Specifically, all transitions from $s_1$ to $s_2$ on symbols $b \neq a$ are removed, leaving only the transition on $a$.
\end{itemize}
In the case of edge patterns, for every path $\rho$, we have $(\rho,\emptyset) \in \semantics{\pattern}{G}$ if and only if $\tdp{\rho}$ is accepted by $A_{\pattern}$.

\paragraph*{Union}
Let's assume $\pattern = \pattern_1 + \pattern_2$, and we have the corresponding register automaton $A_{\pattern_1}$ and $A_{\pattern_2}$ for $\pattern_1$ and $\pattern_2$ respectively. The automaton $A_{\pattern}$ results from the disjunction of $A_{\pattern_1}$ and $A_{\pattern_2}$. Since register automata are closed under disjunction, the disjunction of $A_{\pattern_1}$ and $A_{\pattern_2}$ is well-defined. 

For every path $\rho$, if $\gamma$ is an accepting run of $A_{\pattern}$ on the data path $\tdp{\rho}$, we define $\mu$ as the mapping that assigns a free variable $x$ to $v$ if the registers corresponding to $x$ hold the value $v$, and to the null value $\nullnode$ if those registers remain unwritten along $\gamma$. 

Obviously, if $\gamma$ is an accepting run of $A_{\pattern}$ on $\tdp{\rho}$, then $\gamma$ is also an accepting run of $A_{\pattern_1}$ or $A_{\pattern_2}$ on $\tdp{\rho}$. Accordingly, if $(\rho,\mu_1) \in \semantics{\pattern_1}{G}$ or $(\rho,\mu_2) \in \semantics{\pattern_2}{G}$, where the domain of $\mu_i$ is $\fv{\pattern_i}$ and $\mu_i(x) = \mu(x)$ for all $x \in \fv{\pattern_i}$, for $i=1,2$, then it follows that $(\rho,\mu) \in \semantics{\pattern}{G}$ and $\mu=\mu_1 \cup \mu_2$.

\paragraph*{Concatenation}
Let's assume $\pattern = \pattern_1 \pattern_2$, and we have the corresponding register automaton $A_{\pattern_1}$ and $A_{\pattern_2}$ for $\pattern_1$ and $\pattern_2$, respectively. Without loss of generality, we assume that the final state of $A_{\pattern_1}$ is unique and has no outgoing transitions. The automaton $A_{\pattern}$ is obtained by concatenating $A_{\pattern_1}$ and $A_{\pattern_2}$. Specifically, this concatenation is achieved by introducing transitions defined as follows: suppose $r \xrightarrow{E,I,U} s$ is a transition of $A_{\pattern_1}$, where $s$ is the final state, and $r' \xrightarrow{E',I',U'} s'$ is a transition of $A_{\pattern_2}$, where $r'$ is the initial state. Then, $r \xrightarrow{E,I,U} s'$ is a transition of $A_{\pattern}$. 

Furthermore, we must modify the use of registers corresponding to variables $x$ that are present in both $\pattern_1$ and $\pattern_2$. Before explaining this modification, we note that no variable $x$ can appear in a repetition subpattern of $\pattern_1$ or $\pattern_2$, by the syntactic requirements of \gpc. Thus, if a variable $x$ occurs in both subpatterns $\pattern_1$ and $\pattern_2$, it must be free in both.
 
We will only modify the register behavior in $A_{\pattern_2}$, leaving the behavior in $A_{\pattern_1}$ unchanged. Informally, the modification will be to ensure the following property: 

\medskip

In the copy of $A_{\pattern_2}$, if we reach a state where
 the value of $x$ is already set to be $x_0$ and we are to make a transition where $x$ is to be written again, we only allow this if the data value in the input matches $x_0$. Otherwise the behavior is the same as in $A_{\pattern_2}$.

 \medskip
 
 One way to enforce this is to add a component of the state tracking whether each shared variable is written or not. It is straightforward with this additional state to ensure the property above.

If $\gamma=\gamma_1\cdot \gamma_2$ is an accepting run of $A_{\pattern}$ on $\tdp{\rho}$, then $\gamma_1$ and $\gamma_2$ are accepting runs of $A_{\pattern_1}$ and $A_{\pattern_2}$ on $\tdp{\rho_1}$ and $\tdp{\rho_2}$, respectively. Accordingly, if $(\rho,\mu_1) \in \semantics{\pattern_1}{G}$ and $(\rho,\mu_2) \in \semantics{\pattern_2}{G}$, where the domain of $\mu_i$ is $\fv{\pattern_i}$ and $\mu_i(x) = \mu(x)$ for all $x \in \fv{\pattern_i}$, for $i=1,2$, then it follows that $(\rho,\mu) \in \semantics{\pattern}{G}$ and $\mu=\mu_1 \cup \mu_2$.

On the other hand, if $(\rho_1,\mu_1) \in \semantics{\pattern_1}{G}$ and $(\rho_2,\mu_2) \in \semantics{\pattern_2}{G}$ with $\mu_1(x) = \mu_2(x)$ for all $x \in \fv{\pattern_1} \cap \fv{\pattern_2}$, and the last node of $\rho_1$ is identical to the first node of $\rho_2$, then there exist accepting runs $\gamma_1$ and $\gamma_2$ of $A_{\pattern_1}$ and $A_{\pattern_2}$ on $\tdp{\rho_1}$ and $\tdp{\rho_2}$, respectively, and $\tdp{\rho}=\tdp{\rho_1}\cdot\tdp{\rho_2}$ is a valid data path. Hence, $\gamma=\gamma_1\cdot \gamma_2$ is an accepting run of $A_{\pattern}$ on $\tdp{\rho}$.


\paragraph*{Repetition}
Given a pattern $\subpattern$, $\subpattern^{n \dots m}$ is also a pattern, where $n \in \mathbb{N}$ and $m \in \mathbb{N}\cup \{\infty\}$. This notation signifies that pattern $\subpattern$ is repeated between $n$ and $m$ times.
When both $n$ and $m$ belong to $\mathbb{N}$, then $\subpattern^{n \dots m}$ is equivalent to $\subpattern_1^{n \dots n} + \ldots + \subpattern_m^{m \dots m}$, where $\subpattern_i$ is a copy of $\subpattern$ with fresh node variables for $i\in [1,m]$.
Similarly, pattern $\subpattern^{n \dots \infty}$ is equivalent to $\subpattern_1^{n \dots n}  (() +\subpattern_2^{1 \dots \infty})$, where $\subpattern_i$ is a copy of $\subpattern$ with fresh node variables for $i=1,2$. Recall that $\pattern^{0..0}$ is equivalent to node pattern $()$ for all patterns $\pattern$. Therefore, we only need to consider the cases $\pattern=\subpattern^{n \dots n}$ and $\pattern=\subpattern^{1 \dots \infty}$.

Let $A_{\subpattern}$ be the \ra corresponding to $\subpattern$.
If $\pattern=\subpattern^{n \dots n}$, then $A_{\pattern}$ results from concatenating $A_{\subpattern}$ with $n-1$ many of its duplicates.
Note that although the semantics of concatenating $\subpattern$ with itself $n$ times and the repetition $\subpattern^{n \dots n}$ may appear similar, they are different. For instance, if $\subpattern$ is $() \xrightarrow{a} (x) \rightarrow (x) \rightarrow ()$, then the semantics of $\pattern_c = \subpattern\subpattern$ is the set of paths $v_0 \pathsep a_1 \pathsep v_1 \dots a_6 \pathsep v_6$ of length six, where $a_1 = a_4 = a$ and $v_1 = v_2 = \dots = v_5$. However, the semantics of $\pattern_r = \subpattern^{2 \dots 2}$ is the set of data paths $v'_0 \pathsep a'_1 \pathsep v'_1 \dots a'_6 \pathsep v'_6$ of length six, where $a'_1 = a'_4 = a$, $v'_1 = v'_2$, and $v'_4 = v'_5$.
Therefore, constructing $A_{\pattern}$ for $\pattern = \subpattern^{n \dots n}$ is analogous to concatenating $\subpattern$ with itself $n$ times, except that register unification is unnecessary. This construction is achieved simply by assigning each copy of $A_\subpattern$ its own independent set of registers.

Suppose $\pattern=\subpattern^{1 \dots \infty}$. Without loss of generality, we assume that there are no outgoing transitions starting from final states of $A_{\subpattern}$. In this case, $A_{\pattern}$ is derived from $A_{\subpattern}$, where: (1) for every transition $\delta$ of $A_{\subpattern}$ from state $s_1$ to state $s_2$, if state $s_2$ is final, then replacing $s_2$ with an initial state $s_0$ gives rise to a new transition $\delta'$, which is also part of $A_{\pattern}$; (2) all final states of $A_{\subpattern}$ are transformed into initial states of $A_{\pattern}$. Moreover, during the application of transition $\delta'$, all register values are set to $\sharp$.

Analogous to the union case, for every path $\rho$, if $\gamma$ is an accepting run of $A_{\pattern}$ on the data path $\tdp{\rho}$, then $(\rho,\mu) \in \semantics{\pattern}{G}$, where $\mu$ is derived from $\gamma$ as follows: the mapping $\mu$ assigns a free variable $x$ to $v$ if the registers corresponding to $x$ hold the value $v$, and to the null value $\nullnode$ if those registers remain unwritten along $\gamma$.

\paragraph*{Conditioning}
Suppose $\pattern = \subpattern_{\langle \theta \rangle}$, where $\theta = x \eqdata y$ for some node variables $x$ and $y$, and $A_{\subpattern}$ is the automaton for $\subpattern$. Without loss of generality, we assume there are no outgoing transitions starting from final states of $A_{\subpattern}$.

$A_{\pattern}$ is identical to $A_{\subpattern}$ except for those transitions ending in final state $s_f$. Suppose $\delta= s \xrightarrow{\precondt, \updatet} s_f$ is a transition of $A_{\subpattern}$. We create a new transition $\delta'$ which behaves as $\delta$ except that the precondition $\precondt$ is conjoined with the condition that the value of $r^{\datam}_x$ is equal to the value of $r^{\datam}_y$. In the case of $\theta = \theta_1 \wedge \theta_2$, $\theta_1 \vee \theta_2$, or $\neg \theta_1$, the construction of $A_{\pattern}$ follows a similar process.

\bigskip

This completes the inductive construction of $A_{\pattern}$. Moreover, we have the structural inductive invariant on patterns that for every path $\rho$, $\tdp{\rho}$ is accepted by $A_{\pattern}$ if and only if there exists a mapping $\mu$ such that $(\rho, \mu) \in \semantics{\pattern}{G}$.
Thus, (S) holds, and we conclude that for every \gpc query $Q$ of the form $\restrictor\ \pattern$, there exists a 1-ary \uedn formula $\phi_Q(\pi)$ equivalent to $Q$.

Thus we have completed the proof of (S).
As a result, we derive Theorem~\ref{thm:ednsubsumesgpc}.
\end{proof}

\subsection{The data complexity of \uedn}\label{sec:uedn_complexity}

In the previous subsection, we introduced a fragment \uedn of\linebreak \edn, which has no elementary bound on the data complexity, and showed that \gpc is subsumed by \uedn. Additionally, we asserted that \uedn has significantly lower data complexity than \edn, within \pspace in terms of the size of the data graph $G$ w.r.t. the standard encoding, which is basically polynomial in the number of the nodes and the number of edge labels. From this, we could conclude that \uedn is strictly less expressive than {\edn}. 
We will now substantiate this data complexity claim.

Each quantifier-free \edn formula $\psi$  is a Boolean combination of atoms of the form $\bar{\pi} \in A$ or $\bar{\pi} \not\in A$.
To check that a universally-quantified sentence does not hold on a given graph, we can negate it as an existentially-quantified Boolean combination, and then guess a conjunction of atoms of the above form that witness satisfiability.

Thus to show that \uedn query evaluation is in \pspace, it suffices to give a \pspace algorithm for checking satisfiability of
formulas:
\begin{equation} \label{eq:normal_form}
\exists \bar{\omega}.  \bigwedge_i  \bar{\omega} \in A_i \wedge \bigwedge_j \neg \bar{\omega} \in B_j.  \tag{$\dagger$}
\end{equation}
Since we are concerned with data complexity, the sizes of the \uedn formulas are treated as constants. Specifically, the number of variables, the number of {\ra}s $A_i$ and $B_j$, the number of conjunctions, and the sizes of $A_i$ and $B_j$ are all considered constant, where the size of an {\ra} refers to the number of states, transitions, and registers.

\begin{proposition} \label{prop:pspace}
Fix $A_i,B_i$, $i \leq n$,  $m$-ary {\ra}s and let $\phi:=\exists \bar{\omega}. \bigwedge_i \bar{\omega} \in A_i \wedge \bigwedge_j  \bar{\omega} \not\in B_j$.
There is an algorithm that checks if  $G \models \phi$ running in {\pspace} in the size of $G$.
\end{proposition}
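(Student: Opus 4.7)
The plan is to give an NPSPACE algorithm and invoke Savitch's theorem to obtain \pspace. The key structural observation is that, since each automaton is fixed, every \ra in $\phi$ has a constant number $k$ of registers; hence when such an \ra runs on paths whose data values come from $G$, the space of configurations (state, register valuation in $(V \cup \{\sharp\})^k$) has size polynomial in $|G|$. This bound fails in combined complexity but is the crucial leverage in data complexity.

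First, I would fold the positive conjuncts into a single automaton $A := A_1 \times \cdots \times A_n$ by the standard product construction; the result is still constant size and $\bar{\omega}\in A$ iff $\bar{\omega}\in A_i$ for all $i$. The difficulty is with the negated conjuncts: register automata are \emph{not} closed under complementation in general, so we cannot simply replace each $B_j$ by a constant-size complement and fold it into $A$. Instead, I would handle each $B_j$ by on-the-fly subset construction over its configuration space \emph{restricted to values occurring in $G$}. A subset $S_j$ of this polynomial-size space can be stored as a bit vector of polynomial length, and updated deterministically when the next convolution symbol is read.

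The algorithm then proceeds as follows. Nondeterministically guess the tuple $\bar{\omega} = (\omega_1,\ldots,\omega_m)$ one convolution letter at a time, maintaining in polynomial space: (i) the current endpoint in $V$ of each $\omega_i$ together with a flag recording whether $\omega_i$ has already been finished (in which case the corresponding convolution component is fixed to $\sharp$ from now on), (ii) a guessed current configuration of $A$, and (iii) the subsets $S_j$ for each $B_j$. At each step, check that the chosen extension of each active $\omega_i$ corresponds to a real edge of $G$, nondeterministically fire a transition of $A$ compatible with the read letter, and deterministically update each $S_j$ by the image of its subset under the $B_j$-transition relation on that letter. Accept when all $\omega_i$ are marked finished, the current state of $A$ is final, and each $S_j$ contains no final configuration of $B_j$.

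Each simulation state occupies polynomial space; thus the whole configuration space of the simulation is at most exponential, and if some witness exists at all, some witness of at most exponential length exists, so the nondeterministic search needs no explicit length bound beyond the implicit one. This puts satisfiability of the normal-form sentence (\ref{eq:normal_form}) into NPSPACE, and by Savitch's theorem into \pspace, as required. The one subtlety worth double-checking in the write-up is the bookkeeping for finished paths versus active ones in the convolution, so that both the guess for $A$ and the subset updates for each $B_j$ read exactly the padded convolution letter and not something else.
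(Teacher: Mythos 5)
Your proof is correct and rests on the same key observation as the paper's: since the automata are fixed, their behaviour on paths of $G$ is captured by finite-state objects of size polynomial in $|G|$ (the paper converts each \ra in \ptime into an NFA over the data values occurring in $G$ that accepts exactly the accepted data paths through $G$, and then invokes the standard \pspace bound for nonemptiness of a Boolean combination of NFAs). Your on-the-fly guessing of the convolution, with a subset construction over the polynomial configuration space for the negated automata and an appeal to Savitch's theorem, is precisely the standard argument underlying that cited bound, so the two proofs coincide in substance.
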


\begin{proof}
For simplicity, we will assume $m=1$.  
Let $\mathcal{D}_G$ be the set of data values appearing in $G$ and $\Sigma_G = \mathcal{D}_G \cup \Sigma$, a set that is polynomial in $G$ for fixed $\Sigma$.
For any {\ra} $A$ there is a \ptime algorithm inputting $G$ and producing an NFA over $\Sigma_G$ that accepts exactly the data paths through $G$ that are accepted by $A$.
Intuitively, we can consider $G$ as an automaton, and the resulting automaton is the product automaton of $A$ and $G$, which can be constructed in quadratic time.
Applying this to each $A_i$ and $B_j$, we have reduced to checking whether a Boolean combination of NFA is non-empty. Since the size of the product automaton for a $B_j$ and $G$ is polynomial in the size of $G$, checking the emptiness of the complement of the product automaton can be done in \pspace with respect to the size of $G$.
\end{proof}

Combining the reduction to satisfiability of sentences of the form (Eq.~\ref{eq:normal_form}), we have shown that {\uedn} has much lower complexity than {\edn}:

\begin{corollary}
The data complexity of  \uedn query evaluation is in {\pspace}.
\end{corollary}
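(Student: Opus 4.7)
The plan is to reduce \uedn model checking to satisfiability-checking of formulas in the normal form (Eq.~\ref{eq:normal_form}) and then invoke Proposition~\ref{prop:pspace}. Since we are measuring data complexity, the \uedn sentence $\forall \bar{\omega}.\,\psi(\bar{\omega})$ is fixed, and only the graph $G$ varies, so any preprocessing at the level of the formula is free.

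First, I would observe that $G \models \forall \bar{\omega}.\,\psi(\bar{\omega})$ if and only if $G \not\models \exists \bar{\omega}.\,\neg \psi(\bar{\omega})$. Since $\psi$ is quantifier-free, so is $\neg \psi$, and $\neg \psi$ is a Boolean combination of atoms of the form $\bar{\omega} \in A$ (where $A$ is an \ra). As a preprocessing step, convert $\neg \psi$ into disjunctive normal form, yielding a disjunction
\[
\neg\psi \;\equiv\; \bigvee_{k=1}^{N}\Big( \bigwedge_{i \in I_k}\bar{\omega} \in A_i \;\wedge\; \bigwedge_{j \in J_k} \neg\, \bar{\omega} \in B_j \Big).
\]
The number $N$ of disjuncts and the automata occurring inside are determined solely by the fixed \uedn sentence, so they count as constants for the data complexity analysis.

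Next, checking $G \models \exists \bar{\omega}.\,\neg \psi$ reduces to checking, for at least one $k \in [N]$, whether the sentence
$\exists \bar{\omega}.\, \bigwedge_{i \in I_k}\bar{\omega} \in A_i \wedge \bigwedge_{j \in J_k} \neg\, \bar{\omega} \in B_j$
holds in $G$. This is precisely the form handled by Proposition~\ref{prop:pspace}, which gives a \pspace procedure (in $|G|$) per disjunct. Iterating over the $N$ fixed disjuncts and taking the disjunction still runs in \pspace, and finally complementing (to recover $G \models \forall \bar{\omega}.\,\psi$) preserves \pspace by Savitch's theorem / closure of \pspace under complementation.

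I do not anticipate a genuine obstacle: the work has already been done in Proposition~\ref{prop:pspace}; the only point to be careful about is that the DNF blow-up, which is potentially exponential, is applied to the fixed sentence and therefore irrelevant to data complexity. If one instead cared about combined complexity, this step would be the bottleneck, but for the corollary as stated it is harmless.
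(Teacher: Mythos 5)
Your proposal is correct and follows essentially the same route as the paper: negate the universal sentence, reduce (via a DNF/guessing step that is free for data complexity since the formula is fixed) to satisfiability of sentences of the form (Eq.~\ref{eq:normal_form}), apply Proposition~\ref{prop:pspace}, and use closure of \pspace under complementation. No gaps beyond the paper's own implicit assumptions.
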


\subsection{\rd to \fotcd}

To show the inclusion of \rd in \fotcd, for every \ra $A$ we will construct an \fotcd formula $\phi_A$ simulating the computation of $A$. Without loss of generality, assume $A$ is a unary \ra over $\Sigma=\{a,b\}$
and it has $n=2^t$ states for some positive integer $t$.
Similar to the translation of an $n$-ary NFA to an \fotc formula for arbitrary $n$, the translation from an $n$-ary \ra $A$ to $\phi_A$ follows a process analogous to the 1-ary case. First, let's explore a simplified scenario of using an \fotc formula to emulate the computation of an NFA.

Let $G$ be a graph with only ids as data, containing at least two nodes $v$ and $w$. Let $A$ be an NFA.  We can assume $Q=\{\bm{r}_1,\dots,\bm{r}_n\}$, where each $\bm{r}_i$ is a $t$-tuple over $\{v,w\}$. We do this by taking $t$ large enough so that the size of $Q$ is at most $2^t$.
Since $Q$ is finite, we can define first-order formulas $\phi^{\tran}_a$ and $\phi^{\tran}_b$ such that (i) $\phi^{\tran}_a(\bm{r}_i,\bm{r}_j)$ holds if and only if $(\bm{r}_i,a,\bm{r}_j)$ is a transition in $A$ for $i,j \in [n]$ and (ii) $\phi^{\tran}_b(\bm{r}_i,\bm{r}_j)$ holds if and only if $(\bm{r}_i,b,\bm{r}_j)$ is a transition in $A$ for $i,j \in [n]$.

For any path $\rho$ in $G$, with label $\lb{\rho}=a_1\dots a_m \in L(A)$, there exist sequences of states $p_0 \dots p_m$ and nodes $v_0 \dots v_m$ in $G$ such that $p_0$ is the initial state, $p_m$ is a final state, $(p_{i-1},a_i,p_i)$ is a transition in $A$, and $(v_{i-1},a_i,v_i)$ forms an edge in $G$ for $i\in [m]$. We can define:
\[\psi^{\tran}_?(\bm{z},\bm{z}'):= \phi^{\tran}_?(\bm{x},\bm{x}') \wedge E_?(y,y'),\]
where $\bm{z}=(\bm{x},y)$ and $\bm{z}'=(\bm{x}',y')$ have the same arity, for $?=a,b$ and 
\[\psi_{\tran} := \psi^{\tran}_a \vee \psi^{\tran}_b.\] 
Thus $\lb{\rho} \in L(A)$ if and only if $\psi^*_{\tran}(\bm{r},\bm{s})$ holds where $\bm{r}=(\bm{r}_i,r)$, $\bm{s}=(\bm{r}_j,s)$, $\bm{r}_i$ is the initial state, $\bm{r}_j$ is a final state, and $\rho$ goes from node $r$ to node $s$.

Now, let's consider the general case where $A$ is an \ra and $G$ is a data graph. Suppose that $A$ has registers $\{1, \dots, \ell\}$ and $u = d_1 \pathsep d_2 \dots d_m$ is a data path in $G$. The data path $u$ is accepted by $A$ if there exists a sequence of accepting configurations $c_0 \pathsep c_1 \dots c_m$ and a sequence of transitions $\delta_1 \dots \delta_m$ such that $c_{i-1} \xrightarrow{d_i}_{\delta_i} c_i$ for each $i \in [m]$.
Assume $X_{\idm}=\{1,\dots,\eta-1\}$ and $X_{\datam}=\{\eta,\dots,\ell\}$ for some index $\eta$. A configuration of $A$ takes the form $(q,r_1,\dots,r_\ell)$, where $q$ is an $A$-state, $r_i=\idm(v_j)$ if $i < \eta$, and $r_i=\nodedata(v_j)$ if $i\geq \eta$ for some $v_j$ in $G$. The value of $r_i$ is uniquely determined once $v_j$ is known. 
Thus $u$ is accepted by $A$ if and only if there exists a sequence $\alpha_0 \dots \alpha_m$, where $\alpha_i = (q_i,v^{(i)}_1,\dots,v^{(i)}_\ell)$ for some $A$-state $q_i$ and $G$-nodes $v^{(i)}_1,\dots,v^{(i)}_\ell$ such that $c_i = (q_i,\idm(v^{(i)}_1),\dots,\idm(v^{(i)}_{\eta-1}),\nodedata(v^{(i)}_\eta),\dots,\nodedata(v^{(i)}_\ell))$ for $i \in [m]$.

In the case of NFA emulation, we utilized tuples $\bm{r}$ of nodes to carry the state information of the automaton.  Likewise, for {\ra}s, we can also employ tuples of nodes to carry $\alpha_0, \dots, \alpha_m$ around.
Recall that there are two kinds of transitions in $A$: word transitions of the form $\delta_w = p \xrightarrow{a} q$ and data transitions of the form $\delta_d = r \xrightarrow{E,I,U} s$. When $\delta_w$ is applied, the content of the registers is not modified. Therefore the {\ra}-version formula $\psi^{\tran}_{\delta_w}(\bm{x}, y, \bm{z}, \bm{x}', y', \bm{z}')$ corresponding to $\delta_w$ can be defined as:
\begin{align*}
\phi^{\tran}_a(\bm{x}, \bm{x}') \wedge E_a(y, y') \wedge \left(\bigwedge_{1 \leq i < \eta} (z_i = z'_i)\right) \wedge \left(\bigwedge_{\eta \leq i \leq \ell} (z_i \eqid z'_i)\right).
\end{align*}
When $\delta_{\delta_d}$ is applied, the values of the registers are compared with the data value of a node and updated. We can define the {\ra}-version formula $\psi^{\tran}_d$ corresponding to $\delta_d$ as follows:
\[
\psi^{\tran}_{\delta_d}(\bm{x}, y, \bm{z}, \bm{x}', y', \bm{z}', w) := \phi^{\tran}_d(\bm{x}, \bm{x}') \wedge \phi_{E}(\bm{z}, w) \wedge \phi_I(\bm{z}, w) \wedge \phi_U(\bm{z}, \bm{z'}, w),
\]
where:
\begin{itemize}
\item $\phi^{\tran}_d$ is the formula such that $\phi^{\tran}_d(\bm{r}, \bm{s})$ holds if and only if $\delta_d=r \xrightarrow{E, I, U} s$ is a data transition from $r$ to $s$, and states $r, s$ correspond to tuples $\bm{r}, \bm{s} \in Q$, respectively.
\item \[
\phi_E(\bm{z}, w) := \left(\bigwedge_{x_i \in E_{\idm}} z_i = w\right) \wedge \left(\bigwedge_{x_i \in E_{\datam}} z_i \eqdata w\right)
\]
states that the id of node $w$ is identical to the values of registers in $E_{\idm}$ and the data value excluding the id of node $w$ is identical to the values of registers in $E_{\datam}$.
\item \[
\phi_I(\bm{z}, w) := \left(\bigwedge_{x_i \in I_{\idm}} \neg(z_i = w)\right) \wedge \left(\bigwedge_{x_i \in I_{\datam}} \neg(z_i \eqdata w)\right)
\]
states that the id of node $w$ is different from the values of registers in $I_{\idm}$ and the data value excluding the id of node $w$ is different from the values of registers in $I_{\datam}$.
\item \[
\phi_U(\bm{z}, \bm{z}', w) := \left(\bigwedge_{x_i \not\in U} z'_i = z_i\right) \wedge \left(\bigwedge_{x_i \in U} z'_i = w\right)
\]
states how the registers are updated.
\end{itemize}
Finally, let
\[
\psi_{\tran}(\bm{x},y,\bm{z},v,\bm{x}',y',\bm{z}',v'):= \bigvee_{\delta\in\Delta} \psi^{\tran}_{\delta} \wedge \phi_{\delta}(v,v'),
\]
where $\phi_\delta(v,v')$ holds if $\delta$ is a data transition with $(v,v')=(w,u)$, or if $\delta$ is a word transition with $(v,v')=(u,w)$. Thus, we have $\tdp{\rho} \in L(A)$ if and only if $\psi^*_{\tran}(\bm{x},y,\bm{z},w,\bm{x}',y',\bm{z}',u)$ holds, where $\bm{x}$ is the initial state, $\bm{x}'$ is a final state, $\bm{z}$ specifies that all registers hold the null value $\sharp$, and $\rho$ is a path from node $y$ to node $y'$. This completes the proof.

\subsection{Strictness of the inclusions in Figure~\ref{fig:existing}}

Since \gpcrd and \fotcr share the same expressive power \cite{LMV16}, \gpcrd is subsumed by full transitive closure logic, as shown in Figure~\ref{fig:existing}. The following result states that the subsumptions in the diagram are all strict.

\begin{theorem}\label{thm:prior}
Assuming \nlspace $\neq$ \np, no containment relations other than those depicted in the diagram are valid. Specifically, all the containment relations in the diagram are strict.
\end{theorem}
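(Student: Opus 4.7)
The overall plan is to treat this as a bookkeeping exercise on top of a small library of separating queries. The containments in Figure~\ref{fig:existing} are already established, so I need only (i) verify strictness of each depicted containment, and (ii) produce, for each pair $(L_1, L_2)$ of languages not connected by a directed path in the diagram, a witness query in $L_1 \setminus L_2$. Every required separation will be routed through one of a handful of canonical witnesses, combining data-complexity arguments that rely on $\nlspace \neq \np$ with targeted syntactic arguments about the fragments.

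Two witnesses establish the core incomparability of \edn and \fotcd. On one side, Example~\ref{ex:edn_Hamiltonian} places Hamiltonicity $\phi_H$ in \edn; since Hamiltonicity is \np-complete in data complexity and the data complexity of \fotcd is in \nlspace (standard for first-order logic with transitive closure), the hypothesis $\nlspace \neq \np$ delivers $\edn \not\subseteq \fotcd$. On the other side, property~P.\ref{prop:edn_not_contain_gxpath} supplies a \gpcrd query not expressible in \edn, which, by $\gpcrd \subseteq \fotcd$, witnesses $\fotcd \not\subseteq \edn$. These separations then propagate: the GXPath witness separates every sub-language of \edn from \fotcd (in that direction), and Hamiltonicity separates \edn from every sub-language of \fotcd.

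The remaining non-containments and strictness claims fall to three further witness families. First, $Q_{\even}$, expressible in \rd, is known not to lie in \wl \cite{Hellings:13:ICDT}; this separates \wl from every language containing \rd. Second, property~P.\ref{prop:sublanguage_incomparable} supplies the pairwise incomparabilities among \rd, \cdn, \gpc, and \wl that remain after the above. Third, strictness $\rd \subsetneq \cdn$ is witnessed by a conjunctive two-path query (two independent \rd atoms sharing a node), which cannot be expressed as a single \ra check over a single path as required by \rd; strictness $\cdn \subsetneq \edn$ is witnessed by a binary-automaton query such as ``there exist paths $\pi_1, \pi_2$ with $(\pi_1,\pi_2) \in A$'' where $A$ enforces pointwise data-value correspondence between $\pi_1$ and $\pi_2$, a relation no unary \ra can express. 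Finally, $\gpcrd \subsetneq \fotcd$ follows from the three-free-variables restriction in the \fotcr fragment equivalent to \gpcrd \cite{LMV16}: any definable TC query whose generating formula has strictly more free variables escapes \fotcr but lives in \fotcd.

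The main obstacle I anticipate is the cluster of \gpc-related separations, in particular $\gpc \subsetneq \edn$ and $\rd \not\subseteq \gpc$. Data complexity alone is insufficient here, since $\np \subseteq \pspace$ and \gpc already has \pspace data complexity, so NP-hardness does not on its own rule out \gpc-expressibility. As foreshadowed in the introduction, the plan is to invoke a variation of the Ehrenfeucht--Fraïssé game tailored to \gpc: identify a target query whose \gpc-expression would demand a structural feature the pattern grammar lacks (for example a ``visit-all-nodes'' or arbitrarily-nested-repetition construct), then exhibit pairs of data graphs that are \gpc-indistinguishable at the relevant depth yet disagree on the target query. This EF-style analysis of \gpc is expected to be the most technically delicate piece of the proof.
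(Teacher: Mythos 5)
There is a genuine gap, and it sits exactly where the theorem's real content lies. You treat P.\ref{prop:sublanguage_incomparable} and P.\ref{prop:edn_not_contain_gxpath} as supplied facts, but these properties \emph{are} the proof obligations of the theorem, not prior results. Most critically, the direction \fotcd $\not\subseteq$ \edn (equivalently the \gpcrd witness of P.\ref{prop:edn_not_contain_gxpath}) cannot be obtained from any complexity argument, since \edn has far higher data complexity than \fotcd, and your proposal contains no argument for it at all. The missing idea is the one the paper uses: take the reachability query \dataconnection of Example~\ref{ex:reachability}; over that family of encodings every path has length at most $2$, so any \edn formula collapses to plain first-order logic (path variables become tuples of node variables), and connectivity is not FO-expressible. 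Without this step the incomparability of \edn and \fotcd, and the non-subsumption of \gpcrd by \edn, \cdn, \gpc, \wl, are unsupported. Your appeal to P.\ref{prop:sublanguage_incomparable} hides further unproved pieces: \wl $\not\subseteq$ \fotcd, \gpc, \cdn needs the non-elementary data-complexity lower bound for \wl; and \gpc $\not\subseteq$ \fotcd, \cdn $\not\subseteq$ \fotcd need \np-hardness of \gpc and of \cdn \emph{specifically} --- your Hamiltonicity witness $\phi_H$ of Example~\ref{ex:edn_Hamiltonian} uses universal path quantification and a binary automaton, so it lives in \edn, not in \cdn or \gpc (the paper proves \np-hardness of \cdn by a separate layered-graph reduction from Hamiltonian path). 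Also, $Q_{\even}\in$ \rd only separates \wl from languages containing \rd, which \gpc does not; for \gpc one uses that $Q_{\even}$ is directly a regular pattern.

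The second gap is \rd $\not\subseteq$ \gpc (Theorem~\ref{thm:GPC_not_contain_RDPQ}): you correctly identify it as the delicate point but leave only a plan (``EF-style analysis \dots expected to be the most technically delicate piece''), so \cdn $\not\subseteq$ \gpc and the \gpc/\rd incomparability remain unestablished. The paper's argument is concrete: express $Q_{\threeval}$ in \rd, normalize a hypothetical \gpc pattern, locate a Kleene-star component matching a long segment of a chain with three alternating values, build isomorphic copies $H_1,H_2,H_3$ in which a fourth value occupies each residue class modulo $3$, and merge one of them into the star segment to get a chain with four values still matched by the pattern. Two of your shortcuts would also fail as stated: \gpcrd $\subsetneq$ \fotcd does not follow from the syntactic three-variable restriction of \fotcr (a formula with more free variables may still be semantically equivalent to one in the fragment); the paper instead uses $Q_{\textsc{4nodes}}$ together with the fact that \gpcrd cannot distinguish $K_3$ from $K_4$. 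Likewise your direct witnesses for \rd $\subsetneq$ \cdn and \cdn $\subsetneq$ \edn come with no inexpressibility proof, and the ``pointwise data correspondence'' query is satisfied by taking $\pi_1=\pi_2$, hence trivially true; these strictness claims are better derived, as the paper does, from the incomparability properties (e.g., \cdn $\not\subseteq$ \rd because \rd $\subseteq$ \fotcd while \cdn $\not\subseteq$ \fotcd).
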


Since the diagram includes numerous containments, for the sake of clarity, we will initially enumerate a set of properties that we intend to derive.
\begin{enumerate}[P.1]
\item\label{prop:sublanguage_incomparable} Every pair from the following set is incomparable: \gpc, \wl, \cdn, \fotcd under the assumption that \nlspace $\neq$ \np.

\item\label{prop:edn_not_contain_gxpath} \gpcrd is not subsumed by \edn.
\item\label{prop:languages_not_contain_rd} \rd is not subsumed by \gpcrd, \gpc, or \wl.
\end{enumerate}

The assumption \nlspace $\neq$ \np will be used only in the first item, and there it will be used to derive that \gpc and \cdn are not subsumed by {\fotcd} and {\gpc} is not subsumed by {\cdn}.
In the previous section, we showed that \rd, a query language based on regular languages, is subsumed by \fotcd. However, we cannot extend this result to \gpc, although {\gpc} is also inspired by regular languages. That is, \gpc is not subsumed by \fotcd, due to the usage of restrictors in \gpc. Specifically, it is unclear whether the notion of ``simple'' can be expressed in \fotcd.

These properties are sufficient to obtain the following, which implies that no other containments are derivable:
\begin{enumerate}[R.1]
\item \fotcd and \edn are incomparable.
\item \gpc, \wl, \cdn are strictly subsumed by \edn. 
\item \rd is strictly subsumed by \cdn and \fotcd. 
\item \gpc, \wl, \rd, and \gpcrd are incomparable.
\item \gpcrd is incomparable with \cdn and \edn.
\item \gpcrd is strictly subsumed by \fotcd.
\end{enumerate}

\begin{figure}[t]
\centering
\includegraphics[width=0.8\textwidth]{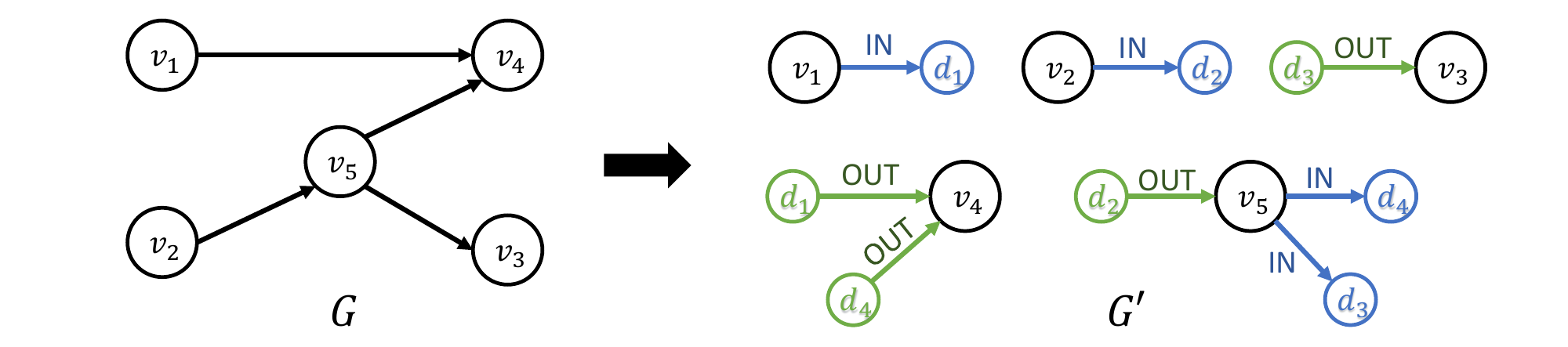}
\caption{Directed graph $G$ translated to data graph $G'$.}
\label{fig:Link}
\end{figure}

We start with P.\ref{prop:sublanguage_incomparable}.  
We begin by showing that {\fotcd} is not subsumed in any of the other languages considered in P.\ref{prop:sublanguage_incomparable}.
To give intuition for the separating example, consider an encoding of an ordinary graph $G$, with arbitrary connectivity, as a highly disconnected
data graph $G'$, where in $G'$ each node is linked to only two other nodes.
The translation is shown in Figure~\ref{fig:Link}. The edge connection between nodes $v_2$ and $v_4$ in $G$ is represented in $G'$ as an
$\outm$-labelled edge to a node carrying a certain data value, then an $\inm$-edge from another node with the same data value to $v_4$. 

Based on this, we create our separating example. We formulate an \fo formula $\datalink(x, y)$ such that a pair of nodes $v$ and $w$ in $G'$ satisfies $\datalink$ if there exists a node $s$ linked to $v$ through an $\outm$ edge and another node $f$ linked to $w$ through an $\inm$ edge, with the data value of $s$ equating the data value of $f$. For instance, $(v_1, v_4)$ satisfies $\datalink$, while $(v_3, v_5)$ does not.  
Accordingly, we can create an \fotcd formula $\dataconnection(x, y)$ representing the transitive closure of $\datalink$.  
Consequently, a pair of nodes $(v, w)$ in $G'$ satisfies $\dataconnection$ if $v$ and $w$ are connected in $G$.

We now argue that this formula cannot be expressed in the other languages mentioned in  P.\ref{prop:sublanguage_incomparable},
like {\edn}.  To see this, consider the family of graphs similar to $G'$ in Figure~\ref{fig:Link}, and note that path lengths are confined to $2$. Therefore, fixing a vocabulary with three relations $P_0, P_1, P_2$ over nodes -- where $P_0$, $P_1$, and $P_2$ are unary, binary, and ternary, respectively -- we can convert any \edn formula $f$ into a first-order logic formula without path variables. But connectivity cannot be expressed using first-order logic \cite{Gradel:92:CSL}. Consequently, \edn does not subsume \fotcd, and thus \fotcd is not subsumed by any of the other three languages.

It is known that there is no elementary bound on the data complexity for \wl, even for graphs with only ids as data \cite{BFL15}. But \fotcd, \gpc, and \cdn do have elementary bounds on data complexity, with  \fotcd having a {\nlspace} bound \cite{Immerman:88:SIAM}. Consequently, these languages cannot subsume \wl.

What remains in  P.\ref{prop:sublanguage_incomparable} is to show that \gpc and \cdn cannot be subsumed by  \wl, by \fotcd, or by each other. It is known \cite{MNP23}
that GPC can express NP-hard queries (in particular, $Q_{\even}$).
\begin{proposition}{(\cite{MNP23})}\label{prop:gpc_complexity}
 Query evaluation for \gpc (even without data) is $\np$-hard in data complexity.
 \end{proposition}
We now argue that
 \cdn are both \np-hard in data complexity. Since \fotcd is in \nlspace, this will imply that \fotcd cannot subsume either of these languages, assuming \np is not equal to {\nlspace}.

\begin{proposition}
Query evaluation for \cdn is \np-hard in terms of data complexity.
\end{proposition}
\begin{proof}
We reduce the Hamiltonian path problem to \cdn query evaluation.
Given a directed graph $G=(V,E)$ where $|V|=n_G$ and $V=[n_G]$, we let $G'=(V',E',\idm,\nodedata)$ be the data graph where
\begin{itemize}
\item $\Sigma=\{a,b\}$,
\item $V' = \bigcup_{i=1}^{n_G} V_i$ where $V_i = \{v + i\cdot n_G \in \mathbb{N} \mid v \in V\}$, 
\item $E'=\bigcup_{i=2}^{n_G} E_i$ where $E_i = \{(v+(i-1)\cdot n_G ,a_i,w+i\cdot n_G) \in V_{i-1}\times \Sigma \times V_i \mid (v,w) \in E\}$ and 
$a_i=a$ iff $i \in \{2, n_G\}$,
\item $\nodedata(v+i\cdot n_G) = v$ for each $v \in V$ and $1 \leq i \leq n_G$.
\end{itemize}
According to the construction of $G'$, we have that $G$ contains a Hamiltonian path if and only if there is a path $\pi$ in $G'$ such that the data path $\tdp{\pi} = d_1 \pathsep a_2 \pathsep d_2 \dots a_{n_G} \pathsep d_{n_G}$, $a_2\dots a_{n_G} \in ab^*a$, and $d_1\dots d_{n_G} \not\in L=\{e_1\dots e_m \in \mathbb{N}^* \mid e_i = e_j \textit{ for some } 1\leq i < j \leq m\}$. Regular languages are closed under complement, so there is a finite automaton recognizing the complement of $ab^*a$. It is known that $L$ is recognizable by a register automaton \cite{Kaminski:94:TCS}. Consequently, we have that there is an \ra $A$ such that the data path $\tdp{\pi} = d_1 \pathsep a_2 \pathsep d_2 \dots a_{n_G} \pathsep d_{n_G} \not\in L(A)$ if and only if $a_2\dots a_{n_G} \in ab^*a$ and $d_1\dots d_{n_G} \not\in L$.
\end{proof}

While \gpc and \cdn, which are based on regular languages, can express the query $Q_{\even}$, we know from \cite[Proposition 7]{Hellings:13:ICDT} that $Q_{\even}$ is not expressible in \wl for graphs with only ids as data.  Thus neither of these languages is subsumed by {\wl}.

To finish the derivation of Property P.\ref{prop:sublanguage_incomparable}, we need to show that neither of \cdn and \gpc subsumes the other. For one non-containment, we use that {\gpc} is NP-hard even without data (Proposition~\ref{prop:gpc_complexity}), while {\cdn} is in {\nlspace} without data \cite{Lin:12:DS}. For the other, we show that even \rd is not subsumed by {\gpc}: this is shown in the argument for Property P.\ref{prop:languages_not_contain_rd} below.

We turn to P.\ref{prop:edn_not_contain_gxpath}. Previously, we derived that \fotcd is not subsumed by \edn by showing that $\dataconnection$ cannot be expressed within \edn. This example can be used to show non-containment of \gpcrd within \edn. The languages \gpcrd and \fotcr are equivalent in expressiveness \cite{LMV16}. The language \\
\fotcr allows us to express $\dataconnection=\datalink^*$, where $\datalink(x,y)$ is 
\[
\exists z. (\outm(z,x) \wedge (\exists x. (\inm(x,y) \wedge x \eqdata z)))
\]
As a result, we have P.\ref{prop:edn_not_contain_gxpath}.

Next, we argue for  P.\ref{prop:languages_not_contain_rd}. Given that \rd is also based on regular languages, it can express the query $Q_{\even}$. Consequently, we deduce that \wl does not subsume \rd.
Let's define $Q_{\textsc{4nodes}}$ as the query: 
\begin{align*}
Q_{\textsc{4nodes}}:= & \textit{``Is there a path from node $s$ to node $t$}\\
& \textit{containing four distinct nodes?''.}
\end{align*} 
For any complete graph $G$ with two nodes $s$ and $t$, there exists a path from $s$ to $t$ containing four distinct nodes if and only if $G$ has at least four nodes.
It was established that no \gpcrd formula can differentiate between $K_3$ and $K_4$ \cite{LMV16}, where $K_n$ is the complete graph on $n$ nodes for all $n\in \mathbf{Z}_{>0}$. More specifically, there is no \gpcrd sentence $\phi$ such that $K_4 \models \phi$, but $K_3 \not\models \phi$.
Therefore, $Q_{\textsc{4nodes}}$ cannot be expressed using the language \gpcrd. However, the query $Q_{\textsc{4nodes}}$ is expressible using \rd. So we conclude that \gpcrd does not subsume \rd.

\bigskip

Lastly, to show that \rd is not subsumed by \gpc, we require the notion of \emph{normal form patterns} and their corresponding properties:

\begin{definition}
A \gpc pattern $\pattern$ is in \emph{normal form} if $\pattern = \pattern_1 + \dots + \pattern_n$, where each $\pattern_i$ is of the form $\{{\primepattern_1} \dots {\primepattern_m}\}_{\langle \theta \rangle}$ such that for every $j \in [m]$, $\primepattern_{j}$ is either:
\begin{itemize}
    \item an atomic pattern, or
    \item $\primepattern_{j}= \primepattern^* (= \primepattern^{0..\infty})$ for some pattern $\primepattern$.
\end{itemize}
\end{definition}

By applying the following two families of rules, we can normalize every \gpc pattern. The first family pushes filters out of compositions:

\begin{proposition}
$\pattern_{\langle \theta \rangle} \primepattern_{\langle \eta \rangle} \approx \{\pattern\primepattern\}_{\langle \theta \wedge \eta \rangle}$ 
if the pattern $\pattern_{\langle \theta \rangle} \primepattern_{\langle \eta\rangle}$ is valid.
\end{proposition}
\begin{proof}
Suppose $(\rho,\mu) \in \semantics{\pattern_{\langle \theta \rangle} \primepattern_{\langle \eta \rangle}}{G}$.  
Then, we have $(\rho,\mu) \in \semantics{\pattern \primepattern}{G}$, $\mu \models \theta$, and $\mu \models \eta$.  
These imply $(\rho,\mu) \in \semantics{\pattern \primepattern}{G}$ and $\mu \models \theta \wedge \eta$.  
Accordingly, we have $(\rho,\mu) \in \semantics{\{\pattern\primepattern\}_{\langle \theta \wedge \eta \rangle}}{G}$.  

Conversely, if $(\rho,\mu) \in \semantics{\{\pattern\primepattern\}_{\langle \theta \wedge \eta \rangle}}{G}$, then  
$(\rho,\mu) \in \semantics{\pattern \primepattern}{G}$ and $\mu \models \theta \wedge \eta$, which implies  
$\mu \models \theta$ and $\mu \models \eta$.  
Thus, we conclude $(\rho,\mu) \in \semantics{\pattern_{\langle \theta \rangle} \primepattern_{\langle \eta \rangle}}{G}$.
\end{proof}

A second family of normalization rules eliminates filters from Kleene star patterns:

\begin{proposition}
$\{\pattern^*\}_{\langle \theta \rangle} \approx \{\pattern^*\}_{\langle \top \rangle} \approx \pattern^*$ if $\{\pattern^*\}_{\langle \theta \rangle}$ is valid.
\end{proposition}
\begin{proof}
By the definition of \gpc patterns, we have $\fv{\pattern^*} = \emptyset$.  
Accordingly, if $\{\pattern^*\}_{\langle \theta \rangle}$ is valid, then $\theta$ must be equivalent to $\top$.
\end{proof}

Note also that $\pattern^{n...m}$ can be eliminated, since it can be seen as $\pattern^{n...n}+...+\pattern^{m...m}$, and each $\pattern^{i...i}$ is equivalent to the concatenation of $i$ copies of $\pattern$, with each copy using a fresh set of node variables. This allows us to consider only the case $\pattern^{0...\infty}$.


As a result, we have:

\begin{lemma}
Every \gpc pattern is equivalent to a normal form pattern.
\end{lemma}


We will also rely on the fact that patterns cannot distinguish between two graphs that are isomorphic:

\begin{proposition}\label{prop:isomorphic_graphs}
Let $\pattern$ be a \gpc pattern. For any two data graphs 
$G_1 = (V_1, E_1, \nodedata_1)$ and $G_2 = (V_2, E_2, \nodedata_2)$,  
if $V_1 = V_2$, $E_1 = E_2$, and  
$\nodedata_1(v_i) = \nodedata_1(v_j) \Leftrightarrow \nodedata_2(v_i) = \nodedata_2(v_j)$ for all $i, j$,  
and if $\mu_1, \mu_2$ are two mappings such that  
$\nodedata_1(\mu_1(x)) = \nodedata_2(\mu_2(x))$ for all $x \in \fv{\pattern}$,  
then  
$
(\rho_{G_1}, \mu_1) \in \semantics{\pattern}{G_1} \Leftrightarrow (\rho_{G_2}, \mu_2) \in \semantics{\pattern}{G_2}.
$
\end{proposition}

Now we are ready to show that \rd is not subsumed by \gpc.

\begin{theorem}\label{thm:GPC_not_contain_RDPQ}
\rd is not subsumed by \gpc.
\end{theorem}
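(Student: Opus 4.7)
The plan is to exhibit a query $Q^*$ in \rd\ and a family of data graph pairs $(G_n,H_n)$ such that $Q^*$ separates them for every $n$ but no \gpc\ query separates them once $n$ exceeds a threshold depending on the query's size. The candidate $Q^*$ is an \rd\ query that requires unbounded-range data comparison against a fixed anchor position, for instance
\[
Q^*(x,y) \;:=\; \exists \pi.\,(x,\pi,y) \wedge \pi \in A,
\]
where $A$ is the unary \ra\ that stores the data value of the first input node in its register and rejects if that value recurs at any later position. Membership in \rd\ is immediate. The intuitive obstruction to expressibility in \gpc\ is that the natural encoding would bind an ``outer'' variable $x$ at the start of the pattern and then, inside an unbounded repetition, evaluate a condition of the form $\neg\,x \eqdata y$ against each visited $y$---precisely the scoping pattern forbidden by the \gpc\ type system, which requires any node variable used inside a repetition subpattern to be absent from the enclosing context.

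To rule out indirect \gpc\ encodings (via joins, conditioning, or careful choice of restrictor), construct $G_n$ and $H_n$ from a common skeleton that differs only in whether a single far-away node carries the anchor data value of the start node. Arrange the skeleton so that the walks relevant to $Q^*$ are automatically simple and uniquely shortest, so that the \gpc\ restrictor reduces to the mere existence of a matching walk. Then run an Ehrenfeucht--Fraïssé-style game tailored to \gpc\ syntax: in each round Spoiler picks a subpattern together with a matching subpath and a partial variable assignment in one graph, and Duplicator must supply a matching subpath and assignment in the other that preserves every $\eqdata$-condition on currently bound variables. By induction on pattern structure, Duplicator maintains the invariant that the boundedly many outer-scope variables land on ``symmetric'' positions of $G_n$ and $H_n$; inside any repetition the type restriction forces conditions to compare only variables local to one iteration---variables that are blind to the anchor---and a pumping argument lets Duplicator shift its response by a bounded offset to keep matches clear of the critical position in $H_n$.

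The main obstacle is coordinating Duplicator's responses across repetition boundaries and across concatenations whose matches straddle the critical position, since condition atoms referring to variables bound before and after a repetition must be kept consistent on both sides. A case analysis on whether each bound variable lands before, at, or after the critical position, combined with the type restriction guaranteeing that conditions inside a repetition are purely local, resolves this; every \gpc\ pattern of size less than some function of $n$ then produces the same answer on $G_n$ and $H_n$, so any putative \gpc\ expression of $Q^*$ would yield a distinguishing strategy and produce a contradiction.
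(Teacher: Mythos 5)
Your overall route differs from the paper's (the paper separates via the query ``exactly three distinct data values on a path'' over chain graphs, normalizes the pattern, builds isomorphic relabelled copies and splices a Kleene-star segment), but your key lemma is false as stated, and this is a genuine gap rather than a cosmetic one. You claim that no \gpc query of bounded size separates $G_n$ from $H_n$, where the two graphs differ only in whether one far-away node carries the anchor's data value. However, the \gpc type system only forbids a variable occurring \emph{inside} a repetition from also occurring outside it; it does not forbid comparing two variables that are both bound \emph{outside} all repetitions, even if they lie far apart on the path. Hence the constant-size, well-typed pattern
\[
(x)\ \rightarrow\ \rightarrow^{0..\infty}\ (y)_{\langle x \eqdata y\rangle}\ \rightarrow^{0..\infty}\ ()
\]
(with $y$ sitting at the junction between two stars) expresses ``the start node's data value recurs at some later node,'' and under any of the restrictors it holds in $H_n$ but fails in $G_n$ for every $n$. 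So Spoiler wins your symmetric Ehrenfeucht--Fra\"iss\'e game immediately by playing in $H_n$: he binds $y$ to the critical node and satisfies $x \eqdata y$, which Duplicator cannot mirror in $G_n$. Your intuition that $Q^*$ itself (universal non-recurrence of the anchor value) is blocked by the typing rule is sound, but the separation argument you build on top of it --- two-sided indistinguishability of $(G_n,H_n)$ --- cannot hold for this family.

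What is actually needed is the asymmetric statement: every pattern matched by the designated path in the graph where $Q^*$ holds is still matched after the modification that falsifies $Q^*$ (this suffices, since \gpc patterns are not closed under complement). This is precisely the shape of the paper's argument, which avoids your problem by using periodic data $(d_1 d_2 d_3)^\ell$ so that the modified graph is isomorphic to the original with respect to data equalities, and then splices only the interior of a starred segment; the separation comes from \emph{counting} distinct values rather than from a single coincidence that a two-variable condition can detect. If you want to keep your query, you would have to (i) recast the lemma one-directionally, (ii) handle matches in $G_n$ whose bound variables land on the critical node and participate in negated conditions, shifting them via adjacent starred components while preserving all equality/inequality constraints, and (iii) fix the endpoints of $Q^*$ (e.g.\ as in the paper's $Q_{\threeval}(s,t)$ on a chain), since with endpoints existentially quantified $Q^*$ is trivially true in $H_n$ as well (start the path one node later). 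As it stands, the central indistinguishability claim fails, so the proof does not go through.
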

\begin{proof}
To prove this theorem, we will show that the following query, which is evidently expressible in \rd, cannot be expressed in \gpc:
\begin{align*}
Q_{\threeval}(s,t):= & \textit{``Is there a path $\rho$ from node $s$ to node $t$ where the number of}\\ 
& \textit{distinct data values (excluding ids) along $\rho$ is exactly 3?''.}
\end{align*}



\begin{figure}[h]
\centering
\begin{tikzpicture}
\centering
[node distance=160pt]
\tikzstyle{state}=[draw,shape=circle,minimum size=20pt]

\tikzset{invisible/.style={minimum width=0mm,inner sep=0mm,outer sep=0mm}}

\node[state] (v0) at (0,0) {$s$};
\node[state] (v1) at (1.5,0) {};
\node[state] (v2) at (3,0) {};
\node[state] (vm) at (6,0) {};
\node[state] (vn) at (7.5,0) {$t$};

\path
(v0)
    edge [->] node[above] {} (v1)
(v1)
    edge [->] node[above] {} (v2)
(v2)
    edge [->,dashed] node[above] {} (vm)
(vm)
    edge [->] node[above] {} (vn)

;

\end{tikzpicture}
 
\caption{Chain graph $G$ and path $\rho_G$ from $s$ to $t$.}
\label{fig:chain_2}
\end{figure}

We will actually show a stronger separation: $Q_{\threeval}$ cannot be expressed in $\gpc$ even over chain graphs.
Suppose we restrict our attention to the chain data graphs shown in Figure~\ref{fig:chain_2}, and we use $\rho_G$ to denote the unique, shortest path from $s$ to $t$ in graph $G$. In such graphs, a path from $s$ to $t$ contains exactly three distinct data values if and only if it is $\rho_G$. Accordingly, we can restrict ourselves to the case where $\restrictor=\shortest$. This assertion holds even when restricting to chain data graphs over a unary alphabet $\Sigma$ (i.e., $\Sigma$ contains exactly one symbol).

\begin{figure}[t]
\begin{subfigure}[t]{0.49\textwidth}
\centering
\includegraphics[width=1\textwidth]{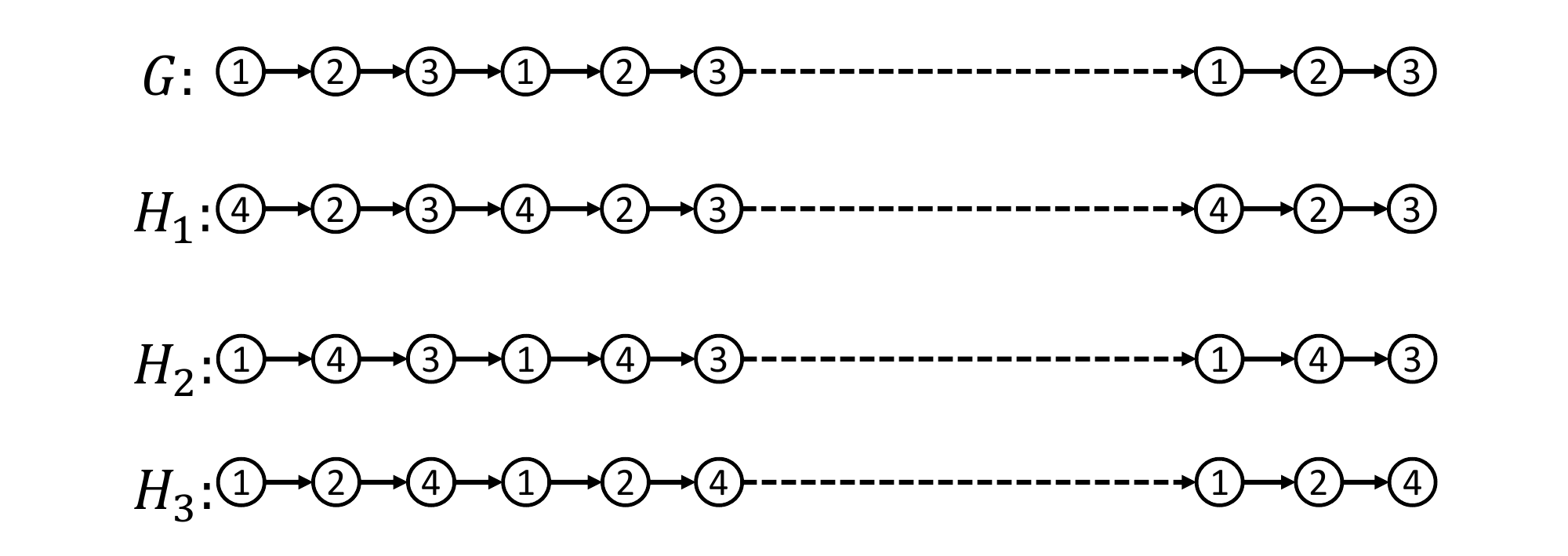}
\caption{Three isomorphic copies $H_1,H_2, H_3$ of $G$.}
\label{fig:G_to_H}
\end{subfigure}\hfill
\begin{subfigure}[t]{0.49\textwidth}
\centering
\includegraphics[width=1\textwidth]{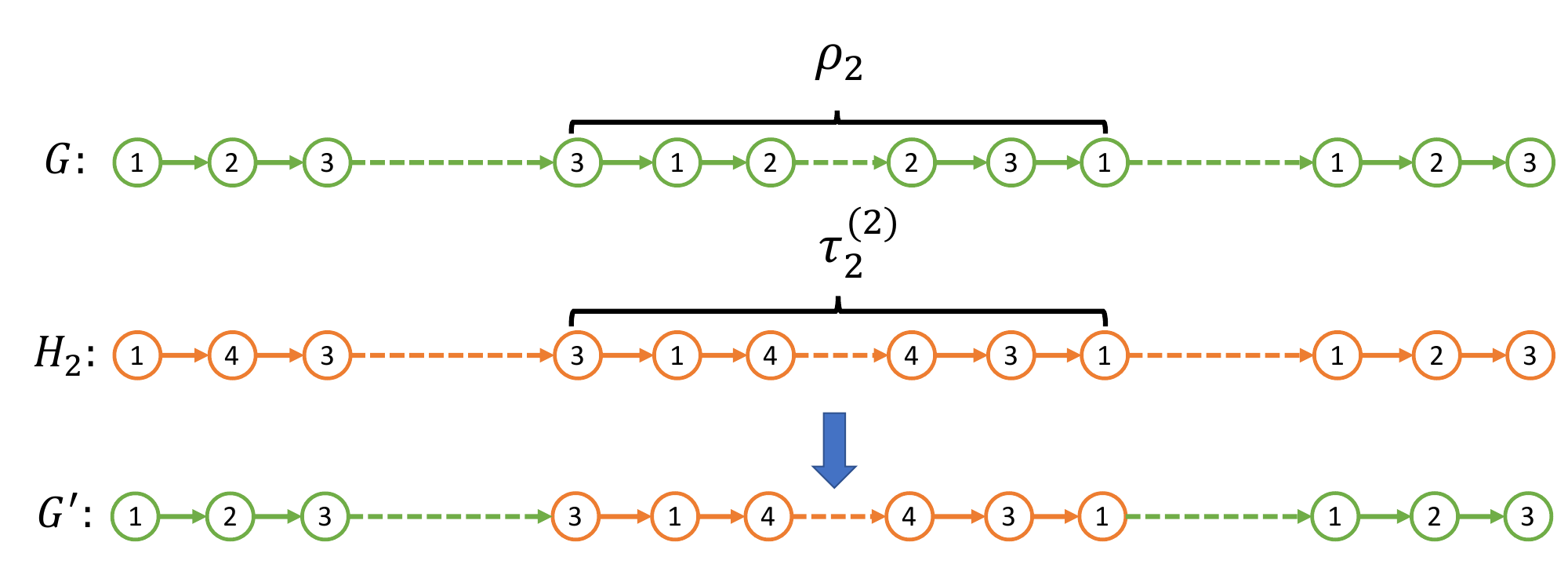}
\caption{Data graph $G'$ derived from merging $G$ and $H_2$.}
\label{fig:mergedgraph}
\end{subfigure}\hfill
\caption{A data graph $G$ with exactly 3 distinct data values (excluding ids).}
\end{figure}

Before continuing, we provide an overview and the main idea of the proof showing why a pattern equivalent to $Q_{\threeval}$ does not exist. 
We will use a proof by contradiction. Specifically, we will demonstrate that for any pattern $\pattern$, it is possible to construct two graphs, $G$ and $G'$, such that $\rho_G$ and $\rho_{G'}$ either both agree with $\pattern$ or both disagree with $\pattern$. However, $G$ contains exactly three distinct data values, while $G'$ does not.
See Figure~\ref{fig:G_to_H}. The data graph $G$ contains exactly three distinct data values (excluding ids), namely 1, 2, and 3. Additionally, the data graphs $H_1, H_2, H_3$ are isomorphic to $G$ and to each other. By Proposition~\ref{prop:isomorphic_graphs}, we have that for any pattern $\pattern$, $\rho_{G}$ is in the semantics of $\pattern$ if and only if $\rho_{H_i}$ is in the semantics of $\pattern$ for $i \in \{1, 2, 3\}$.
See Figure~\ref{fig:mergedgraph}. Furthermore, we can construct a data graph $G'$ from $G$ and $H_i$ for some $i \in \{1,2,3\}$ such that $G'$ contains exactly four distinct data values (excluding ids) and satisfies $\pattern$ in the same way as $G$, $H_1$, $H_2$, and $H_3$.


\bigskip

We now proceed to provide the detailed proof. Suppose that $\pattern_{\threeval}$ is a \gpc pattern such that $\restrictor\ \pattern_{\threeval}$ expresses $Q_{\threeval}$, where $\restrictor=\shortest$.

For simplicity, throughout the rest of this section, when comparing two chains $G_1=(V_1,E_1,\idm_i,\nodedata_1)$ and $G_2=(V_2,E_2,\idm_2,\nodedata_2)$, if $|V_1| \leq |V_2|$, we assume $V_1 \subseteq V_2$ and $E_1 \subseteq E_2$, and we will directly ignore the functions $\idm_1$ and $\idm_2$, as well as the labels of edges, treating paths as sequences of nodes.
Furthermore, given a chain $G=(V,E,\nodedata)$, we assume that $V=\{v_0,\dots,v_n\}$ for some $n$, and $E=\{(v_0,v_1),\dots,(v_{n-1},v_n)\}$. When referring to a path $\rho=v_i\dots v_j$ in $G$ for some $i\leq j$, we use $G_{\rho}$ to denote the subgraph $(V_{i,j},E_{i,j},\nodedata)$ from node $v_i$ to $v_j$, where $V_{i,j}=\{v_i,\dots,v_j\}$ and $E_{i,j}=\{(v_i,v_{j-1}),\dots,(v_{i+1},v_j)\}$.
For any two patterns $\pattern$ and $\primepattern$, we write $\pattern \approx \primepattern$ if $\semantics{\pattern}{G} = \semantics{\primepattern}{G}$ for all $G$.

If $\pattern_{\threeval}$ is a disjunction of two sub-patterns $\pattern_1$ and $\pattern_2$, and a path $\rho_G$ of chain graph $G$ matches the pattern $\pattern_{\threeval}$, then $\rho_G$ matches $\pattern_1$ or $\pattern_2$. 
Accordingly, by the normal form assumption, we can assume that $\pattern_{\threeval}$ is equivalent to $\{\subpattern_1 \dots \subpattern_m\}_{\langle \theta \rangle}$ for some condition $\theta$ and patterns $\subpattern_1, \dots, \subpattern_m$, where each $\subpattern_j$ is either an atomic pattern or has the form $\subsubpattern^*$ for some pattern $\subsubpattern$.

Now, consider any chain graph $G$ with three alternating values that is sufficiently long, denoted as $G=(V,E,\nodedata)$, where 
\[|V|=3^\ell> 3\cdot|\pattern_{\threeval}|+7\]
for some positive integer $\ell$. Here, $\nodedata(\rho_G)=(d_1 \pathsep d_2 \pathsep d_3)^\ell$, and $|\{d_1,d_2,d_3\}|=3$. We have $(\rho_G,\mu) \in \semantics{\pattern_{\threeval}}{G}$ for some $\mu$. 
Since $\frac{|V|-7}{3}> |\pattern_{\threeval}|$, there is an index $j$ and a pattern $\subsubpattern$ such that $\subpattern_j = \subsubpattern^*$: in other words, there is some segment of the graph matched by a wildcard sub-pattern.

Consider $\theta = \top$. There are two possibilities: $m=1$ and $m>1$.
The case $m=1$ can be easily transformed into case $m>1$ because for every pattern $\subsubpattern$, $\subsubpattern^*$ is equivalent to $\subsubpattern^*\subsubpattern^*$. 
Therefore, we will assume $m>1$: the composition has more than one component.

Up to this point, we have been reasoning about the pattern $\pattern_{\threeval}$ on the chain graph $G$. Now we define several additional isomorphic copies of $G$. Let $e \not\in \{d_1,d_2,d_3\}$, and define $H_i=(V,E,\nodedata_s)$ for $i=1,2,3$, where $\nodedata_1(\rho_{H_1})=(e \pathsep d_2 \pathsep d_3)^\ell$, $\nodedata_2(\rho_{H_2})=(d_1 \pathsep e \pathsep d_3)^\ell$, and $\nodedata_3(\rho_{H_3})=(d_1 \pathsep d_2 \pathsep e)^\ell$. For instance, let $d_1=1, d_2=2, d_3=3$, and $e=4$. Then, $G, H_1, H_2, H_3$ are illustrated in Figure~\ref{fig:G_to_H}. 

Since all four graphs are isomorphic, and the set of nodes and the set of edges of $H_i$ are $V$ and $E$, respectively, for all $i=1,2,3$, by Proposition~\ref{prop:isomorphic_graphs}, we have $(\rho_{H_i},\mu) \in \semantics{\pattern_{\threeval}}{H_i}$ for $i=1,2,3$.
By our assumption, $\pattern_{\threeval}=\{\subpattern_1\dots \subpattern_m\}_{\langle \theta \rangle}$, the path $\rho_G$ (and $\rho_{H_i}$ for $i=1,2,3$) can be divided into segments $\rho_1,\dots,\rho_m$ (and $\tau^i_1,\dots,\tau^i_m$ for $i=1,2,3$), where $(\rho_j,\mu_j) \in [\![ \subpattern_j ]\!]_{G}$ (and $(\tau^i_j,\mu_j) \in [\![ \subpattern_j ]\!]_{H_i}$ for $i=1,2,3$) for $j=1,\dots,m$, where $\mu_j$ is $\mu$ restricted to the variables of $\primepattern_j$. Additionally, we can assume that $|\rho_j|=|\tau^i_j|$ for $i=1,2,3$ for all $j$.

Since $|\rho_G|=|\rho_{H_1}|=|\rho_{H_2}|=|\rho_{H_3}|=|V| > 3\cdot|\pattern_{\threeval}|+7$, there is an index $1\leq \iota \leq m$ such that $3 \leq |\rho_\iota|=|\tau^i_\iota|$ for $i=1,2,3$ and $\subpattern_\iota = \subsubpattern^*$ for some $\subsubpattern$. 
Since $\subsubpattern^* \subsubpattern^*$ is equivalent to $\subsubpattern^*$, we can also demand that $|\rho_\iota| < |\rho_{G}| - 3$. In other words, there is a ``non-trivial'' segment matched by a Kleene star pattern component within $G$, and likewise within each of the $H_i$. This is a segment where we will perform our merging.

Let $3 < a < b < 3^\ell-3$ be the indices marking the beginning and end of the crucial segment, corresponding to nodes $v_a$ and $v_b$ in graph $G$, such that $\rho_\iota=\rho_G[v_a,v_b]$.
Note that within each $H_i$, the positions that contain the value $4$ are congruent modulo $3$. As we consider $H_1, H_2, H_3$, the positions of the value $4$ cover every possible residue class modulo $3$. Therefore, in one of the $H_i$, which we will call $H_r$, the positions of nodes with value $4$ do not align with $a$ or $b$ modulo $3$.
For this particular $r$, we have agreement between $G$ and $H_r$ at the endpoints of the crucial segment: $\nodedata(v_a)=\nodedata_r(v_a)$ and $\nodedata(v_b)=\nodedata_r(v_b)$. For example, if $\nodedata(v_a)=3$ and $\nodedata(v_b)=1$, then we have $r=2$.

Let $G'=(V,E,\nodedata')$ where:
\begin{itemize}
\item $\nodedata'(v_r)=\nodedata(v_r)$ if $r < a$ or $b<r$;
\item $\nodedata'(v_r)=\nodedata_r(v_u)$ if $a\leq u \leq b$.
\end{itemize}
Figure~\ref{fig:mergedgraph} illustrates the process of merging $G$ with $H_r$ to obtain $G'$.

We now claim that $(\rho_{G'},\mu) \in \semantics{\pattern_{\threeval}}{G'}$. Formally, we divide $\rho_{G'}$ into $\rho'_1,\dots,\rho'_m$ such that $|\rho'_p| = |\rho_p|$ for all $p=1,\dots,m$. Then we have $\rho'_\iota = \tau^r_\iota$, $\nodedata'(\rho'_\iota)=\nodedata_r(\tau^r_\iota)$, $\rho'_p=\rho_p$, and $\nodedata'(\rho'_{p'})=\nodedata(\rho_p)$ for each $p \in \{1,\dots,m\}\setminus \{\iota\}$. Thus $(\rho'_p,\mu) \in [\![ \subpattern_p ]\!]_{G'}$ for $p=1,\dots,m$, and $(\rho_{G'},\mu) \in  [\![ \subpattern_1\dots \subpattern_m ]\!]_{G'} \subseteq [\![ \pattern_{\threeval} ]\!]_{G'}$. Since $3 < |\rho'_\iota| < |V|-3$, we have $\nodedata'(\rho_{G'})=4>3$. This leads to a contradiction.

The analysis for the case $\theta \neq \top$ is similar. For instance, suppose $x \eqdata y$ is a sub-formula of $\theta$. Then, the node variables $x$ and $y$ do not appear in $\subpattern_j = \subsubpattern^*$ since $\fv{\subsubpattern^*}=\emptyset$. Therefore, the existence of $x$ and $y$ does not affect our construction of $G'$. This reasoning applies to all variables mentioned in $\theta$. Hence, Theorem~\ref{thm:GPC_not_contain_RDPQ} is proved.
\end{proof}

\section{Unifying the Languages}\label{sec:unifying}

A natural way to find a language subsuming all the languages in Figure~\ref{fig:existing} is to extend \edn or \fotcd, which are shown to be the maximally expressive query languages within the diagram in the preceding section. Essentially, \edn is a first-order logic. The extension of \edn augmented with transitive closures then intuitively subsumes \fotcd. However, there is an issue we need to address. In \edn, there are two types of variables: node variables and path variables. Given a formula $\phi(\bm{x},\bm{y})$, where $\bm{x}$ and $\bm{y}$ have the same arity, we cannot form $\phi^*(\bm{x},\bm{y})$ without considering the types of variables in $\bm{x}$ and $\bm{y}$. For instance, it does not make sense to allow a formula $\psi^*(x,\pi)$ from a formula $\psi(x,\pi)$ where $x$ is a node variable and $\pi$ is a path variable. We address this by imposing the restriction that all the variables in vectors $\bm{x}, \bm{y}$ of formula $\phi$ are node variables. 

\begin{definition}
The formulas of \ednt are defined inductively by the following rules:
\begin{itemize}
\item \edn formulas are \ednt formulas.

\item If $\phi$ and $\psi$ are \ednt formulas, then $\neg\phi$, $\phi \vee \psi$, $\exists x. \phi$, and $\exists \pi. \psi$ are also \ednt formulas, where $x$ and $\pi$ are node and path variables appearing free in $\phi$ and $\psi$, respectively.

\item If $\phi(\bm{x},\bm{y})$ is an \ednt formula where $\bm{x},\bm{y}$ are vectors of node variables with the same arity, then $\phi^*(\bm{x},\bm{y})$ is an \ednt formula.
\end{itemize}
\end{definition}

Since \ednt is an extension of \edn, to define the semantics, we simply have to illustrate the meaning of $\phi^*(\bm{x},\bm{y})$, where $\bm{x}, \bm{y}$ are vectors of node variables with the same arity. Given a data graph $G$, we write $G \models \phi^*(\bm{v},\bm{v}')$ if there exists a sequence of vectors of nodes $\bm{v}_1=\bm{v},\dots,\bm{v}_n=\bm{v}'$ such that $G \models \phi(\bm{v}_i,\bm{v}_{i+1})$ for $1\leq i < n$.

By the definition of \ednt, we have the following theorem:

\begin{theorem}
\ednt subsumes all languages from Figure~\ref{fig:existing}.
\end{theorem}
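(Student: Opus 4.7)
The plan is to reduce the theorem to two separate claims: (i) \ednt subsumes \edn, and (ii) \ednt subsumes \fotcd. Claim (i) is immediate from the definition of \ednt, which takes \edn as its base fragment and merely adds transitive-closure formation plus the usual Boolean and quantifier closures; thus \ednt inherits subsumption of every language that \edn subsumes. By Theorem~\ref{thm:existingcontainments} (together with the translation of \gpc, \wl, \cdn and \rd through \edn), this already accounts for all languages in Figure~\ref{fig:existing} except \fotcd and \gpcrd. Since \gpcrd is equivalent to the three-variable restriction \fotcr of \fotcd, claim (ii) entails subsumption of \gpcrd as well.

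For claim (ii), I would proceed by structural induction on \fotcd formulas, exhibiting for every \fotcd formula $\phi(\bar x)$ over node variables an equivalent \ednt formula $\phi^{\dagger}(\bar x)$. For the atomic cases, $E_a(x,y)$ is expressed in \edn (hence in \ednt) as $\exists \pi.\, (x,\pi,y) \wedge \pi \in A_a$, where $A_a$ is the single-transition {\ra} accepting the data path of length one labelled $a$; the atom $x \eqdata y$ is $\exists \pi.\, (x,\pi,y) \wedge \pi \in A_{\eqdata}$ where $A_{\eqdata}$ stores the data value of the source and requires equality at the target (or, even more directly, using a zero-length path and a one-register {\ra} comparing to the starting data value). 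Equality $x=y$ uses the built-in node equality already in \edn. Boolean connectives and first-order quantification over node variables translate one-to-one, using the corresponding \edn/\ednt constructs; note that in \edn we have $\exists x$ over node variables and we use the usual definitions of $\vee$, $\neg$, $\forall$ as syntactic sugar.

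The only non-trivial inductive case is the transitive closure $\phi^*(\bar x,\bar y)$. Here, by induction, we have an \ednt formula $\phi^{\dagger}(\bar x,\bar y)$ equivalent to $\phi(\bar x,\bar y)$; since $\bar x$ and $\bar y$ are vectors of node variables of the same arity and no path variables occur free in a \fotcd formula (the translation $\phi^{\dagger}$ has no free path variables either, as all path variables used in the atomic translations above are existentially bound within their atoms), the formation rule for $\ednt$ transitive closure applies, yielding $\{\phi^{\dagger}\}^*(\bar x,\bar y)$. A direct unfolding of the semantics of \ednt's $(\cdot)^*$ shows that the witnessing sequences coincide exactly with the sequences witnessing $\phi^*$ in \fotcd, giving equivalence.

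The main obstacle I anticipate is the bookkeeping around variable sorts: I need to make sure the translation introduces no free path variables that would invalidate the \ednt transitive-closure formation rule (which restricts $\bar x,\bar y$ to node variables and allows path variables only as parameters $\bar\pi$). This is handled by ensuring every path variable introduced in the translation of an atom is immediately existentially quantified within that atom, so that $\phi^{\dagger}$ for a \fotcd formula has exactly the same free variables (all node variables) as $\phi$. Once this discipline is observed, the induction goes through cleanly, and combined with claim (i) we obtain subsumption of every language in Figure~\ref{fig:existing}.
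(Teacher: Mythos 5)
Your overall decomposition is exactly the one the paper intends (the paper in fact states this theorem ``by definition'' of \ednt, with the subsumption of \fotcd left as the intuitive observation that \edn is first-order and \ednt adds transitive closure), and your treatment of the only delicate case --- forming $\{\phi^{\dagger}\}^*(\bar x,\bar y)$ and checking that the translation introduces no free path variables so that the \ednt closure rule applies --- is correct and more explicit than what the paper offers. The reduction of \gpcrd to \fotcd via \fotcr is also the right move.

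There is, however, one step that fails as written: the atomic translation of $x \eqdata y$. Both variants you propose, $\exists \pi.\,(x,\pi,y) \wedge \pi \in A_{\eqdata}$ and the ``zero-length path'' version, require the existence of a path from $x$ to $y$; but in \fotcd the relation $\eqdata$ holds for \emph{any} pair of nodes with equal data values, including nodes in different connected components (and the zero-length variant would even force $x=y$). A two-node data graph with no edges and equal data values already separates your translation from the atom it is supposed to express. The repair is easy and stays within your scheme: anchor a length-zero path at each of $x$ and $y$ separately and compare them with a binary \ra, e.g.\ $\exists \pi\,\exists \omega.\,(x,\pi,x) \wedge (y,\omega,y) \wedge (\pi,\omega)\in A_{=}$, where $A_{=}$ accepts convolutions of two length-zero data paths whose $\datam$-components agree (length-zero paths exist at every node, so this introduces no spurious connectivity requirement). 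With that correction, and your discipline of existentially closing all auxiliary path variables inside each atom, the induction goes through and the theorem follows.
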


Next, we show that \ednt is well-behaved.

\begin{proposition} \label{prop:edntdecide}
The \ednt query evaluation problem is decidable.
\end{proposition}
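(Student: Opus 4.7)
My plan is to reduce \ednt model-checking to \edn model-checking, which is decidable (as witnessed by the non-elementary but elementary-recursive data complexity bound cited above for \edn). The reduction proceeds by structural induction on \ednt formulas, eliminating each transitive-closure operator via a graph-size-dependent unfolding.

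Fix an input graph $G = (V, E, \idm, \nodedata)$. The key observation is that for any subformula of the form $\phi^*(\bar{x}, \bar{y}, \bar{\pi})$ whose node-variable vectors have arity $k$, the intermediate witnesses in the TC semantics live in the finite set $V^k$. Hence any witnessing chain of length exceeding $|V|^k$ must revisit some vector and can be shortened without affecting satisfaction, so over $G$ we have
\[
\phi^*(\bar{x}, \bar{y}, \bar{\pi}) \;\equiv\; \bigvee_{n=1}^{|V|^k} \exists \bar{z}_1 \cdots \exists \bar{z}_{n-1}\, \bigwedge_{i=1}^{n} \phi(\bar{z}_{i-1}, \bar{z}_i, \bar{\pi}),
\]
where $\bar{z}_0 := \bar{x}$ and $\bar{z}_n := \bar{y}$. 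Note that the same free path-variable vector $\bar{\pi}$ is referenced uniformly in every conjunct, which is precisely what the TC semantics requires: a single assignment to $\bar{\pi}$ must witness every consecutive pair along the chain.

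Equipped with this unfolding, the decidability proof is a structural induction on the number of TC operators in a formula. The base case is pure \edn, handled by the decidability of \edn model-checking. In the inductive step on a subformula $\phi^*(\bar{x}, \bar{y}, \bar{\pi})$, the induction hypothesis is first applied to $\phi$ to replace any nested TC operators by equivalent \edn subformulas over $G$; the outer TC is then eliminated via the above unfolding, yielding an \edn formula of size polynomial in $|V|^k$ and the size of $\phi$. Booleans and node/path quantifiers pass through the induction directly.

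The main obstacle is conceptual rather than technical: verifying that the shortening step (deleting intermediate vectors between two occurrences of the same node-vector in a witnessing chain) remains valid when the shared path-vector $\bar{\pi}$ is free. This is indeed fine, because once any assignment of $\bar{\pi}$ to paths in $G$ is fixed, the TC collapses to ordinary reachability in a finite binary relation on $V^k$, to which the standard reachability-shortening argument applies; the unfolded disjunction then captures this uniformly in $\bar{\pi}$ since the same free variables appear in each conjunct. A secondary subtlety is that the resulting \edn formula depends on $|V|$, but this is acceptable for the combined-input decision problem, where formula and graph are given together and the unfolded formula is a perfectly legitimate intermediate input to an \edn model-checker.
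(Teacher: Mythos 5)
Your proof is correct, but it takes a genuinely different route from the paper. You eliminate each transitive-closure operator syntactically: for a fixed assignment of the shared free path vector, the closure is ordinary reachability in a finite relation on $V^k$, so witnessing chains can be truncated at length $|V|^k$, and the bounded unfolding (with the same free path vector in every conjunct) rewrites the formula, over the given $G$, into an \edn formula; decidability then rests on \edn model checking as a black box. The paper instead proves a stronger, automata-theoretic statement: the entire answer set $L_{\phi,G}$ of satisfying node/path tuples is effectively regular, obtained by adapting the inductive automaton construction of \cite{Lin:12:DS} to data graphs, the key adaptation being that once $G$ is fixed every \ra occurring in the formula can be replaced by an NFA over the finite alphabet of labels and data values of $G$, which restores closure under complement; emptiness of the resulting automaton is then decidable. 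Your route is more explicit about how the closure operator itself is handled (the paper is terse on that step), and you correctly observe that the chain-shortening argument is sound precisely because the path vector is fixed throughout; on the other hand, your base case is not quite off-the-shelf, since the cited decidability result is for graphs without data, so it needs exactly the fix-the-graph conversion of register automata to NFAs that the paper spells out. Two small repairs: under the paper's semantics a length-one sequence of node vectors is allowed, so $\phi^*$ is reflexive and your disjunction should include an additional disjunct asserting $\bar{x}=\bar{y}$ componentwise; and ``non-elementary but elementary-recursive'' is contradictory --- you mean non-elementary yet decidable. In exchange, the paper's approach yields effective regularity of the full answer relation, which is more than Boolean evaluation, while yours is more self-contained at the level of formulas and makes the dependence of the rewriting on $|V|$ transparent.
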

\begin{proof}
Let $\phi(\bm{x},\bm{\pi})$ be an \ednt formula and $G$ be a data graph. Without loss of generality, we assume that both $\mathcal{D}_{\idm},\mathcal{D}_{\propm} \subseteq \mathbb{N}$. Therefore, all data paths in $G$ are in the set $\mathbb{P}=\mathbb{N}^2\times (\Sigma \times \mathbb{N}^2)^*$.
The \ednt query evaluation is decidable if and only if the emptiness problem for the set $L_{\phi,G}=\{(\bm{v},\bm{\rho}) \mid G\models \phi(\bm{v},\bm{\rho})\}$ is decidable for arbitrary $\phi$ and $G$. We can decide emptiness using the following:

\begin{claim} \label{claim:regular} A 2-ary finite automaton $A_{\phi,G}$ recognizing $L_{\phi,G}$ can be effectively derived from $\phi$ and $G$.
\end{claim}

In \cite{Lin:12:DS}, it is shown that the statement holds when $\phi$ is an \edn formula and $G$ is a graph without data; thus, all automata $A$ appearing in $\phi$ are finite automata. We explain how to adapt the proof for data graphs. Here is the idea: consider the formulas $\phi = \exists \pi. \pi \in A$ and $\phi' = \exists \pi. \neg(\pi \in A)$. Given a graph $G$, we treat it as an automaton where all states are both initial and final. Recall that an NFA is a directed graph with an initial state and a set of final states.
The automaton $A_{\phi,G}$ is then the product automaton of $A$ and $G$. Consequently, $G \models \phi$ if and only if $A_{\phi,G}$ is non-empty. On the other hand, the automaton $A_{\phi',G}$ for the formula $\phi'$ is the product automaton of $A_\neg$ and $G$, where $A_\neg$ is the automaton recognizing the complement of $L(A)$. Moreover, $G \models \phi'$ if and only if $A_{\phi',G}$ is non-empty.

In \cite{Lin:12:DS}, the authors give an inductive construction of $A_{\phi,G}$. The main inductive step relies on closure of finite automata under complement. Although we consider data graphs, the automata $A$ appearing in $\phi$ are \ra{s}, and \ra{s} are not closed under complementation. However, once a data graph $G$ is fixed, we can transform all \ra{s} $A$ into finite automata $A'$ and derive an \edn formula $\phi'$ from $\phi$ by replacing $A$ with $A'$ such that $G \models \phi(\bm{v}, \bm{\rho})$ if and only if $G \models \phi'(\bm{v}, \bm{\rho})$.

For instance, consider Example~\ref{ex:edn_Hamiltonian}. 
There is no \ra recognizing the complement of $L(A_{\repeatm})$ in $\mathbb{P}$.
However, once $G$ is fixed, there are two finite subsets $D^G_{\idm},D^G_{\propm}$ of $\mathbb{N}$ s.t. all data paths in $G$ are over the union of $D^G_{\idm}$, $D^G_{\propm}$, and $\Sigma$. Thus there is a finite automaton $A^{G}_{\repeatm}$ s.t. for every data path $\rho$ in $G$, $\rho \in L(A_{\repeatm})$ if and only if $\rho \in L(A^G_{\repeatm})$.  
Hence, the claim is true when $\phi$ is an \edn formula, even when $G$ is with data.

From Claim \ref{claim:regular} Proposition \ref{prop:edntdecide} follows.
\end{proof}

Figure~\ref{fig:summary} summarizes the results of this section.

\section{Multi-Path Walk Logic, an extension of \wl}\label{sec:mwl}

In the previous section, we extended \edn to unify the prior languages. Based on Theorem~\ref{thm:prior}, we have that \edn subsumes \wl, \gpc, \rd, and \cdn. The query languages \rd for data graphs and \edn for graphs without data are well-studied \cite{LMV16, Lin:12:DS}. In this section, we examine \wl and derive an extension of \wl, still subsumed by {\edn}.

Consider a path $\rho = v_0a_1v_1 \dots v_n$. Node $v_i$ is at the $i^{th}$ position in $\rho$ for $i\in [0,n]$. In other words, every node in $\rho$ corresponds to a number in $[0,n]$. In \wl, we can compare nodes' data values through their positions. Additionally, we can compare node positions as numbers if they are on the same path. However, comparison across different paths is not allowed. That is, we can assert $l^\pi < n^\omega$ only if $\pi = \omega$. We call the extension of \wl obtained by lifting this limitation \emph{multi-path walk logic} (\mwl). Note that $\ell^\pi < n^\omega$ is expressible in \wl for different $\pi, \omega$ if $\ell$ and $n$ refer to some given constants.
In this section, we study the properties of \mwl. We relate it to the languages investigated in the preceding sections and derive the decidability of query evaluation.


Recall from Section \ref{sec:prior} that to complete the proof of Theorem~\ref{thm:existingcontainments}, we need to show that \wl is contained in \edn. We will show something stronger: that \mwl, an extension of \wl, is contained in \edn.

\begin{proposition}
\mwl is strictly stronger than \wl in expressive power.
\end{proposition}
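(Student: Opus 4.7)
The plan is to split the proposition into the two halves it packs together: containment $\wl \subseteq \mwl$ and strictness of that containment. The containment direction is immediate from the way \mwl is introduced: \mwl is obtained from \wl solely by dropping the syntactic side-condition ``$\ell^\pi < m^\omega$ only when $\pi = \omega$'', so every \wl formula is also a well-formed \mwl formula and its semantics is unchanged. All the real work is in the separation.

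For strictness I would use the standard Hellings query $Q_{\even}(s,t)$: ``are $s$ and $t$ connected by a path of even length?''. By \cite[Proposition~7]{Hellings:13:ICDT}, $Q_{\even}$ is not expressible in \wl, already for graphs with only ids as data; so it is enough to exhibit an \mwl formula equivalent to $Q_{\even}$. The idea is that a path from $s$ to $t$ has even length iff it can be split at its midpoint into two halves of the same length, and the ``same length'' constraint is exactly what the new cross-path position comparison of \mwl lets one express.

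Concretely, I would introduce two path variables $\omega_1, \omega_2$ meant to represent the two halves and write a formula of the schematic form
\[
\exists \omega_1. \exists \omega_2. \; \phi_{\textsf{first}}(\omega_1,s) \wedge \phi_{\textsf{last}}(\omega_2,t) \wedge \phi_{\textsf{join}}(\omega_1,\omega_2) \wedge \phi_{\textsf{eqlen}}(\omega_1,\omega_2),
\]
where the first three conjuncts say, respectively, that the first node of $\omega_1$ has the id of $s$, the last node of $\omega_2$ has the id of $t$, and the last node of $\omega_1$ equals (via $\eqid$) the first node of $\omega_2$. These three conjuncts are already pure \wl: pick out first and last positions with the usual $\forall$-minimality idioms on $<$ within a single path, and compare nodes across paths with $\eqid$, which \wl already allows. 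The only essentially \mwl atom is in
\[
\phi_{\textsf{eqlen}}(\omega_1,\omega_2) := \exists \ell^{\omega_1}. \exists m^{\omega_2}. \bigl(\forall \ell'^{\omega_1}. \ell' \leq \ell\bigr) \wedge \bigl(\forall m'^{\omega_2}. m' \leq m\bigr) \wedge \ell = m,
\]
where the atom $\ell = m$ is a cross-path position comparison between $\omega_1$ and $\omega_2$; since $\ell$ and $m$ are pinned to be the maximum positions of $\omega_1$ and $\omega_2$, this asserts $|\omega_1|=|\omega_2|$. Correctness is a routine ``split and reassemble'' check: any even-length path $v_0 a_1 v_1 \cdots a_{2k} v_{2k}$ from $s$ to $t$ produces witnessing halves $\omega_1 = v_0\cdots v_k$ and $\omega_2 = v_k\cdots v_{2k}$, and conversely any witnessing pair concatenates to a path of length $2|\omega_1|$ from $s$ to $t$.

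The main conceptual hurdle is not the construction itself but making sure the intended semantics of \mwl is pinned down: the atom $\ell^{\omega_1} = m^{\omega_2}$ must be read as arithmetic equality of the underlying natural-number positions, so that comparing the maxima of two paths really does capture equality of their lengths. Once that semantic point is fixed, strictness follows by combining the explicit \mwl formula above with Hellings' inexpressibility of $Q_{\even}$ in \wl.
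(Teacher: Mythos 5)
Your proof is correct, but it separates \wl from \mwl by a different witness than the paper. The paper's own proof picks the query $Q^{\difm}_{\lenm}$ (``are there two paths of different lengths from a red node to a black node?''), cites Hellings for its \wl-inexpressibility, and then expresses it in \mwl almost trivially: one existentially quantifies the two paths and directly compares their last positions with a single cross-path ``$<$'' atom. You instead take $Q_{\even}$, whose \wl-inexpressibility the paper also invokes (elsewhere) via Hellings' Proposition~7, and make it an \mwl query by the midpoint trick: split an even-length path into two halves, glue them with $\eqid$, and enforce equal lengths by pinning the maximal positions of the two path variables and equating them across paths (equality being definable from the now-unrestricted ``$<$''). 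Both arguments hinge on exactly the same new capability --- arithmetic comparison of positions living on different paths --- and both outsource the lower bound to Hellings; the paper's choice of query makes the \mwl formula a one-liner with no reassembly argument, while yours uses the more standard query $Q_{\even}$ at the cost of the (routine but necessary) split-and-concatenate correctness check, including the observation that $\eqid$ plus injectivity of $\idm$ makes the concatenation a genuine path. Two small formalization points you implicitly rely on are fine but worth making explicit: the designated endpoints $s,t$ must be handled via the paper's convention of nodes as length-$0$ paths (cross-path $\eqid$ is already legal in \wl), and cross-path position equality is not an atom but is derived as $\neg(\ell<m)\wedge\neg(m<\ell)$, read over the underlying natural-number positions exactly as in the paper's semantics. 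Note also that your result does not conflict with the paper's later observation that \mwl over words sits inside FO$[<,+]$: evenness of a path's \emph{length} is definable there (it is parity of the number of $a$-labelled positions that is not).
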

\begin{proof}



It is known that the query ``Are there two paths of different lengths from a given start node (represented as a singleton path) to a given target node (another singleton path)?'' is not expressible in \wl \cite[Proposition 7]{Hellings:13:ICDT}.

It is easy to see that this query is expressible as an {\mwl} query $Q^{\difm}_{\lenm}(\pi_1, \pi_2)$.
Below we give the idea, omitting for brevity the requirement that $\pi_1$ and $\pi_2$ are singletons (and thus represent single nodes):

\[ \exists \pi, \omega, \ell^\pi, m^\pi, s^\omega, t^\omega. (\first(\ell) \wedge \first(s) \wedge \last(m) \wedge \last(t)) \wedge \ell < m \]
\[\wedge \phi_{begins}(\pi, \pi_1) \wedge \phi_{begins}(\omega, \pi_1) \wedge \phi_{ends}(\pi, \pi_2) \wedge \phi_{ends}(\omega, \pi_2), \]
where:
\[
\first(\ell^\pi) := \forall {\ell}'^\pi. \, \ell^\pi \leq {\ell}'^\pi
\]
indicates that $\ell^\pi$ is the first position in $\pi$, and the formula $\last(m^\pi)$, indicating that $m^\pi$ is the last position in $\pi$, can be defined analogously.
The formula $\phi_{begins}(\pi, \pi_1)$ states that paths $\pi$ and $\pi_1$ have the same initial point,
while $\phi_{ends}(\pi, \pi_1)$ states that paths have the same final point. All of these are easily expressible in $\mwl$.

As a result, we have that \mwl is more powerful than \wl in expressiveness.
\end{proof}

In Theorem~\ref{thm:existingcontainments}, we claimed that \wl is subsumed by \edn, which follows directly from the following result:

\begin{proposition}
\mwl is subsumed by \edn.
\end{proposition}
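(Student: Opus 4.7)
The plan is to translate every \mwl formula $\phi$ with free path variables $\bar\pi$ and free position variables $\bar\ell$ (where each $\ell_j$ is indexed by some $\pi_{i_j}\in\bar\pi$) into an equivalent \edn formula in which each position variable $\ell_j^{\pi_{i_j}}$ is represented by a fresh path variable $\sigma_j$ whose intended interpretation is the prefix of $\pi_{i_j}$ ending at the referenced position. Concretely, I will recursively associate to $\phi$ an \edn formula $\phi^\sharp(\bar\pi,\bar\sigma)$ together with a ``sanity'' \edn formula $\mathrm{Pref}(\sigma_j,\pi_{i_j})$ stating that $\sigma_j$ is a prefix of $\pi_{i_j}$ in the graph; then the final translation prefixes the conjunction $\bigwedge_j\mathrm{Pref}(\sigma_j,\pi_{i_j})$ to $\phi^\sharp$ and keeps the outer quantifier pattern of \mwl (existential over positions becoming existential over a prefix path, existential over paths staying existential over paths).

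The atomic cases are handled by register data path automata applied to small tuples of paths. For $\ell^\pi<m^\pi$ one compares $|\sigma_\ell|<|\sigma_m|$ by a binary \ra that reads the convolution of $\sigma_\ell$ and $\sigma_m$ and checks that the $\sharp$-padding of $\sigma_\ell$ starts strictly before that of $\sigma_m$; for $E_a(\ell^\pi,m^\pi)$ one checks additionally that $\sigma_m$ extends $\sigma_\ell$ by exactly one edge labelled $a$; for $\ell^\pi\eqid n^\omega$ and $\ell^\pi\eqdata n^\omega$ one uses a binary \ra on $(\sigma_\ell,\sigma_n)$ with one register that stores the id (resp.\ data value) at the final position of $\sigma_\ell$ and then, after padding begins, verifies it against the last non-$\sharp$ position of $\sigma_n$. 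The predicate $\mathrm{Pref}(\sigma,\pi)$ is likewise realised by a binary \ra that reads the convolution and checks position-wise equality of data values and labels until $\sigma$ runs out. In every case, since the checks are purely local in the convolution and require only finitely many registers, standard register-automaton constructions suffice; these automata are then embedded via the \edn atom $(\bar\pi,\bar\sigma)\in A$.

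The inductive step for Boolean connectives is immediate (using $\phi\vee\psi\equiv\neg(\neg\phi\wedge\neg\psi)$); for $\exists\ell^\pi.\,\psi$ we emit $\exists\sigma.\,\mathrm{Pref}(\sigma,\pi)\wedge\psi^\sharp$; for $\exists\pi.\,\psi$ we emit $\exists\pi.\,\psi^\sharp$. The crucial feature of \mwl over \wl, namely cross-path position comparison $\ell^\pi<m^\omega$, $\ell^\pi\eqid n^\omega$, $\ell^\pi\eqdata n^\omega$ with $\pi\ne\omega$, is handled uniformly, because the associated \ra simply reads a 2-, 3-, or 4-ary convolution of the relevant $\pi$, $\omega$, $\sigma_\ell$, $\sigma_n$ and performs the required prefix and endpoint checks simultaneously.

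The main obstacle I expect is the bookkeeping of the ``multi-path convolution'' and ensuring that a single register automaton can simultaneously (i) verify that several $\sigma$'s are prefixes of the corresponding $\pi$'s, (ii) compare lengths of $\sigma$'s via the positions where $\sharp$-padding begins, and (iii) extract and compare id/data components at the exact endpoints of two different $\sigma$'s living inside different $\pi$'s. A clean way around this is to move the prefix checks out of the atoms and keep them as separate \edn conjuncts $\mathrm{Pref}(\sigma_j,\pi_{i_j})$, so that each atomic translation uses an \ra that only needs to inspect the two or three paths actually mentioned in the atom; correctness then follows by a straightforward induction on $\phi$, using the fact that the semantics of \edn on convolutions matches the ``positions-as-prefixes'' reading of \mwl.
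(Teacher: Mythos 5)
Your proposal is correct and follows essentially the same route as the paper: both translate each position variable into a fresh path variable constrained (via a prefix-recognizing \ra) to be a prefix of its indexing path, realize each \mwl atom --- including the cross-path comparisons --- as an \ra over the convolution of the relevant prefix paths, and then conclude by a routine induction over the Boolean connectives and quantifiers. The only difference is presentational: you make the $\mathrm{Pref}$ guards and the convolution bookkeeping more explicit than the paper does.
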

\begin{proof}
We show that \mwl formulas can be inductively translated into\linebreak \edn formulas.
First, consider \mwl atoms:
\begin{inparaenum}[(1)]
\item\label{atom:E_a} $E_a(x,y)$.
\item\label{atom:order} $x < y$.
\item\label{atom:eqid} $x \eqid y$.
\item\label{atom:eqdata} $x \eqdata y$.
\end{inparaenum}
Consider two paths $\rho$ and $\tau$ and two node positions $i^\rho$ and $j^\tau$, Each of (\ref{atom:E_a}) to (\ref{atom:eqdata}) corresponds to a property of prefixes $\rho'$ of $\rho$ (for $x$) with length $i^\rho$ and $\tau'$ of $\tau$ (for $y$) with length $j^\tau$, as follows:
\begin{enumerate}[$(1')$]
\item\label{mod:E_a} $|\tau'| = |\rho'| + 1$ and the last symbol from $\Sigma$ along $\tau'$ is $a$.
\item\label{mod:order} The length of $\rho'$ is below  the length of $\tau'$.
\item\label{mod:eqid} The last node along $\rho'$ is the same as the last node along $\tau'$.
\item\label{mod:eqdata} The last node's data value (excluding the id)  along $\rho'$ is equal 
to the last node data value (excluding the id) along $\tau'$.
\end{enumerate}
Each of these relations ($\ref{mod:E_a}'$) through ($\ref{mod:eqdata}'$) can be defined by \ra expressions. Note that in the case of (\ref{atom:E_a}), variables $x$ and $y$ should belong to the same sort.

Note that in \edn, there are no position variables. So above, when we translate an \mwl formula into an \edn one, we map a  position variable $x$ associated to the path variable $\pi$ to a path variable that is constrained to be a prefix of $\pi$. The property that path $x$ is a prefix of $\pi$ can be expressed by the atomic \edn formula $(x,\pi) \in A_{\textsc{prefix}}$, where $A_{\textsc{prefix}}$ is the \ra recognizing the pair $(\rho,\tau)$ of data paths where $\rho$ is a prefix of $\tau$. By using the {\ra}s defined by  ($\ref{mod:E_a}'$)-($\ref{mod:eqdata}'$), we can then recursively translate \mwl formulas into their corresponding \edn forms.
\end{proof}

On the other hand, \mwl remains fundamentally a first-order language, unable to count the parity of a set. More specifically, we show that on $\Sigma$-words, \mwl is subsumed by $\textsc{FO}[<, +, \{a\}_{a \in \Sigma}]$, a first-order logic where variables range over positions of words and the interpretations of $<, +, \{a\}_{a \in \Sigma}$ are standard. Since checking whether the number of positions with a particular symbol $a$ is even is not definable in $\textsc{FO}[<, +, \{a\}_{a \in \Sigma}]$ \cite{Libkin:04:text}, it is also not definable in \mwl.

\begin{proposition} 
Over words (i.e. chain graphs) {\mwl} is subsumed by  \\
$\textsc{FO}[<,+,\{a\}_{a\in \Sigma}]$.
\end{proposition}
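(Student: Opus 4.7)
The plan is to give an inductive syntactic translation from \mwl to \textsc{FO}$[<,+,\{a\}_{a\in\Sigma}]$, interpreting a chain graph $G$ as the natural word structure with positions $\{0,\dots,n\}$ and unary predicates $P_a$ (one for each $a\in\Sigma$) where $P_a(i)$ holds iff the edge outgoing from position $i$ has label $a$. The key observation is that in a chain every path $\rho$ is uniquely determined by its start and end positions, so I would represent each \mwl path variable $\pi$ by two position variables $(s_\pi,e_\pi)$ with the type constraint $s_\pi\le e_\pi$. Each position variable $\ell^\pi$, which ranges over offsets inside $\pi$, is represented by a position variable $p_\ell$ with the constraint $s_\pi\le p_\ell\le e_\pi$; its offset in $\pi$ is then the definable term $p_\ell-s_\pi$.

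Given this encoding, the translation of atoms and connectives is mechanical. An edge atom $E_a(\ell^\pi,m^\pi)$ becomes $p_m=p_\ell+1\wedge P_a(p_\ell)$; the identity atom $\ell^\pi\eqid n^\omega$ becomes $p_\ell=p_n$, since the id function on a chain is a bijection between nodes and chain positions; the data atom $\ell^\pi\eqdata n^\omega$ collapses similarly over words without node data. The distinguishing \mwl atom $\ell^\pi<n^\omega$ with $\pi\neq\omega$ compares offsets drawn from two different paths, and translates to $p_\ell+s_\omega<p_n+s_\pi$. The quantifiers $\exists\pi.\psi$ and $\exists\ell^\pi.\psi$ become FO quantifiers over the representing position variables, relativised to their type constraints ($s_\pi\le e_\pi$ and $s_\pi\le p_\ell\le e_\pi$, respectively); Boolean connectives commute with the translation. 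Correctness is then a routine structural induction showing that $G,\alpha\models\varphi$ iff the word structure satisfies the translated formula under the induced assignment.

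The main obstacle, and the only place where $+$ is essential in the target signature, is the cross-path order atom $\ell^\pi<n^\omega$ with $\pi\neq\omega$: since two offsets taken from distinct paths must be compared as integers, the comparison cannot be phrased using the chain positions $p_\ell,p_n$ alone but requires subtracting the respective path-start positions $s_\pi,s_\omega$, which is exactly what $+$ (as a ternary relation $x+y=z$) provides. Every other \mwl construct reduces to standard relativisation over a single word structure. Once the translation is established, the known non-definability of parity in \textsc{FO}$[<,+,\{a\}_{a\in\Sigma}]$ transfers directly to \mwl over chain graphs.
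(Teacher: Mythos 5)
Your proof is correct and takes essentially the same approach as the paper: each path variable becomes a pair of endpoint position variables, each position variable becomes a single FO variable, atoms are translated using $+$, and quantifiers are relativised to the type constraints. The only (immaterial) difference is that you store the absolute chain position of each position variable while the paper stores its offset from the path's start, which merely relocates the use of $+$ from the $E_a$ and $\eqid$ atoms to the cross-path order atom.
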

\begin{proof}
Given a chain data graph $G$, we use $w_G$ to denote the label of the chain. 

\begin{claim} \label{clm:mwltofo} For each \mwl sentence $\phi$, there is an $\textsc{FO}[<,+,\{a\}_{a\in \Sigma}]$ sentence $\psi$ such that for every chain graph $G$  with only ids as data, $G \models \phi$ if and only if $w_G \models \psi$.
\end{claim}

Each path variable $\pi$ in $\phi$ is translated to two position variables $\pi^{\beginm}$ and $\pi^{\textbf{end}}$ in $\psi$, representing the endpoints. Each position variable $\ell$ in $\phi$ is translated to a position variable $x_\ell$ in $\psi$. Additionally, we enforce the following conditions:
\begin{itemize}
\item $\pi^{\beginm} \leq \pi^{\textbf{end}}$, and the equality holds if and only if $\pi$ is empty.
\item $0 \leq x_\ell < \pi^{\textbf{end}}-\pi^{\beginm}$ if $\ell$ is of sort $\pi$, and $\pi$ is not empty.
\end{itemize}

Then, we can translate \mwl atomic formulas as follows:
\begin{itemize}
\item $E_a(\ell^\pi,m^\pi)$ is translated to $a(x_\ell+\pi^{\beginm})\wedge (x_m=x_\ell+1)$.
\item $\ell < m$ is translated to $x_\ell < x_m$.
\item $\ell^\pi \eqid m^\omega$ is translated to $\ell + \pi^{\beginm} = m + \omega^{\beginm}$.
\end{itemize}

$\textsc{FO}[<,+,\{a\}_{a\in \Sigma}]$ is a first-order logic that incorporates existential quantification, disjunction, and negation, so the induction step for the corresponding \mwl operators is straightforward.
This completes the proof of Claim \ref{clm:mwltofo}.
\end{proof}

Parity is a regular language, so it is in \rd and \gpc. Thus:

\begin{proposition}
\mwl does not subsume \gpc or {\rd}.
\end{proposition}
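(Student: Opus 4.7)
The plan is to exhibit a single Boolean query which is expressible in both \rd and \gpc, but not in \mwl, using the previous proposition as the main lever. The natural candidate is the parity query $Q_{\text{par}}$: ``Is the input chain graph of even length?'' (equivalently, over a unary alphabet $\Sigma=\{a\}$, does the symbol $a$ appear an even number of times along the chain?). This query is regular, so it can be defined by a single \rd expression of the form $\exists \pi.(x,\pi,y) \wedge \pi \in A_{\text{par}}$ for a two-state NFA $A_{\text{par}}$; and it is directly definable in \gpc as the restrictor-free pattern $(\xrightarrow{a}\xrightarrow{a})^{0..\infty}$. Thus exhibiting $Q_{\text{par}}$ in both \rd and \gpc is immediate and needs no real argument.

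The heart of the plan is then showing $Q_{\text{par}}$ is not expressible in \mwl, even when restricted to chain graphs over $\Sigma=\{a\}$ with only ids as data. For this I would directly invoke the previous proposition: any \mwl sentence $\phi$ has an equivalent $\textsc{FO}[<,+,\{a\}_{a\in\Sigma}]$ sentence $\psi$ over chain graphs with only ids as data, in the sense that $G \models \phi$ iff $w_G \models \psi$. If $\phi$ defined $Q_{\text{par}}$, then $\psi$ would be an $\textsc{FO}[<,+,\{a\}_{a\in\Sigma}]$ sentence whose truth on a word $w \in a^*$ depends only on the parity of $|w|$. But it is a classical result of finite model theory (see Libkin's textbook, as already cited in the excerpt) that parity is not definable in $\textsc{FO}[<,+,\{a\}_{a\in\Sigma}]$, even over words over a singleton alphabet. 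This contradiction finishes the argument.

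A minor technical point to address cleanly is that the previous proposition is stated for chain graphs with only ids as data, whereas \rd and \gpc queries in general talk about data graphs with both ids and properties. This is harmless: we take as our distinguishing class of instances chain graphs where all nodes carry pairwise distinct ids (say, consecutive integers) and a fixed constant data value, reducing the comparison to words. On this restricted class, \rd and \gpc still compute $Q_{\text{par}}$, and \mwl still falls within the reach of the previous proposition.

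I do not foresee a real obstacle here. The previous proposition does the heavy lifting of moving \mwl over chains into a classical logic of words; the only thing to be careful about is stating the reduction precisely enough that the non-definability of parity in $\textsc{FO}[<,+,\{a\}_{a\in\Sigma}]$ can be quoted verbatim, rather than re-proving an Ehrenfeucht--Fra\"iss\'e-style separation from scratch.
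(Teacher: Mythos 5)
Your proposal is correct and follows essentially the same route as the paper: the paper also takes the parity (even-length) query, notes it is regular and hence expressible in \rd and \gpc, and then uses the preceding proposition (translating \mwl over chain graphs into $\textsc{FO}[<,+,\{a\}_{a\in\Sigma}]$) together with the known non-definability of parity in that logic to conclude \mwl cannot express it. The extra care you take about restricting to chains with distinct ids and a constant data value is a reasonable way of making precise what the paper leaves implicit.
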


As a result, we have:

\begin{proposition}
\mwl is strictly subsumed by \edn.
\end{proposition}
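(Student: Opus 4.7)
The plan is to assemble this as a near-immediate corollary of the two preceding propositions, combining the containment direction with a concrete separating query. First, I would invoke the earlier proposition showing that every \mwl formula can be inductively translated into \edn, which already gives the inclusion \mwl $\subseteq$ \edn. The task therefore reduces to exhibiting a Boolean graph query that lies in \edn but not in \mwl.

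For the separation, I would reuse the parity witness supplied in the immediately preceding propositions. Concretely, consider the query $Q_{\even}$ (``is the length of the longest chain even?''), restricted to chain graphs with only ids as data. This query is regular, hence expressible in \rd, and since \rd $\subseteq$ \cdn $\subseteq$ \edn by the basic syntactic containments established in Section~\ref{sec:queries}, $Q_{\even}$ is expressible in \edn. On the other hand, by the chain of two propositions just proved, every \mwl sentence on chain graphs is equivalent to an $\textsc{FO}[<,+,\{a\}_{a\in \Sigma}]$ sentence on the corresponding word, and parity is not expressible in $\textsc{FO}[<,+,\{a\}_{a\in \Sigma}]$ (cited from \cite{Libkin:04:text}). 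So $Q_{\even}$ separates the two languages.

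There is essentially no new obstacle here: all the hard work has been absorbed into the two preceding propositions. The only subtlety worth flagging is ensuring the separating query is truly a Boolean graph query (not a query with free variables), so I would phrase $Q_{\even}$ as ``there exist nodes $s,t$ connected by a path of even length,'' or equivalently fix a chain graph class; either phrasing inherits expressibility in \rd (and hence in \edn) and inexpressibility in \mwl from the word-level reduction. Putting the pieces together yields \mwl $\subsetneq$ \edn, completing the proof.
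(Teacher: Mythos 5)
Your proposal is correct and follows essentially the same route as the paper: the containment comes from the earlier inductive translation of \mwl into \edn, and strictness comes from the parity query, which is regular (hence in \rd, hence in \edn) but inexpressible in \mwl via the reduction of \mwl over chain graphs to $\textsc{FO}[<,+,\{a\}_{a\in\Sigma}]$, where parity is known to be undefinable. The paper packages the separation as ``\mwl does not subsume \gpc or \rd'' and then concludes, but the underlying argument is identical to yours.
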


\section{Discussion} \label{sec:conc}

In this work, we have provided what we believe is a fairly complete picture of the landscape of expressiveness among query languages for data graphs. See Figure~\ref{fig:summary}. We showed that the landscape of prior query languages lacked a single maximally expressive language. \fotcd, a first-order language augmented with transitive closure, and \edn were maximally expressive query languages, but they are incomparable. To remedy this, we introduce \ednt, which is \edn extended with transitive closure. We show that it subsumes the prior languages.

Beyond exploring the landscape of query languages for data graphs, we examine the potential of \wl, which only allows comparisons along a single path. We show that \mwl, the extension of \wl allowing multi-path comparisons, is more powerful  than \wl.

Although query evaluation of \ednt is decidable, we acknowledge that it is, in an important sense, too expressive, as the worst-case complexity of evaluation is extremely high, with no elementary bound in the size of the data. Identifying fragments with better evaluation complexity, as well as developing algorithmic techniques for implementing them, is a key piece of ongoing work.

Since \rd and its extensions, including \ednt, manipulate data along the paths through register automata, we can easily extend these languages to include aggregates by replacing register automata with other automata  such as register automata with linear arithmetic \cite{reglin} and cost register automata \cite{Alur:13:LICS}. This is another component of future work.

\subsection{Other aspects of graph query languages}
A number of aspects of graph querying have \emph{not} been considered here.

\myparagraph{Property graphs vs our data model}
First, in our work we deal with a simplified data model.
A commonly considered data model for graphs with data in recent years is the \emph{property graph model} \cite{GPC:22:Francis}, and this data model is the basis for work on standards. In a property graph, both nodes and edges can have associated data. This data can  be left unrestricted, including varying the number of attributes as well as their datatypes.  Or it can be restricted by a schema. In our simplified data model, we require a fixed set of attributes for nodes, each with standard infinite datatypes, while we have a fixed set of edge relations, not associated with data values.
While we believe that our results can be extended to the property graph model (e.g. by standard encoding of edge labels within node labels), we do not formalize this.

\myparagraph{GPC} In this work, we have not considered several aspects of GPC, including  inverse edges and trails. 

\myparagraph{Inverses} As mentioned earlier, inverse edges are not part of the definition of many of the classical graph query languages (including R(D)PQ and Walk Logic) and a sensible comparison would need to either include inverse edges in these languages, or consider the fragment of GPC without inverse edges. We have chosen the latter, to simplify our presentation. The former would also not change the subsumption of the query languages, and take us to languages like 2RPQ \cite{CGLV00} and extension with data \cite{FJL22}.

\myparagraph{Trails} We now discuss trails, which restrict paths so that no edges are repeated. We do not consider trails in detail here. But we note that including trails does not change the subsumption of GPC in the other query languages that we considered in this paper. For example, we can assert that a path is a trail in Walk Logic as follows:
\begin{eqnarray*} 
\phi_{trail}(\pi) & := & \forall \ell^{\pi},m^{\pi}.\ \ell^{\pi} \neq m^{\pi} \wedge \ell^{\pi} \eqid  m^{\pi} \wedge \\
    &  & \bigvee_{a \in \Sigma} (E_a(\ell^{\pi},\ell^{\pi}+1) \wedge
E_a(m^{\pi},m^{\pi}+1))
\rightarrow
(\ell^{\pi}+1 \not\eqid  m^{\pi}+1).
\end{eqnarray*}
Similarly, we can express trails in
\edn by using similar register automata as in Example \ref{ex:edn_Hamiltonian}, but with two registers instead of just one.



\myparagraph{Joins}
We have discussed \gpc \emph{pattern queries} here, which return paths. It is possible to extend these queries to return more complex objects, particularly tuple of paths. For two \gpc queries $Q$ and $Q'$ with semantics $\semantics{Q}{G}$ and $\semantics{Q'}{G}$, respectively, we define their \emph{join} as follows:
\begin{itemize}
    \item $\semantics{Q, Q'}{G} = \{((\bar{\rho}, \bar{\rho}'), \mu \cup \mu') \mid (\bar{\rho}, \mu) \in \semantics{Q}{G}, (\bar{\rho}', \mu') \in \semantics{Q'}{G}$, and the union $ \mu\cup\mu' \text{ is defined}\}$.
\end{itemize}

Thus a join of two path patterns returns a pair of paths.
We can close the syntax of \gpc under join, allowing us to return $k$-tuples for any $k$.
In our paper we compare languages by considering the graph queries that return Booleans which they can express (Definition \ref{def:subsumes} in Section \ref{sec:preliminaries}).
Thus, since we do not consider queries that return tuples, and we cannot use joins to produce new paths or Booleans, the presence or absence of joins has no impact on our results.





\bibliographystyle{elsarticle-num} 
\bibliography{refs}

\appendix



\end{document}